\definecolor{bleu_sombre}{rgb}{0,0,0.6}  \definecolor{rouge_sombre}{rgb}{0.8,0,0}\definecolor{vert_sombre}{rgb}{0,0.6,0}
\theoremstyle{plain}
\newtheorem{theorem}{{Theorem}}[section] 
\newtheorem*{theorem*}{{Theorem}}
\newtheorem{proposition}[theorem]{Proposition}
\newtheorem*{proposition*}{Proposition}
\newtheorem{corollary}[theorem]{Corollary}
\newtheorem*{corollary*}{Corollary}
\newtheorem{lemma}[theorem]{Lemma}
\newtheorem*{lemma*}{Lemma}
\theoremstyle{definition}
\newtheorem{definition}[theorem]{Définition}
\newtheorem*{definition*}{Définition}
\theoremstyle{remark}
\newtheorem*{remarque*}{Remarque}
\newtheorem{remark}[theorem]{Remark}
\newtheorem*{exemple*}{Exemple}
\newtheorem*{exemples*}{Exemples}
\newcommand{\commm}[1]{{}}
\newcommand{\hou}{h\in]0,1]}
\renewcommand{\leq}{\leqslant}	\renewcommand{\geq}{\geqslant}
\renewcommand{\bar}[1]{\overline{#1}}
\renewcommand\over[2]{{\,\buildrel #1\over#2\,}}
\newcommand{\inv}{^{-1}}
\newcommand {\limt}[2]{\xrightarrow[#1 \to #2]{}}
\newcommand{\abs}[1]{\left\vert #1\right\vert}        
\newcommand{\nr}[1]{\left\Vert #1\right\Vert}         
\newcommand{\innp}[2]{\left< #1 , #2 \right>}         
\newcommand{\Dom}{\Dc}			
\newcommand{\Op}{{\mathop{\rm{Op}}}_h}
\newcommand{\Opw}{{\mathop{\rm{Op}}}_h^w}
\newcommand{\pppg}[1] {\left< #1 \right>} 	
\newcommand{\symbor}{C_b^\infty}		
\newcommand{\symb} {\Sc}		
\newcommand{\bigo}[2]{\mathop{O}\limits_{#1 \to #2}}
\newcommand{\littleo}[2]{\mathop{o}\limits_{#1 \to #2}}
\newcommand{\singl}[1]{\left\{ #1 \right\}}		
\newcommand{\Ii}[2]{\llbracket #1,#2 \rrbracket}	
\newcommand{\R}{\mathbb{R}}		\newcommand{\C}{\mathbb{C}}
\newcommand{\N}{\mathbb{N}}	
\newcommand{\Sph}{\mathbb{S}}
\newcommand{\1}[1]{\mathds 1 _{#1}}
\newcommand{\tqe}{\,:\,}					
\newcommand{\seq}[2]{\left({#1}_{#2}\right)_{#2 \in\N}} 
\renewcommand{\Re}{\mathop{\rm{Re}}\nolimits}        
\renewcommand{\Im}{\mathop{\rm{Im}}\nolimits}        
\DeclareMathOperator{\Hess}{Hess}			
\DeclareMathOperator{\Idm}{I} 
\DeclareMathOperator{\supp}{supp}                    
\newcommand{\eqv}{\Longleftrightarrow}               
\renewcommand{\a}{\alpha}\renewcommand{\b}{\beta}\newcommand{\g}{\gamma}\newcommand{\G}{\Gamma}\renewcommand{\d}{\delta}\newcommand{\D}{\Delta}\newcommand{\e}{\varepsilon}\newcommand{\z}{\zeta} \renewcommand{\th}{\theta}\renewcommand{\l}{\lambda}\renewcommand{\L}{\Lambda}\newcommand{\m}{\mu}\newcommand{\n}{\nu}\newcommand{\x}{\xi}\newcommand{\s}{\sigma}\renewcommand{\t}{\tau}\newcommand{\vf}{\phi}\newcommand{\h}{\chi}\newcommand{\p}{\psi}\renewcommand{\o}{\omega}\renewcommand{\O}{\Omega}
\newcommand{\Bc}{{\mathcal B}}\newcommand{\Dc}{{\mathcal D}}\newcommand{\Lc}{{\mathcal L}}\newcommand{\Nc}{{\mathcal N}}\newcommand{\Rc}{{\mathcal R}}\newcommand{\Sc}{{\mathcal S}}\newcommand{\Uc}{{\mathcal U}}\newcommand{\Vc}{{\mathcal V}}\newcommand{\Wc}{{\mathcal W}}\newcommand{\Zc}{{\mathcal Z}}
\newcommand{\loc}{{\rm{loc}}}
\newcommand{\hh}{H_h} \newcommand{\huh}{H_1^h}  
\newcommand{\udh}{U_2^h}   \newcommand{\tudh}{\tilde U_2^h}
\newcommand{\uh}{U_h} \newcommand{\uuh}{U_1^h}  \newcommand{\uoh}{U_0^h} 
\newcommand{\zoneS}{\mathcal Z}
\newcommand{\dpar}{\Dc _r}
\newcommand{\dparz}{\Dc _r^\z}
\newcommand{\ldgr}{{\Sph_r}}  
\newcommand{\Sphr}{{\Sph_r}}  
\newcommand{\nlpds}[1]{{L^{2,#1}(\R^n)}}
\newcommand{\nlpdsp}[2]{{L^{2,#1}(#2)}}
\newcommand{\negg}{{N_E \G}}
\newcommand{\sgamma}{{\s_\G}}
\newcommand{\snegg}{{\s_{\negg}}}
\newcommand{\hdh}{H^2_{h}}	\newcommand{\hth}{H^3_{h}}
\newcommand{\hhm}{H_{h_m}}	\newcommand{\huhm}{H^1_{h_m}}		\newcommand{\hthm}{H^3_{h_m}}
\newcommand{\Opwm}{{\mathop{\rm{Op}}}_{h_m}^w}
\begin{document}

\title{Uniform resolvent estimates for a non-dissipative Helmholtz equation}
\author{Julien Royer}

\subjclass[2010]{35J10, 47A10, 47A55, 47B44, 47G30, 81Q20}
\keywords{Non-selfadjoint operators, resolvent estimates, limiting absorption principle, Helmholtz equation, semiclassical measures.}

\maketitle

\begin{abstract}
We study the high frequency limit for a non-dissipative Helmholtz equation. We first prove the absence of eigenvalue on the upper half-plane and close to an energy which satisfies a weak damping assumption on trapped trajectories. Then we generalize to this setting the resolvent estimates of Robert-Tamura and prove the limiting absorption principle. We finally study the semiclassical measures of the solution when the source term concentrates on a bounded submanifold of $\R^n$.
\end{abstract}

\section{Introduction and statement of the main results}

The purpose of this paper is to study on $\R^n$, $n\geq 1$, the high frequency limit for the Helmholtz equation in a non-dissipative setting. After rescaling, this equation can be written 
\begin{equation} \label{helmholtz}
 (\hh - E)u_h = f_h, \quad \text{where} \quad  \hh = -h^2 \D +V_1(x) -ih V_2(x).
\end{equation}
We recall that this equation models for instance the propagation of the electromagnetic field of a laser in an inhomogeneus material medium. In this setting $V_1(x) - E$ is linked to the refraction index, $V_2(x)$ is the absorption index and $f_h$ is the source term. The parameter $h>0$ is proportional to the wavelength. In this paper we are interested in the asymptotic behavior of the solution $u_h$ when $h$ goes to 0. \\

The domain of $\hh$ is the Sobolev space $H^2(\R^n)$. All along this paper, we assume that $V_1$ and $V_2$ are bounded and go to 0 at infinity. This implies in particular that the essential spectrum of $\hh$ is $\R_+$ as for the free laplacian.
Our purpose is to study the resolvent $(\hh-z)\inv$, where $z \in  \C_+ = \{ \Im z >0\}$ is close to $E \in \R_+^*$. We prove some estimates for this resolvent uniform in the spectral parameter $z$, in order to obtain the limiting absorption principle and then existence and uniqueness of an outgoing solution $u_h$ for \eqref{helmholtz}. We also control the dependance in $h$ of these estimates. This gives an a priori estimate for the size of $u_h$ when $h$ goes to 0. Note that it is not clear that the resolvent is well-defined. More precisely the operator $\hh$ may have isolated eigenvalues in a strip of size $O(h)$ around the real axis. Therefore we first have to prove that it cannot happen where we study the resolvent.\\

%
%
%
%
%
%





Let $\d > \frac 12$. In the self-adjoint case ($V_2 = 0$) it is known that there exist a neighborhood $I$ of $E$, $h_0 >0$ and $c \geq 0$ such that 
\begin{equation} \label{estim-aa}
 \forall h \in ]0,h_0], \quad \sup_{\substack{\Re z \in I \\ \Im z \neq 0}} \nr { \pppg x ^{-\d} (\huh-z)\inv \pppg x ^{-\d}}_{\Lc(L^2(\R^n))} \leq \frac c h
\end{equation}
if and only if the energy $E$ is non-trapping. Here we denote by $\huh$ the self-adjoint Schrödinger operator $- h^2 \D + V_1(x)$, by $\Lc(L^2(\R^n))$ the space of bounded operators on $L^2(\R^n)$, and $\pppg x = \big( 1 + \abs x\big)^{\frac 12}$. D. Robert and H. Tamura \cite{robertt87} proved that the non-trapping condition is sufficient and X.P. Wang \cite{wang87} proved its necessity. In fact, if the non-trapping condition is not satisfied then the norm in \eqref{estim-aa} is at least of size $\abs {\ln h} / h$ (see \cite{bonybr10}).\\

For this result and all along this paper the potential $V_1$ is assumed to be of long range: it is smooth and there exist constants $\rho > 0$ and $c_\a \geq 0$ for $\a \in\N^n$ such that 
\[
 \forall \a \in\N^n, \forall x \in\R^n, \quad \abs {\partial^\a V_1(x)} \leq c_\a \pppg x^{-\rho - \abs \a}.
\]

Let $p : (x,\x) \mapsto \x^2 + V_1(x)$ be the semiclassical symbol of $H_1$ on $\R^{2n} \simeq T^* \R^n$ and $\vf^t$ the corresponding hamiltonian flow. For any $w \in \R^{2n}$, $t \mapsto \vf^t (w) = \big( \bar x (t,w) , \bar \x (t,w) \big)$ is the solution of the system
\begin{equation} \label{syst-ham} 
\left\{ \begin{array}{l} \partial _t \bar x (t,w) = 2 \bar \x(t,w) ,  \\
        \partial_ t \bar \x (t,w) = -  \nabla V_1 (\bar x(t,w)), \\
	\vf^0(w) = w.
        \end{array} \right.
\end{equation} 
We recall that $E>0$ is said to be non-trapping if 
\[
 \forall w \in p\inv(\singl E), \quad \abs{ \bar x (t,w)} \limt t {\pm \infty} + \infty.
\]
For $I \subset \R$, we introduce the following subsets of $p\inv(I)$:
\begin{eqnarray*}
\O_b^\pm(I) &=& \singl{ w \in p \inv(I) \tqe \sup_{t \geq 0} \abs{\bar x(\pm t,w)} < \infty    },\\
\O_b(I) &=& \O_b^-(I) \cap \O_b^+(I), \\
\O_\infty^\pm(I) &=& \singl{ w \in p\inv(I) \tqe  \abs{ \bar x(\pm t,w)} \limt t {+ \infty} +\infty}.
\end{eqnarray*}

In \cite{art_mourre} we considered the dissipative case $V_2 \geq 0$. We proved \eqref{estim-aa} for $\Im z >0$ under a damping assumption on trapped trajectories: 
\begin{equation} \label{am-diss}
 \forall w  \in \O_b(\singl E) , \exists T \in \R , \quad V_2\big(\bar x (T,w)\big) > 0.
\end{equation}
This generalizes the usual non-trapping condition since we recover the condition that there is no trapped trajectory when $V_2 \neq 0$. 

To prove this result we developped a dissipative version of Mourre's theory \cite{mourre81}, which we applied to the dissipative Schrödinger operator. For this we constructed an escape function as introduced by Ch. Gérard and A. Martinez \cite{gerardm88}, using the damping assumption to allow trapped trajectories. Note that L. Aloui and M. Khenissi also proved some resolvent estimates for a dissipative Schrödinger operator in \cite{alouik07}. They need a similar assumption but used a different approach (see below).\\

We know that assumption \eqref{am-diss} is both sufficient and necessary in the dissipative setting. Our purpose is now to relax the dissipative condition, allowing negative values for the absorption index $V_2$. This means that the Schrödinger operator $\hh$ is not only non-selfadjoint but also non-dissipative. 

If $V_2$ can take negative values, the damping assumption need reformulating. The condition we are going to use in this paper is the following:
\begin{equation} \label{hyp-amfaible}
\forall w \in \O_b(\singl E), \exists T > 0, \quad \int_0^T V_2( \bar x (t,w)) \, dt > 0.
\end{equation}
%
This condition is in particular satisfied when $V_2 \geq 0$ and \eqref{am-diss} holds. From this point of view, the results we are going to prove here are stronger than those given in the dissipative setting. \\

In this setting we cannot use the dissipative version of Mourre's commutators method. We use the same approach as in \cite{alouik07} instead. The idea is due to G. Lebeau \cite{lebeau96} and N. Burq \cite{burq02}. It is a contradiction argument. We consider a family of functions which denies the result, a semiclassical mesure associated to this family and finally we prove that this measure is both zero and non-zero. This idea was used in \cite{burq02} for a general self-adjoint and compactly supported perturbation of the laplacian. In \cite{jecko04}, Th. Jecko used the argument to give a new proof of \eqref{estim-aa} with a real-valued potential. The motivation was to give a proof which could be applied to matrix-valued operators. To allow long range potentials, the author introduced a bounded ``escape  function'' which we use here. The method was then used in \cite{castellajk08} for a potential with Coulomb singularities and in \cite{jecko05,fermanianr08,duyckaertsfj09} for a matrix-valued operator.\\

Let us now state the main results about the resolvent. An important difference with the dissipative case is that we do not know if the resolvent is well-defined, even on the upper half-plane $\C_+$. So in the results we state now, we first claim that $\hh$ has no eigenvalue in the considered region and then give an estimate for the resolvent. The first theorem is about spectral parameters whose imaginary parts are bigger than $\b h $ for some $\b > 0$:

\begin{theorem} \label{th-vp}
Suppose $V_2$ is smooth with bounded derivatives and $V_2 (x) \to 0$ when $\abs x \to +\infty$. Let $E > 0$ be an energy which satisfies the damping assumption \eqref{hyp-amfaible} and $\b > 0$. Then there exist a neighborhood $I$ of $E$, $h_0 > 0$ and $c \geq 0$ such that for $h \in ]0,h_0]$ and
\[
z \in \C_{I,h\b} = \singl{ z\in\C \tqe \Re z \in I, \Im z \geq h \b}
\]
the operator $(\hh-z)$ has a bounded inverse and
\[
\nr{(\hh-z)\inv}_{\Lc(L^2(\R^n))} \leq \frac c {h}.
\]
\end{theorem}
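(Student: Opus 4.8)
The plan is to argue by contradiction, using a semiclassical defect measure as in the works of Lebeau, Burq and Jecko cited above. First, note that $\Im\langle \hh u,u\rangle=-h\langle V_2u,u\rangle$, so the numerical range of $\hh$ is contained in $\{|\Im z|\le h\|V_2\|_\infty\}$; hence for $\beta>\|V_2\|_\infty$ the claim holds trivially with $I=\R$, $h_0=1$, $c=(\beta-\|V_2\|_\infty)^{-1}$, and we may assume $0<\beta\le\|V_2\|_\infty$. If the conclusion fails, applying the negation with $I=\,]E-\tfrac1m,E+\tfrac1m[$, $h_0=\tfrac1m$, $c=m$ produces $h_m\in\,]0,\tfrac1m]$ and $z_m$ with $\Re z_m\to E$, $\Im z_m\ge h_m\beta$, for which $\hhm-z_m$ is not invertible or has inverse of norm $>m/h_m$. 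Since $\sigma_{\mathrm{ess}}(\hhm)=\R_+$ and $\Im z_m>0$, in both cases there is $u_m$ with $\|u_m\|_{L^2}=1$ and $r_m:=(\hhm-z_m)u_m$ with $\|r_m\|_{L^2}=o(h_m)$ (an exact eigenfunction in the first case, an approximate one in the second). Taking real and imaginary parts of $\langle r_m,u_m\rangle$ shows $(u_m)$ is bounded in $\huhm$, $\Im z_m/h_m$ is bounded, and $\langle V_2u_m,u_m\rangle=-\Im z_m/h_m+o(1)$; after extraction, $\Im z_m/h_m\to\tau\in[\beta,\|V_2\|_\infty]$ and $(u_m)$ has a semiclassical measure $\mu\ge0$ on $T^*\R^n$, finite with $\mu(T^*\R^n)\le1$ and no mass at $\xi=\infty$.

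Next I would record the structure of $\mu$. Testing the equation against $\Opwm(a)$ with $\supp a\cap p\inv(\singl E)=\varnothing$ (elliptic estimate, using $(\hhm-E)u_m\to0$) gives $\supp\mu\subseteq p\inv(\singl E)$, and passing to the limit above gives $\int V_2\,d\mu=-\tau\le-\beta<0$, so $\mu\neq0$. Computing $\langle[\hhm,\Opwm(a)]u_m,u_m\rangle$ in two ways --- by the symbolic calculus (the skew term $-ihV_2$ contributes only at order $h^2$ to the commutator), and via the equation together with $\hhm^*=\hhm+2ihV_2$ --- and dividing by $h_m/i$ yields, for all real $a\in C_c^\infty(T^*\R^n)$,
\[
\int\{p,a\}\,d\mu=2\int(\tau+V_2)\,a\,d\mu ,
\]
equivalently $\mu(\vf^t(A))=\int_A\Theta_t\,d\mu$ for every Borel $A$ and $t\in\R$, where $\Theta_t(w)=\exp\!\big(-2\int_0^t(\tau+V_2(\bar x(s,w)))\,ds\big)$. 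Taking $A=T^*\R^n$ and differentiating at $t=0$ gives $0=-2\tau\,\mu(T^*\R^n)+2\tau$, hence $\mu(T^*\R^n)=1$: no mass escapes to spatial infinity and $\mu$ is a probability measure.

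It remains to prove $\mu=0$, which contradicts $\int V_2\,d\mu\neq0$; this dynamical step is where the difficulty lies, and where $\tau>0$ is used decisively. Because $\tau>0$, $\Theta_{-t}(w)\to+\infty$ whenever $V_2(\bar x(-s,w))$ has vanishing mean as $s\to+\infty$; so if $\mu(\O_\infty^-(\singl E))>0$, Fatou applied to $\mu(\vf^{-t}A)=\int_A\Theta_{-t}\,d\mu\le1$ on a positive-measure $A\subseteq\O_\infty^-(\singl E)$ is a contradiction. Since at positive energy (long-range $V_1$) one has $p\inv(\singl E)=\O_b^-(\singl E)\sqcup\O_\infty^-(\singl E)$, this forces $\supp\mu\subseteq\O_b^-(\singl E)$, and the $\alpha$-limit set of $\mu$-a.e.\ point is a compact invariant subset of $\O_b(\singl E)$. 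On such a set $K$ I would make the damping assumption \eqref{hyp-amfaible} quantitative: since $(w,T)\mapsto\int_0^T V_2(\bar x(s,w))\,ds$ is continuous, a compactness argument gives $T_0,\delta_0>0$ with $\sup_{0\le T\le T_0}\int_0^TV_2(\bar x(s,w))\,ds\ge\delta_0$ on $K$, and, since the corresponding stopping times are bounded below by $\delta_0/\|V_2\|_\infty$, an iteration upgrades this to $\int_0^tV_2(\bar x(s,w))\,ds\ge c_0t-C_K$ for all $w\in K$, $t\ge0$. Combining this linear gain, the decay of $V_2$ along the escaping part of a backward orbit, and $\tau>0$, one obtains $\Theta_{-t}(w)\to+\infty$ for $\mu$-a.e.\ $w$; Fatou and $\mu(\vf^{-t}(T^*\R^n))=1$ then give $\mu=0$.

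The main obstacle is precisely this last step: turning the purely qualitative hypothesis \eqref{hyp-amfaible} --- one non-uniform, non-averaged time per trapped trajectory --- into a uniform linear-in-time lower bound, and controlling, uniformly enough, the behaviour of $V_2$ along trajectories that are bounded in only one time direction (the ``unstable set'' of $\O_b(\singl E)$). The rest (ellipticity, the commutator identity, the transport equation, the reductions) is routine; the only rigidity available is the propagation formula together with the finiteness $\mu(T^*\R^n)=1$, and $\tau>0$ is exactly what makes escape to infinity a loss, never a gain, of mass.
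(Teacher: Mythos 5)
Your proposal follows the paper's strategy essentially step for step: a contradiction argument via a semiclassical defect measure $\mu$, the transport/propagation identity obtained from the commutator, vanishing of $\mu$ on $\O_\infty^-(\singl E)$ from the decay of $V_2$ at infinity, and vanishing on $\O_b^-(\singl E)$ from a quantitative linear-in-time lower bound on $\int_0^t V_2\circ\vf^{-s}\,ds$, which you correctly identify as the technical crux (the paper works this out carefully as Proposition~\ref{prop-gros-amort}, including the compactness step showing that backward orbits from a compact part of $\O_b^-(J)$ eventually stay in a neighborhood of $\O_b(J)$, which your $\alpha$-limit-set sketch only outlines).

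Two remarks on your variants. Your proof that $\mu\neq0$ via $\int V_2\,d\mu=-\tau\leq-\b<0$ is a slicker route than the paper's cutoff computation in Proposition~\ref{prop-vp-nonzero}, and it is sound: since $\mu$ is finite (bounded by $1$) and $V_2\to0$ at infinity, one has $\innp{V_2 u_m}{u_m}\to\int V_2\,d\mu$ even though $V_2$ is not compactly supported (split with a cutoff $\eta$, use Proposition~\ref{prop-burq-debut}(i) on $\eta V_2\in C_0^\infty$ and let the cutoff radius grow). Your further claim $\mu(T^*\R^n)=1$ should be phrased a bit more carefully: you cannot plug $A=T^*\R^n$ into a propagation identity derived for compactly supported symbols, but since $\mu$ is finite and $\exp\bigl(-2\int_0^t(\tau+V_2)\circ\vf^s\,ds\bigr)$ is bounded for fixed $t$, the identity extends by monotone/dominated convergence, so the conclusion holds; in any case the paper's weaker $\mu(\R^{2n})<\infty$ (Proposition~\ref{prop-masse-finie}) is all the Fatou argument needs.
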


Note that this result is obvious when $\hh$ is self-adjoint or at least dissipative. We can take $c = \b \inv$ in these cases.\\

In a second step we study the resolvent up to the real axis, generalizing \eqref{estim-aa}. Now $V_2$ has to be of short range, which means that there exist $\rho > 0$ and constants $c_\a \geq 0$  for $\a \in \N$ such that
\[
\forall \a \in \N^n, \forall x \in\R^n, \quad  \abs{\partial^\a V_2 (x)} \leq c_\a \pppg x^{-1-\rho-\abs \a}.
\]

\begin{theorem} \label{th-estim}
Assume that $V_2$ is of short range. Let $E > 0$ satisfy the damping assumption \eqref{hyp-amfaible} and $\d > \frac 12$. Then there exist a neighborhood $I$ of $E$, $h_0 > 0$ and $c \geq 0$ such that for $h\in ]0,h_0]$ and
\[
z \in \C_{I,+} = \singl{ z\in\C \tqe \Re z \in I, \Im z >0}
\]
the operator $(\hh-z)$ has a bounded inverse and
\[
\nr{\pppg x ^{-\d} (\hh-z)\inv \pppg x ^{-\d}}_{\Lc(L^2(\R^n))} \leq \frac c h.
\]
\end{theorem}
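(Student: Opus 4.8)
The plan is to run the contradiction-plus-semiclassical-measure argument in the style of Burq, Jecko and Aloui--Khenissi, now up to the real axis and with the weighted spaces. Suppose the estimate fails: then there are sequences $h_m \searrow 0$, $z_m \in \C_{I,+}$ with $\Re z_m \to E$ and $\Im z_m \geq 0$, and $u_m \in L^2(\R^n)$ with $\nr{\pppg x^{-\d} u_m} = 1$ while $\nr{\pppg x^{\d}(\hh[h_m] - z_m) u_m} = o(h_m)$. Since we must also rule out eigenvalues, we include the case $\Im z_m = 0$ in this family; a genuine eigenvalue would give $f_m = 0$, which is a fortiori $o(h_m)$, so the argument below applies and yields a contradiction, proving both that $(\hh - z)$ is invertible and that the stated bound holds. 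Write $f_m = (\hh[h_m]-z_m)u_m$, so $\nr{\pppg x^\d f_m}_{L^2} = o(h_m)$.

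The next step is to localize in energy and in phase space. Using the standard elliptic/positive-commutator estimates away from $p^{-1}(\{E\})$ (exactly as in \cite{jecko04,art_mourre}), one shows that $\pppg x^{-\d} u_m$ concentrates, up to $o(1)$, on the energy shell $p^{-1}(\{E\})$: for any $\chi \in C_c^\infty(\R)$ equal to $1$ near $E$ and supported near $E$, $\nr{\pppg x^{-\d}(1-\chi)(\hh[h_m])u_m} \to 0$. Then one extracts a semiclassical measure $\m$ for the family $v_m := \pppg x^{-\d} u_m$ (after passing to a subsequence), a positive Radon measure on $T^*\R^n$. The equation and the $o(h_m)$ control on $f_m$ give, in the limit, that $\m$ is supported in $p^{-1}(\{E\})$ and is invariant under the flow $\vf^t$ in the following damped sense: testing the equation against pseudodifferential observables and using that $V_2$ is of short range (so that $h^{-1} V_2$, after conjugation by $\pppg x^\d$, contributes a genuine lower-order transport term), one obtains the transport equation $\big(\frac{d}{dt}\big)_{\vf^t}\m = -2 V_2(x)\,\m$ along the flow; equivalently, the pushed-forward measure decays like $\exp(-2\int_0^t V_2(\bar x(s,w))\,ds)$. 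The short-range hypothesis on $V_2$ is precisely what is needed to make the absorption term subprincipal in the weighted calculus and to control the flux of $\m$ through large spheres (outgoing condition), so that no mass escapes to infinity in the incoming direction; this is where Theorem~\ref{th-vp} style a priori bounds and the long-range escape function of \cite{jecko04} enter to get compactness and the correct sign of the boundary terms at infinity.

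Now I derive a contradiction. On one hand, the normalization $\nr{v_m} = 1$ together with the concentration on the energy shell and a non-escape argument (the escape function gives that no mass leaks to spatial infinity along non-trapped directions, because such trajectories leave every compact set and the outgoing radiation condition forces the corresponding part of $\m$ to vanish) shows that $\m \neq 0$ and moreover $\m$ is carried by the trapped set $\O_b(\{E\})$. On the other hand, the damped transport equation implies that for $\m$-a.e.\ $w$, the quantity $t \mapsto \exp(-2\int_0^t V_2(\bar x(s,w))\,ds)$ times the local mass is constant along the flow; but the damping assumption \eqref{hyp-amfaible} says that for every $w \in \O_b(\{E\})$ there is $T(w) > 0$ with $\int_0^{T(w)} V_2(\bar x(s,w))\,ds > 0$, which forces the mass to strictly decrease — this is incompatible with invariance unless $\m = 0$ on $\O_b(\{E\})$. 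Combining with the previous step, $\m \equiv 0$. Finally, one upgrades $\m \equiv 0$ to $\nr{v_m} \to 0$: away from the energy shell this is the elliptic estimate, near the energy shell the vanishing of the semiclassical measure plus a microlocal partition gives $\nr{\chi(\hh[h_m])u_m}_{\pppg x^{-\d}} \to 0$, contradicting $\nr{v_m} = 1$.

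The main obstacle is the analysis at spatial infinity in the non-dissipative, long-range setting: since $V_2$ can be negative, one cannot simply dominate the resolvent by the self-adjoint one, and one must show that the weighted solution does not develop mass at infinity and that the boundary (flux) terms arising when integrating the transport equation have the right sign to be discarded. This requires combining the outgoing condition (limiting absorption), the bounded escape function of Jecko to handle the long-range tail of $V_1$, and the short-range decay of $V_2$ to keep $h^{-1}\pppg x^\d V_2 \pppg x^\d$ under control; getting a clean propagation statement for $\m$ up to and including the region where $\abs x \to \infty$ is the delicate point, and is where most of the work in the full proof will go.
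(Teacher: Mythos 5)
Your overall architecture (contradiction, semiclassical measure, support on the energy shell, damped transport along $\vf^t$, $\m\neq 0$ from an escape function, $\m=0$ on the trapped set from the damping hypothesis) does match the paper. But there is a genuine gap at the key step you try to hide in the phrase ``the escape function gives that no mass leaks to spatial infinity along non-trapped directions, because \dots the outgoing radiation condition forces the corresponding part of $\m$ to vanish.''

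First, the outgoing radiation condition is not available at this stage: the limiting absorption principle (Theorem~\ref{th-lap}) is proved \emph{after} and \emph{from} Theorem~\ref{th-estim}, so invoking it here is circular. Second, and more fundamentally, the bounded escape function of Jecko can no longer be made to do double duty in the non-dissipative case. In the self-adjoint (or dissipative) setting, the escape-function identity \eqref{conv-infinity} simultaneously yields $\m\neq 0$ near the origin and $\m=0$ at infinity. Here, the proof of Proposition~\ref{prop-mu-neq0} branches: either \eqref{thetaR} holds, in which case $\m\neq 0$ is immediate but one does \emph{not} obtain \eqref{conv-infinity}; or \eqref{thetaR} fails, in which case the absorption terms can be controlled. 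The paper states this explicitly just after Proposition~\ref{prop-mzero-incoming}: ``we cannot use it to show that $\m$ is zero at infinity as is done in the self-adjoint case.'' Your proposal silently assumes the escape function still kills the measure on $\O_\infty^-(\{E\})$, which is exactly what can fail when $V_2$ changes sign.

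What the paper actually does for that step is import a second, independent tool: the Isozaki--Kitada/Robert--Tamura estimate in the incoming region (Theorem~\ref{th-incoming}). Because the resolvent of $\hh$ itself is not yet controlled, the argument is run on an auxiliary \emph{dissipative} operator $\hth=\hdh+ihW_3$ whose resolvent obeys \eqref{estim-hth}; the bound is then transferred to $\hh$ through the compactly supported perturbation $W_4$, giving $\nr{\Opwm(\o_-)v_m}_{L^{2,\d}}\to 0$ in the incoming region and hence $\m=0$ on $\O_\infty^-(\{E\})$ by propagation. This parametrix-plus-perturbation step is a substantial piece of the proof and is entirely absent from your sketch. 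Finally, note that since $\nr{v_m}_{L^{2,-\d}}=1$ (not $\nr{v_m}_{L^2}=1$), the total mass of $\m$ need not be finite, so the propagation argument on $\O_b^-(\{E\})$ must be localized by the boundedness of $\bigcup_{t\geq 0}\vf^{-t}(\supp q\cap\O_b^-(\{E\}))$ rather than by a global mass bound; this is a minor point compared to the missing incoming-region estimate, but it is another place where a naive transfer of the Jecko scheme would break.
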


The proof of this theorem is inspired from \cite{jecko04}. In particular we use a bounded escape function at infinity to prove that the semiclassical measure we study is non-zero. But contrary to the selfadjoint case (this could also be done in the dissipative case) we cannot use this escape function to prove that this measure is supported in a compact subset of $\R^{2n}$. We use instead the estimate for the outgoing solution to the Helmholtz equation in the incoming region proved in \cite{robertt89}.\\

As in the dissipative case we have only given a result on the upper half-plane. Here we have no assumption about the sign of $V_2$, but there still is a damping condition in \eqref{hyp-amfaible}. The difference with the dissipative context is that we recover a symmetric situation under the stronger non-trapping condition, so that the result we have proved for $\Im  z >0$ now holds for $\Im z < 0$:

\begin{corollary} \label{cor-estim}
Assume that $V_2$ is of short range. Let $E > 0$ be a non-trapping energy and $\d > \frac 12$. Then there exist a neighborhood $I$ of $E$, $h_0 >0$ and $c \geq 0$ such that for $h \in ]0,h_0]$, $\Re z \in I$ and $\Im z \neq 0$ the operator $(\hh-z)$ has a bounded inverse and 
\[
\nr{\pppg x ^{-\d} (\hh-z)\inv \pppg x ^{-\d}}_{\Lc(L^2(\R^n))} \leq \frac c h.
\]
\end{corollary}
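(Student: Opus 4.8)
The plan is to reduce Corollary~\ref{cor-estim} to Theorem~\ref{th-estim}, applied separately on the two half-planes $\Im z > 0$ and $\Im z < 0$, and then to intersect the two neighborhoods of $E$ thus obtained.

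First I would record that a non-trapping energy satisfies \eqref{hyp-amfaible} vacuously: if $E$ is non-trapping then $\abs{\bar x(\pm t,w)} \to +\infty$ for every $w \in p\inv(\singl E)$, hence $\O_b^\pm(\singl E) = \vide$ and $\O_b(\singl E) = \vide$, so \eqref{hyp-amfaible} holds at $E$ no matter what $V_2$ is, and in particular it still holds when $V_2$ is replaced by $-V_2$. Since $V_2$ is of short range so is $-V_2$, so Theorem~\ref{th-estim} applies directly to $\hh$ at the energy $E$ and provides a neighborhood $I_+$ of $E$, $h_0^+ > 0$ and $c_+ \geq 0$ such that $(\hh - z)$ is boundedly invertible and $\nr{\pppg x^{-\d}(\hh - z)\inv \pppg x^{-\d}} \leq c_+/h$ for $h \in ]0,h_0^+]$, $\Re z \in I_+$, $\Im z > 0$. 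This already settles the upper half-plane.

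For the lower half-plane I would use the adjoint. Since $V_1, V_2$ are real and $H^2(\R^n)$ is the domain of both $\hh$ and $\hh^*$, one has $\hh^* = -h^2\D + V_1 + ih V_2$, which is exactly the operator of \eqref{helmholtz} with $V_2$ replaced by $-V_2$. Applying Theorem~\ref{th-estim} to it (legitimate by the previous paragraph) yields a neighborhood $I_-$ of $E$, $h_0^- > 0$ and $c_- \geq 0$ such that, for $h \in ]0,h_0^-]$, $\Re w \in I_-$, $\Im w > 0$, the operator $(\hh^* - w)$ is boundedly invertible with $\nr{\pppg x^{-\d}(\hh^* - w)\inv \pppg x^{-\d}} \leq c_-/h$. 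Now if $\Re z \in I_-$ and $\Im z < 0$, set $w = \bar z$; then $(\hh - z)^* = \hh^* - \bar z$ is boundedly invertible, hence so is $(\hh - z)$, with $\big((\hh - z)\inv\big)^* = (\hh^* - \bar z)\inv$, and since $\pppg x^{-\d}$ is self-adjoint and taking adjoints preserves the operator norm,
\[
\nr{\pppg x^{-\d}(\hh - z)\inv \pppg x^{-\d}} = \nr{\pppg x^{-\d}(\hh^* - \bar z)\inv \pppg x^{-\d}} \leq \frac{c_-}{h}.
\]

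It then suffices to take $I = I_+ \cap I_-$, $h_0 = \min(h_0^+, h_0^-)$ and $c = \max(c_+, c_-)$, and the two estimates above cover all $z$ with $\Re z \in I$, $\Im z \neq 0$. I do not expect a real obstacle here: the entire analytic content sits in Theorem~\ref{th-estim}, and the only point worth stressing is that the non-trapping hypothesis — unlike a mere damping assumption on $V_2$ — is insensitive to the sign of $V_2$, which is precisely what restores the symmetry between $\C_+$ and $\C_-$ that is lost in the dissipative, and a fortiori in the non-dissipative, setting. (Consistently with the remark after Theorem~\ref{th-vp}, the constant $c$ obtained this way is in general larger than in the self-adjoint or dissipative cases.)
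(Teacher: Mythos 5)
Your argument is correct and is exactly the approach the paper intends: reduce the lower half-plane estimate to Theorem~\ref{th-estim} for $\hh^* = -h^2\D + V_1 + ihV_2$ (which is $\hh$ with $V_2$ replaced by $-V_2$), using that the non-trapping condition makes $\O_b(\singl E) = \vide$ so that \eqref{hyp-amfaible} holds vacuously for both signs of $V_2$. The paper itself does not spell out a proof for this corollary but gives the identical one-line argument (``apply the theorem to $\hh$ and its adjoint $\hh^*$'') for the analogous corollary following Theorem~\ref{th-vp}; your write-up merely fills in the routine adjoint/conjugation bookkeeping.
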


Once we have the uniform resolvent estimates, we can prove the limiting absorption principle. This question has been studied for a long range self-adjoint Schrödinger operator in \cite{ikebes72} and \cite{saito}. It is proved that the equation $(H-E)u = f$ has a unique outgoing solution $u \in H^2_{\loc}(\R^n) \cap L^{2,-\d}(\R^n)$ when $f \in L^{2,\d}(\R^n)$, and this solution is given as the limit in $L^{2,-\d}(\R^n)$ of $(H-z)\inv f$ when $z \in \C_+$ goes to $E$ (we do not work in the semiclassical limit here, so we only consider the case $h=1$). An outgoing solution is a solution which satisfies a radiation condition of Sommerfeld type at infinity (see Definition \ref{def-outgoing}). The strategy is to prove first uniqueness of an outgoing solution. This is used to prove resolvent estimates, and then we can get the limiting absorption principle. The result has been extended to the non-selfadjoint case in \cite{saito74}. Y. Saito proves that when the potential has a short range imaginary part, the result can be extended where we have uniqueness of the ougoing solution, which no longer holds for any $E >0$. Here we consider the case where the dissipative part has a long range positive part and a short range negative part. Contrary to these papers, we use the fact that we already have uniform resolvent estimates to obtain uniqueness of the outgoing solution and then the limiting absorption principle. With Theorem \ref{th-estim} we obtain the following result:

\begin{theorem} \label{th-lap}
Let $V_2$, $E$ and $\d$ be as in Theorem \ref{th-estim}. Then there exist a neighborhood $I$ of $E$ and $h_0 >0$ such that for any $h \in ]0,h_0]$, $\l \in I$ and $f \in L^{2,\d}(\R^n)$ the limit
\[
 \lim_{\substack{z \to \l \\ z \in \C_{I,+}}} (\hh-z)\inv f
\]
exists in $L^{2,-\d}(\R^n)$ and defines the unique outgoing solution for the equation $(\hh -\l)u = f$ (see Definition \ref{def-outgoing}).
\end{theorem}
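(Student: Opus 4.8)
\emph{Plan of proof.} Fix $h \in \,]0,h_0]$, with $h_0$ and the neighbourhood $I$ of $E$ provided by Theorem \ref{th-estim}, and fix $\l \in I$ and $f \in L^{2,\d}(\R^n)$. For $z \in \C_{I,+}$ set $u_z = (\hh-z)\inv f \in H^2(\R^n)$, which is well defined by Theorem \ref{th-estim}. That theorem gives $\nr{\pppg x^{-\d} u_z}_{L^2} \leq \frac ch \nr{f}_{L^{2,\d}(\R^n)}$, so $(u_z)_{z \in \C_{I,+}}$ is bounded in $L^{2,-\d}(\R^n)$; rewriting the equation as $-h^2\D u_z = f - (V_1 - ihV_2 - z)u_z$ and using the boundedness of $V_1,V_2$ shows that $(u_z)$ is also bounded in $H^2_{\loc}(\R^n)$. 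Consequently, from any sequence $z_k \to \l$ in $\C_{I,+}$ one can extract a subsequence along which $u_{z_k}$ converges weakly in $L^{2,-\d}(\R^n)$ and strongly in $L^2_{\loc}(\R^n)$ to a limit $u \in L^{2,-\d}(\R^n) \cap H^2_{\loc}(\R^n)$ solving $(\hh-\l)u = f$; and since the microlocal/radiation-condition bounds expressing that $u_z$ is outgoing for $\Im z > 0$ are uniform as $\Im z \to 0$ --- this is part of the incoming-region estimate of \cite{robertt89} used in the proof of Theorem \ref{th-estim} --- the limit $u$ is an outgoing solution of $(\hh-\l)u = f$ in the sense of Definition \ref{def-outgoing}.

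\emph{Uniqueness.} The key step is to show that $(\hh-\l)u = f$ has at most one outgoing solution, equivalently that the only outgoing solution $w \in L^{2,-\d}(\R^n) \cap H^2_{\loc}(\R^n)$ of $(\hh-\l)w = 0$ is $w=0$. Since $V_2$ is of short range and $V_1$ of long range and $w$ solves a homogeneous equation while being outgoing, a Rellich/Agmon--H\"ormander type argument (pairing the equation with $\bar w$ over large balls and exploiting the radiation condition together with the incoming-region estimate of \cite{robertt89}, which forces the incoming component of $w$ to vanish) shows that $w$ decays fast at infinity; in particular $\pppg x^{\d} w \in L^2(\R^n)$, so $w \in H^2(\R^n)$ and $(\hh-\l)w=0$ holds in $\Lc$-sense. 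For $\e > 0$ we then have $(\hh - \l - i\e)w = -i\e w$, hence $w = -i\e(\hh - \l - i\e)\inv w$, and Theorem \ref{th-estim} applied to the right-hand side gives
\[
 \nr{\pppg x^{-\d} w}_{L^2} = \e \nr{\pppg x^{-\d}(\hh-\l-i\e)\inv \pppg x^{-\d}\big(\pppg x^{\d} w\big)}_{L^2} \leq \frac{c\,\e}{h}\,\nr{\pppg x^{\d} w}_{L^2}.
\]
Letting $\e \to 0$ yields $w = 0$.

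\emph{Convergence.} Combining the two previous steps, every subsequential weak-$L^{2,-\d}$ limit of $(u_z)$ as $z \to \l$ coincides with the unique outgoing solution $u$ of $(\hh-\l)u = f$; hence $u_z \rightharpoonup u$ in $L^{2,-\d}(\R^n)$ along the full net $z \to \l$, $z \in \C_{I,+}$. To upgrade this to strong convergence, fix $\chi \in C_c^\infty(\R^n)$ equal to $1$ near the origin and, for $R>0$, let $\chi_R = \chi(\cdot/R)$. For each fixed $R$ one has $\nr{\pppg x^{-\d}\chi_R(u_z - u)}_{L^2} \to 0$ as $z \to \l$, by strong $L^2_{\loc}$ convergence and boundedness of $\pppg x^{-\d}$ on $\supp \chi_R$. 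On the other hand $\nr{\pppg x^{-\d}(1-\chi_R)u}_{L^2} \to 0$ as $R \to \infty$ since $u \in L^{2,-\d}(\R^n)$, and $\sup_{z \in \C_{I,+}}\nr{\pppg x^{-\d}(1-\chi_R)u_z}_{L^2} \to 0$ as $R \to \infty$, the latter uniform tail bound again being a consequence of the incoming-region estimate of \cite{robertt89} for the outgoing solution of \eqref{helmholtz} with source in $L^{2,\d}(\R^n)$. These three facts together give $u_z \to u$ in $L^{2,-\d}(\R^n)$, which is the claim.

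\emph{Main difficulty.} The only genuinely new point is the uniqueness step: in the dissipative case ($V_2 \geq 0$) or the self-adjoint case one can prove uniqueness of the outgoing solution directly from a flux/energy identity, but here the sign of $V_2$ is not controlled and this argument breaks down. What replaces it is the uniform-in-$z$ resolvent estimate of Theorem \ref{th-estim}: once the decay of a hypothetical non-zero outgoing solution of the homogeneous equation has been improved to $L^{2,\d}$, that estimate, evaluated at $z = \l + i\e$ and letting $\e \to 0$, forces the solution to vanish. The remaining technical care concerns the uniform control of the tails of $u_z$ needed to pass from weak to strong convergence, which is exactly the ingredient already developed for Theorem \ref{th-estim}.
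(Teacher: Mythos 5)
Your existence and strong-convergence steps follow the route of Proposition \ref{prop-lap} in the paper, and the closing observation that Theorem \ref{th-estim} at $z=\l+i\e$ forces any outgoing $w$ with $(\hh-\l)w=0$ and $\pppg x^\d w\in L^2(\R^n)$ to vanish is correct. The gap lies exactly at the step you yourself flag as the crux: there is no valid Rellich/Agmon--H\"ormander argument that upgrades a merely outgoing $w\in L^{2,-\d}(\R^n)$ solving $(\hh-\l)w=0$ to $\pppg x^{\d}w\in L^2(\R^n)$ when $V_2$ may take negative values. In the flux identity (the analogue of \eqref{eq-Sr} with $f=0$, $\z_2=0$) the volume term $2\z_1\innp{V_2 w}{w}_{B_r}$ has no fixed sign, so the usual chain ``radiation condition $\Rightarrow$ boundary flux vanishes $\Rightarrow$ extra decay'' breaks; this is precisely the obstruction that makes the non-dissipative case nontrivial. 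Invoking the incoming-region estimate does not repair this: Theorem \ref{th-incoming} is a statement about the operator $\pppg x^\b\Op(\o_-)(\hh-z)\inv\Op(\o)\pppg x^\b$, and one cannot write $w = -i\e(\hh-\l-i\e)\inv w$ in the weighted framework before already knowing $w\in L^{2,\d}(\R^n)$, which makes the argument circular.

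The paper's actual proof avoids ever establishing this decay improvement for $\hh$ directly. It introduces the family $\hh^\th = \hh - h\th\pppg x^{-1-\rho}$ for $\th\in[0,\th_0]$, with $\th_0$ chosen so that $\hh^{\th_0}$ is dissipative; uniqueness of the outgoing solution then holds at $\th=\th_0$ by Proposition \ref{prop-lap-diss}. Theorem \ref{th-estim} together with the perturbation argument already used for \eqref{estim-hth} gives a resolvent bound $\nr{\pppg x^{-\d}(\hh^\th-z)\inv\pppg x^{-\d}}\leq C/h$ uniform in $\th\in[0,\th_0]$. One then runs a continuation in $\th$ with fixed step: if uniqueness (hence, by Proposition \ref{prop-lap}, the full limiting absorption principle with estimate \eqref{estim-lap}) holds at $\th_1$, and $u\in H^2_\loc(\R^n)\cap L^{2,-\d}(\R^n)$ is outgoing with $(\hh^\th-\l)u=0$ for $\th$ slightly below $\th_1$, then $(\hh^{\th_1}-\l)u = h(\th_1-\th)\pppg x^{-1-\rho}u$ \emph{already} lies in $L^{2,\d}(\R^n)$ because $2\d<1+\rho$, and the estimate at $\th_1$ gives $\nr u_{L^{2,-\d}}\leq\tfrac12\nr u_{L^{2,-\d}}$, hence $u=0$. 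The decay improvement you need is thus never proved for the outgoing solution itself; it is built in by regarding $u$ as the outgoing solution of a nearby operator with a right-hand side that is automatically short-range. You should replace your Rellich step by such a continuation argument, or supply an actual proof of the decay claim, which in this non-dissipative setting appears to require ideas of comparable depth.
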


Note that when the dissipative part $V_2$ is non-negative we can prove directly uniqueness of the outgoing solution and hence we can proceed as in the self-adjoint case (see Proposition \ref{prop-lap-diss}).\\

Now that the outgoing solution 
\[
 u_h = (\hh -(E+i0))\inv f_h
\]
is well-defined in $L^{2,-\d}(\R^n)$ for any $h\in ]0,h_0]$ ($h_0 >0$ being given by Theorem \ref{th-lap}) and $f_h \in L^{2,\d}(\R^n)$, we can study its semiclassical measures when the source term $f_h$ concentrates on a submanifold of $\R^n$. We recall that a measure $\m$ on the phase space $T^*\R^n \simeq \R^{2n}$ is said to be a semiclassical measure for the family $(u_h)_{ h \in ]0,h_0]}$ if there exists a sequence $\seq h m \in ]0,h_0]^\N$ such that
\[
h_m \limt m \infty 0 \quad \text{and} \quad  \forall q \in C_0^\infty(\R^{2n}), \quad \innp{\Opwm u_{h_m}}{u_{h_m}} \limt m \infty \int_{\R^{2n}} q \, d\m.
\]
Here $\Opw(q)$ denotes the Weyl $h$-quantization of the symbol $q$:
\[
 \Opw(q) u (x) = \frac 1{(2\pi h)^n} \int_{\R^n} \int_{\R^n} e^{\frac ih \innp{x-y} \x} q \Big( \frac {x+y} 2 , \x \Big) u(y) \, dy \, d\x.
\]
We will also use the standard quantization:
\[
 \Op(q) u (x) = \frac 1{(2\pi h)^n} \int_{\R^n} \int_{\R^n} e^{\frac ih \innp{x-y} \x} q (x,\x) u(y) \, dy \, d\x.
\]

The first paper in this direction is \cite {benamou-al-02}, where $f_h$ is assumed to concentrate on $\G = \singl 0$ as $h$ goes to 0 (see also \cite{castella05}). This was extended in \cite{castellapr02,wangz06} to the case where $f_h$ concentrates on an affine subspace $\G$ in $\R^n$. J.-F. Bony gives in \cite{bony09} another proof for the case $\G = \singl 0$ using different assumptions and above all a different approach. We used this point of view in \cite{art_mesure} to deal with the case where the absorption index is non-constant and $\G$ is any bounded submanifold of dimension $d \in \Ii 0 {n-1}$ in $\R^n$. Trapped trajectories of energy $E$ for the classical flow were allowed under assumption \eqref{am-diss}. Our purpose here is to check that this result still holds --under hypothesis \eqref{hyp-amfaible}-- when $V_2$ takes negative values. The proof is actually approximately the same. We will have to prove the estimates in the incoming region in this non-dissipative setting (see Theorem \ref{th-incoming}) and be careful with the fact that the semi-group generated by $\hh$ is no longer a contraction semi-group.\\

Throughout this paper we denote by $C_0^\infty(\R^n)$ the set of smooth and compactly supported functions on $\R^n$, by $\symbor(\R^n)$ the set of smooth functions whose derivatives are bounded, and $\Sc(\R^n)$ is the Schwartz space. For $\d \in \R$, we denote by $\Sc\big(\pppg x ^\d \big)$ the set of symbols $a \in C^\infty(\R^{2n})$ such that
\[
 \forall \a,\b\in\N^n, \exists c_{\a,\b} \geq 0, \forall (x,\x) \in \R^{2n} , \quad \abs{\partial_x^\a \partial_\x^\b a(x,\x)} \leq c _{\a,\b} \pppg x^{\d}.
\]
We also denote by $\symb_{\d}(\R^{2n})$ the set of symbols $a \in C^\infty(\R^{2n})$ such that
\[
 \forall \a,\b\in\N^n, \exists c_{\a,\b} \geq 0, \forall (x,\x) \in \R^{2n} , \quad \abs{\partial_x^\a \partial_\x^\b a(x,\x)} \leq c _{\a,\b} \pppg x^{\d - \abs \a}.
\]
For $R > 0$ we denote by $B_R$ the open ball of radius $R$ in $\R^n$, by $B_R^c$ its complement in $\R^n$, and by $\Sph_R$ the sphere of radius $R$, endowed with the Lebesgue measure. We also set $B_x(R) = \singl{(x,\x)  \tqe \abs x <R} \subset \R^{2n}$.\\

In Section 2 we recall some properties the flow $\vf^t$ defined by \eqref{syst-ham} and discuss assumption \eqref{hyp-amfaible}. Section 3 is devoted to the proof of Theorem \ref{th-vp}. Before giving a proof of Theorem \ref{th-estim} in Section 5, we state a non-dissipative version for the estimate of the outgoing solution for the Helmholtz equation in the incoming region (see Section 4). With the uniform resolvent estimates, we prove the limiting absorption principle in Section 6. We finally show in Section 7 that the result known for the semiclassical measure for the outgoing solution of \eqref{helmholtz} when the source term concentrates on a bounded submanifold of $\R^n$ remains valid in our non-dissipative setting.
\\

\subsection*{Acknowledgement} This work is partially supported by the French National Research Project NONAa, No. ANR-08-BLAN-0228-01, entitled {\em Spectral and microlocal analysis of non selfadjoint operators}.

\section{More about classical dynamics and the weak damping assumption}

The good properties of the flow at infinity come from the fact that for any $\nu > 0$ there exists $\Rc \geq 0$ such that
\begin{equation} \label{escape-rad}
 \forall x \in \R^n , \quad \abs x \geq \Rc \implies \abs{V_1(x)}  + \abs x \abs {\nabla V_1(x)} < \nu.
\end{equation}
This means that far from the origin the refraction index has low influence on the flow and hence the classical trajectories behaves as in the free case.\\

In particular for $J = [E_1,E_2]\subset \R_+^*$ and $\Rc$ such that \eqref{escape-rad} holds for $\nu = \frac {2E_1}3$, we have $\partial_t^2 \abs {\bar x(t,w)}^2 \geq 8E_1 - 12\nu >0$ if $p(w) \in J$ and $\abs {\bar x(t,w)} \geq \Rc$. As a consequence a classical trajectory of energy $E \in J$ which leaves $B_x(\Rc)$ cannot come back and goes to infinity. This implies that
\[
p\inv (J) = \O_b^+(J) \sqcup \O_\infty^+(J) = \O_b^-(J) \sqcup \O_\infty^-(J) =  \O_b(J) \cup \O_\infty^+(J) \cup \O_ \infty^-(J).
\]
Moreover $\O_b^\pm(J)$ is closed in $\R^{2n}$ and $\O_b(J) \subset B_x(\Rc)$ is compact. If $\Bc_\pm$ is a bounded subset of $\O_b^\pm(J)$, we choose $\Rc$ such that \eqref{escape-rad} holds for $\nu = \frac {2E_1}3$ and $\Bc_\pm \subset B_x(\Rc)$ to prove that the set
\[
 \singl{\vf^{\pm t}(w), t \geq 0, w \in \Bc_\pm} 
\]
is bounded in $\R^{2n}$.
\\

For $R \geq 0$, $d\geq 0$ and $\s \in [-1,1]$ we denote by
\[
\zoneS_\pm(R,d,\s)  = \singl{ (x,\x)\in\R^{2n} \tqe \abs x \geq R,   \abs \x  \geq d  \text{ and } \innp x \x \gtreqless\s \abs x \abs \x} 
\]
the incoming and outgoing regions. The proposition we prove now ensures that a trajectory starting outside some incoming region and far enough from the origin stays away from the influence of $V_1$:

\begin{proposition} \label{prop-non-entr}
Let $E_2 \geq E_1 > 0$, $J \subset [E_1,E_2]$ and $\s\in [0,1[$ such that $\s^2 E_2 < E_1$. Then there exist $\Rc > 0$ and $c_0$ such that
\[
 \forall t \geq 0, \forall (x,\x) \in \zoneS_\pm (\Rc,0,\mp \s) \cap p\inv(J), \quad \abs{\bar x(\pm t,x,\x)} \geq c_0 (t + \abs x).
\]
\end{proposition}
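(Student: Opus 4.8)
The plan is to exploit the convexity estimate $\partial_t^2 \abs{\bar x(t,w)}^2 \geq 8E_1 - 12\nu > 0$ (valid for $p(w) \in J$ and $\abs{\bar x(t,w)} \geq \Rc$, once $\Rc$ is chosen so that \eqref{escape-rad} holds with $\nu = \frac{2E_1}{3}$) together with the sign condition built into the incoming/outgoing region. First I would fix $w = (x,\x) \in \zoneS_+(\Rc, 0, -\s) \cap p\inv(J)$, so $\abs x \geq \Rc$ and $\innp x \x \geq -\s \abs x \abs \x$; the mirror case (outgoing region, $t \to -\infty$) is identical after reversing time. Set $g(t) = \abs{\bar x(t,w)}^2$, so $g(0) = \abs x^2$ and $g'(0) = 2\innp{\bar x(0,w)}{\partial_t \bar x(0,w)} = 4 \innp x \x \geq -4\s \abs x \abs \x$. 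The key point is that as long as $\bar x(t,w)$ stays in $\{\abs x \geq \Rc\}$ we have $g''(t) \geq 4\kappa$ with $\kappa := 2E_1 - 3\nu > 0$, so $g$ is strictly convex there.

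Second, I would argue that the trajectory never re-enters $B_x(\Rc)$ for $t \geq 0$: if it did, then on $[0,t_0]$ (with $t_0$ the first return time) $g$ is convex, $g(0) = g(t_0) = \Rc^2$ at the endpoints is impossible unless... actually more carefully, one uses that a convex function on an interval with $g \geq \Rc^2$ throughout and $g'(0) \geq -4\s\abs x\abs\x$ — here I would instead directly use the energy bound $\abs\x^2 = p(w) - V_1(x) \leq E_2 + \nu$ and $\abs\x^2 \geq E_1 - \nu$, so $\abs\x \leq \sqrt{E_2+\nu}$, giving $g'(0) \geq -4\s\abs x\sqrt{E_2+\nu}$. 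Since $\s^2 E_2 < E_1$, one can shrink $\nu$ further (re-choosing $\Rc$) so that $\s^2(E_2+\nu) < E_1 - \nu$, i.e. $\s\sqrt{E_2+\nu} < \sqrt{E_1-\nu} \leq \abs\x$ whenever $\abs{\bar\x}$ is evaluated in the region; the upshot is a strict inequality $g'(0)^2 < 16(E_1-\nu)\abs x^2$ with room to spare. Convexity then forces $g(t) \geq g(0) + g'(0) t + 2\kappa t^2$ for all $t \geq 0$ (the estimate propagates: the trajectory cannot leave the region $\{\abs x \geq \Rc\}$ because $g$ convex with $g(0) \geq \Rc^2$ and the minimum of the quadratic lower bound can be kept $\geq \Rc^2$ by enlarging $\Rc$ relative to the data — this is the standard bootstrap), hence $\abs{\bar x(t,w)}^2 \geq 2\kappa t^2 - 4\s\abs x\sqrt{E_2+\nu}\, t + \abs x^2$.

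Third, I would convert this quadratic lower bound in $t$, with a linear term controlled by $\abs x$, into the claimed bound $\abs{\bar x(t,w)} \geq c_0(t + \abs x)$. This is elementary: a quadratic $at^2 - bt + \abs x^2$ with $a = 2\kappa > 0$ and $b = 4\s\sqrt{E_2+\nu}\,\abs x$ satisfying the discriminant-type smallness $b^2 < 4a \cdot (\text{const})\abs x^2$ (which is exactly $\s^2(E_2+\nu) < \tfrac{\kappa}{2}$ after the final shrinking of $\nu$) is bounded below by $c_1(t^2 + \abs x^2)$ for a suitable $c_1 > 0$ depending only on $E_1, E_2, \s$; then $\abs{\bar x(t,w)} \geq \sqrt{c_1}\sqrt{t^2 + \abs x^2} \geq \frac{\sqrt{c_1}}{\sqrt 2}(t + \abs x)$, so $c_0 = \sqrt{c_1/2}$ works. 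I would then remark that the outgoing case $\zoneS_-(\Rc,0,\s)$ with $t \to -\infty$ follows by applying the above to the time-reversed flow, which is the Hamiltonian flow of $p$ with $\x \mapsto -\x$, leaving $\abs{\bar x}$ unchanged.

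The main obstacle is the bootstrap argument ensuring the trajectory genuinely stays in $\{\abs x \geq \Rc\}$ for all $t \geq 0$ (so that the differential inequality $g'' \geq 4\kappa$ applies on all of $[0,\infty)$, not just on an a priori unknown interval): one must choose $\Rc$ large enough, and $\nu$ small enough, so that the quadratic lower bound $g(0) + g'(0)t + 2\kappa t^2$ itself stays $\geq \Rc^2$, which is where the hypothesis $\s^2 E_2 < E_1$ is essential — without the strict sign gap the trajectory could dip inward. Once the geometry of that quadratic is pinned down, everything else is routine.
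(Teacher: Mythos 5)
Your overall strategy (convexity of $t\mapsto\abs{\bar x(t,w)}^2$ coming from \eqref{escape-rad}, quadratic lower bound, then conversion to a linear bound in $t+\abs x$) is exactly the paper's, and the final conversion via the discriminant condition $\s^2(E_2+\nu)<E_1-\tfrac 32\nu$ is correct. But the bootstrap step, as you state it, does not work. You claim the trajectory "cannot leave the region $\{\abs x\geq\Rc\}$" and that "the minimum of the quadratic lower bound can be kept $\geq\Rc^2$ by enlarging $\Rc$." This is false: the minimum over $t\geq 0$ of $g(0)+g'(0)t+2\kappa t^2$ equals $\abs x^2\bigl(1-\s^2\abs\x^2/E_3\bigr)\geq\tilde\nu\,\abs x^2$ with $\tilde\nu=1-\s^2(E_2+\nu)/E_3<1$ (strictly, whenever $\s>0$), a \emph{fixed ratio} of the starting value independent of $\Rc$. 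So the trajectory genuinely dips below $\Rc$ in norm, and no choice of a single $\Rc$ closes the loop; as written, you cannot even justify that the convexity estimate applies up to the dip.

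The fix — and what the paper does — is a two-radius argument. Choose $\tilde\Rc$ so that \eqref{escape-rad} holds with the chosen $\nu$ on $\{\abs x\geq\tilde\Rc\}$ (this is where $g''\geq 4\kappa$ holds), and then require the starting point to have $\abs x\geq\Rc:=\tilde\Rc/\sqrt{\tilde\nu}>\tilde\Rc$. Then the contradiction/first-return argument runs against $\tilde\Rc$, not $\Rc$: if $t_0=\inf\{t\geq 0:\abs{\bar x(t)}<\tilde\Rc\}$, on $[0,t_0]$ you have $\abs{\bar x}\geq\tilde\Rc$, so the quadratic bound holds there and gives $\abs{\bar x(t_0)}^2\geq\tilde\nu\abs x^2\geq\tilde\nu\Rc^2>\tilde\Rc^2$, contradicting $\abs{\bar x(t_0)}=\tilde\Rc$. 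So the trajectory stays in $\{\abs{\bar x}\geq\tilde\Rc\}$ (not $\{\abs{\bar x}\geq\Rc\}$), which is all you need for the convexity estimate to hold for all $t\geq 0$. Once you replace "stays $\geq\Rc$" by "stays $\geq\tilde\Rc$ with $\Rc$ chosen $\geq\tilde\Rc/\sqrt{\tilde\nu}$," your argument is the paper's.
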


\begin{proof}
Let $\tilde \Rc$ such that \eqref{escape-rad} holds for $\nu \in \left]0, \frac {2E_1}3\right[$ so small that
\[
\tilde \nu = 1 - \s^2 \frac {E_2 + \nu}{E_1 - \frac 32 \nu} > 0.
\]
Let $\Rc$ be greater than ${\tilde \Rc}/\sqrt {\tilde \nu}$ and $(x,\x) \in \zoneS_\pm (\Rc,0,\mp \s) \cap p\inv(J)$. Suppose there exists $t\geq 0 $ such that $\abs{\bar x (\pm t,x,\x)} < \tilde \Rc$ and let
\[
t_0 = \inf \singl{t\geq 0, \abs{\bar x (\pm t,x,\x)} < \tilde \Rc}.
\]
Let $E_3 = E_1 - \frac 32 \nu > 0$. For $t \in [0,t_0]$ we have $\partial_t^2 \abs{x(\pm t,x,\x)}^2\geq 8 E_3$ and hence
\begin{equation*} 
\abs{x(\pm t,x,\x)}^2 \geq \abs{x}^2  - 4t\s \abs x \abs\x + 4  E_3 t^2\geq \abs x^2 \left( 1 -  \frac {\s^2  \abs \x^2}{E_3}\right) \geq \abs x ^2 \tilde \nu > \tilde \Rc ^2.
\end{equation*}
This gives a contradiction when $t = t_0$. This proves that these inequalities actually hold for all $t \geq 0$. We also have
\[
\forall t \geq  {2 \s \abs x \abs \x} E_3, \quad \abs {\bar x (\pm t,x,\x)}^2 - \abs x^2 - 2E_3 t ^2 \geq 2t (E_3 t - 2\s \abs x \abs \x) \geq 0,
\]
which concludes the proof.
\end{proof}

We now discuss assumption \eqref{hyp-amfaible}. We still denote by $V_2$ the function $(x,\x) \mapsto V_2(x)$ on $\R^{2n}$. We first remark that since $(x,\x)$ and $(x,-\x)$ are simultaneously in $\O_b(\singl E)$, assumption \eqref{hyp-amfaible} could be equivalently formulated looking at trajectories in the past:
\begin{equation} \label{am-diss-past}
\eqref{hyp-amfaible} \quad  \eqv \quad  \forall w  \in \O_b(\singl E) , \exists T \in \R , \quad V_2\big(\bar x (-T,w)\big) > 0.
\end{equation}
Using compactness of $\O_b([E/2,2E])$, we can also check that assumption \eqref{hyp-amfaible} is an open property:

\begin{proposition}
If assumption \eqref{hyp-amfaible} holds for some $E > 0$, then it also holds for any $\l$ in some neighborhood of $E$.
\end{proposition}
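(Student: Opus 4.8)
The plan is to argue by contradiction using the compactness of $\O_b([E/2, 2E])$ established above. Suppose the statement fails: then there is a sequence $\l_k \to E$ and, for each $k$, a point $w_k \in \O_b(\singl{\l_k})$ such that
\[
\forall T > 0, \quad \int_0^T V_2(\bar x(t, w_k)) \, dt \leq 0.
\]
First I would note that for $k$ large we have $\l_k \in [E/2, 2E]$, so $w_k \in \O_b(\singl{\l_k}) \subset \O_b([E/2,2E])$, and the latter set is compact by the discussion preceding the proposition (in particular it is contained in some $B_x(\Rc)$). Hence, after extracting a subsequence, $w_k \to w_\infty$ for some $w_\infty \in \R^{2n}$, and by continuity of $p$ we get $p(w_\infty) = \lim_k p(w_k) = \lim_k \l_k = E$, i.e. $w_\infty \in p\inv(\singl E)$.

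Next I would check that $w_\infty \in \O_b(\singl E)$, i.e. that the bounded trajectory property passes to the limit. Here I would use the dynamical trichotomy recalled in Section 2: for $J = [E/2, 2E]$ one has $p\inv(J) = \O_b(J) \sqcup \O_\infty^+(J) \cap \cdots$, and more precisely a trajectory of energy in $J$ which ever leaves $B_x(\Rc)$ goes to infinity and never returns. Since each $w_k$ satisfies $\sup_{t \in \R} \abs{\bar x(t,w_k)} < \infty$ and in fact (by the argument in Section 2, using $w_k \in B_x(\Rc)$ and $p(w_k) \in J$) the whole trajectory stays in $B_x(\Rc)$, continuous dependence of $\vf^t$ on initial data on compact time intervals forces $\abs{\bar x(t, w_\infty)} \leq \Rc$ for all $t$ as well; hence $w_\infty \in \O_b(\singl E)$. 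Then, again by continuous dependence of the flow on the initial datum, uniformly on the compact time interval $[0,T]$, and by continuity (indeed boundedness, since $V_2 \to 0$ at infinity) of $V_2$, I would pass to the limit in the integral inequality: for every fixed $T > 0$,
\[
\int_0^T V_2(\bar x(t, w_\infty)) \, dt = \lim_{k \to \infty} \int_0^T V_2(\bar x(t, w_k)) \, dt \leq 0.
\]
This contradicts assumption \eqref{hyp-amfaible} for $E$ applied to the point $w_\infty \in \O_b(\singl E)$, which would supply some $T > 0$ with a strictly positive integral. The contradiction proves the proposition.

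The main obstacle is the limiting step $w_\infty \in \O_b(\singl E)$: a priori limits of trapped points could fail to be trapped, so one genuinely needs the non-return property from Section 2 (trajectories that exit a large ball escape to infinity) to rule out that the limiting trajectory leaves the compact region. Everything else — extraction of a convergent subsequence, continuity of $p$, continuous dependence of $\vf^t$ on $w$ uniformly on $[0,T]$, and dominated convergence for the $V_2$-integral using boundedness of $V_2$ — is routine. One minor care point: one should phrase the negation of the conclusion correctly, namely that \eqref{hyp-amfaible} fails along a sequence $\l_k \to E$; since failure of \eqref{hyp-amfaible} at $\l_k$ means the existence of a bad $w_k \in \O_b(\singl{\l_k})$, this is exactly the starting data above.
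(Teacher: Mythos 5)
Your proof is correct and is precisely the argument the paper has in mind (the paper gives no proof, only the hint ``Using compactness of $\O_b([E/2,2E])$'', which is exactly what you exploit). One small shortcut: instead of re-deriving the non-return property to get $w_\infty \in \O_b(\singl E)$, you could directly invoke the paper's earlier statement that $\O_b(J)$ is closed, together with continuity of $p$, to conclude $w_\infty \in \O_b(J) \cap p\inv(\singl E) = \O_b(\singl E)$.
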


%

We now prove that assumption \eqref{hyp-amfaible} ensures that the absorption is positive on trapped trajectories ``on average'' in time.

\begin{proposition}  \label{prop-gros-amort}
Let $J \subset \R_+^*$ be such that assumption \eqref{hyp-amfaible} holds for any $\l\in J$. Then for all compact $K \subset \O_b^\pm(J)$ there exist $c_0>0$ and $C \geq 0$ such that 
\[
\forall t \geq 0, \forall w \in K, \quad  \int_0^t (V_2 \circ \vf^{\pm s} )(w)\, ds  \geq  c_0 t - C.
\]
\end{proposition}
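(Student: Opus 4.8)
The plan is to argue by contradiction, exploiting the compactness of $K \subset \O_b^\pm(J)$ together with the local-in-time version of assumption \eqref{hyp-amfaible}. I treat the $+$ case; the $-$ case follows by the symmetry \eqref{am-diss-past}. For each $w \in K$, \eqref{hyp-amfaible} gives a time $T_w > 0$ with $\int_0^{T_w} (V_2 \circ \vf^s)(w)\,ds =: 2\eta_w > 0$. Since $V_2$ is continuous (indeed $\symbor$) and the flow $\vf^s$ depends continuously on $w$, the map $w \mapsto \int_0^{T_w} (V_2 \circ \vf^s)(w)\,ds$ is continuous, so there is an open neighborhood $\Oc_w$ of $w$ on which this integral stays $\geq \eta_w > 0$. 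By compactness, $K$ is covered by finitely many $\Oc_{w_1},\dots,\Oc_{w_N}$; set $T^* = \max_j T_{w_j}$ and $\eta = \min_j \eta_{w_j} > 0$. Thus: for every $w \in K$ there is a time $\tau(w) \in (0,T^*]$ (one may even take $\tau(w)$ from the finite list) with $\int_0^{\tau(w)} (V_2 \circ \vf^s)(w)\,ds \geq \eta$.

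The next step is a renewal/iteration argument. Note that $\O_b^+(J)$ is invariant under the forward flow, and by the remark following \eqref{escape-rad}, the forward orbit $\{\vf^s(w) : s \geq 0, w \in K\}$ is bounded in $\R^{2n}$; call its closure $\tilde K$, a compact subset of $\O_b^+(J)$. Applying the previous paragraph to $\tilde K$ in place of $K$, we get uniform constants $T^* > 0$, $\eta > 0$ such that for every $w \in \tilde K$ there is $\tau(w) \in (0,T^*]$ with $\int_0^{\tau(w)}(V_2\circ\vf^s)(w)\,ds \geq \eta$. Now fix $w \in K$ and build inductively $s_0 = 0$, $s_{k+1} = s_k + \tau(\vf^{s_k}(w))$. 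Each $\vf^{s_k}(w) \in \tilde K$, so $\int_{s_k}^{s_{k+1}} (V_2\circ\vf^s)(w)\,ds \geq \eta$, while $s_{k+1} - s_k \leq T^*$. Given $t \geq 0$, let $m$ be the largest index with $s_m \leq t$; then $t \leq s_{m+1} \leq (m+1)T^*$, so $m \geq t/T^* - 1$, and
\[
\int_0^t (V_2\circ\vf^s)(w)\,ds = \sum_{k=0}^{m-1}\int_{s_k}^{s_{k+1}} + \int_{s_m}^{t} \geq m\eta - \nr{V_2}_\infty T^* \geq \frac{\eta}{T^*}\, t - \eta - \nr{V_2}_\infty T^*,
\]
using $|V_2| \leq \nr{V_2}_\infty$ to control the final partial interval (of length $\leq T^*$) and the overshoot. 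This is the desired bound with $c_0 = \eta/T^*$ and $C = \eta + \nr{V_2}_\infty T^*$.

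The main obstacle — and the only genuinely non-routine point — is obtaining the uniform local lower bound $\eta$ and uniform horizon $T^*$ valid on a flow-invariant compact set that contains all forward translates of $K$. This needs two ingredients: first, that \eqref{hyp-amfaible} is not just pointwise but passes to a uniform statement via continuity and compactness (straightforward once one checks $\O_b([E/2,2E])$, or here $\O_b^+(J)$ restricted suitably, is compact — which is exactly what Section 2 establishes); and second, that the forward orbit of $K$ stays bounded, so that the uniform local bound can be re-applied at each renewal time $s_k$. The boundedness of $\{\vf^s(w): s\geq 0, w \in K\}$ is precisely the consequence of \eqref{escape-rad} recorded right before this proposition, so I can invoke it directly. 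Everything else is the standard "bounded gaps plus uniform per-step gain implies linear growth" telescoping, with the $\nr{V_2}_\infty$-term absorbing the boundary effects into the additive constant $C$.
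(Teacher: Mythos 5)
There is a genuine gap, and it occurs at the very first line: you apply \eqref{hyp-amfaible} to an arbitrary $w \in K$, but $K$ is only assumed to lie in $\O_b^\pm(J)$, whereas \eqref{hyp-amfaible} gives information only for $w \in \O_b(\{\lambda\})$, i.e.\ for trajectories that are trapped \emph{in both time directions}. A point $w \in \O_b^+(J)\setminus\O_b(J)$ is bounded forward but escapes backward, and the hypothesis says nothing at all about $\int_0^T V_2(\bar x(t,w))\,dt$ for such a $w$. The same issue recurs when you pass to $\tilde K$: the closure of the forward orbit of $K$ is a compact subset of $\O_b^+(J)$, not of $\O_b(J)$, so the ``uniform $\eta$, $T^*$'' you want on $\tilde K$ simply does not follow from the covering argument. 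Your renewal/telescoping scheme is fine once one has a flow-forward-invariant compact set on which a uniform local gain holds, but you have not produced such a set.

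The missing ingredient is the step that bridges $\O_b^+(J)$ to $\O_b(J)$. The paper's proof first runs your covering and iteration argument \emph{on $\O_b(J)$ itself} (which is compact), producing a neighborhood $\Uc$ of $\O_b(J)$ and a time $T_2$ with $\int_0^{T_2}(V_2\circ\vf^{\pm s})(w)\,ds \geq 1$ for all $w\in\Uc$. It then proves, by a separate compactness/contradiction argument, that there is a uniform $T_K$ such that $\vf^{\pm t}(w)\in\Uc$ for all $w\in K$ and all $t\geq T_K$ --- intuitively, the forward orbit of a trapped-forward point accumulates on $\O_b(J)$, and the accumulation is uniform over $K$. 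Only after this can the telescoping start (from time $T_K$ on), with the initial window $[0,T_K]$ absorbed into the constant $C$ via the crude bound $-T_K m_-$. Your proof would be repaired by inserting this ``uniform entry into a neighborhood of $\O_b(J)$'' step; without it, the first sentence is unjustified and the whole iteration rests on a false premise.
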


\begin{remark}
 This proposition implies that 
\[
 \forall w \in \O_b^\pm(J), \quad \liminf_{t\to +\infty} \frac 1  t  \int_0^t (V_2 \circ \vf^{\pm s} )(w)\, ds > 0.
\]
The claim that the average in time of the absorption is positive on trapped trajectories becomes clear on periodic trajectories. If $w \in p\inv(J)$ and $T > 0$ are such that $\vf^T(w) = w$, then 
\[
  \int_0^T (V_2 \circ \vf^{t} )(w)\, dt > 0.
\]
To prove this we only have to apply \eqref{hyp-amfaible} to $w_0 = \vf^{t_0}(w) \in \O_b(\singl E)$, where $t_0 \in [0, T]$ is the time for which $t \mapsto \int_0^t V_2(\bar x(s,w))\, ds$ reaches its maximum.
\end{remark}

\begin{proof}[Proof of Proposition \ref{prop-gros-amort}]
Since $K$ is compact we can asssume without loss of generality that $J$ is a compact subset of $\R_+^*$.\\

\noindent
{\bf 1.}
Let $w \in \O_b(J)$. By assumption there exist $T_w,\g_w > 0$ such that 
\[
\int_0^{T_w} (V_2 \circ \vf^{\pm s} )(w)\, ds  \geq 2 \g_w.
\]
Since the left-hand side is a continuous function of $w$, we can find a neighborhood $\Vc_w$ of $w$ in $\R^{2n}$ such that for all $v \in \Vc_w$ we have
\[
\int_0^{T_w} (V_2 \circ \vf^{\pm s} )(v)\, ds \geq \g_w.
\]
As $\O_b(J)$ is compact, it is covered by a finite number of such sets $\Vc_w$. Hence we can find $T_1 ,\g_1 > 0$ such that
\[
\forall w \in \O_b(J), \exists t \in [0,T_1], \quad \int_0^{t} (V_2 \circ \vf^{\pm s} )(w)\, ds \geq  \g_1.
\]

\noindent
{\bf 2.}
Let $\n =   \frac {2 + T_1 \nr {V_2}_\infty} {\g_1}$ and $T_2 := T_1 \left( 1 +  \nu \right)$. Let $w \in \O_b(J)$. We set $t_0 = 0$ and for all $k \in \N$ we consider by induction $t_{k+1} \in ]t_k,t_k + T_1]$ such that
\[
\int_{t_k}^{t_{k+1}}(V_2 \circ \vf^{\pm s})(w) \, ds \geq \g_1.
\]
We necessarily have $t_{k+1} \geq t_k  + \g_1 / \nr {V_2}_\infty$ for all $k \in \N$, and hence $t_k \to +\infty$ (if $V_2=0$ the statement of the proposition is empty). In particular any $t > 0$ belongs to $]t_k,t_k+T_1]$ for some $k\in\N$. Let $t \geq T_2$ and $N \in \N$ such that $t \in ]t_N,t_N+T_1]$. We have $N\geq \nu$ and hence
\[
\begin{aligned}
\int_0^{t} (V_2 \circ \vf^{\pm s} )(w)\, ds \geq \sum_{k=0}^{N-1} \int_{t_k}^{t_{k+1}} (V_2 \circ \vf^{\pm s} )(w)\, ds + \int_ {t_N}^t (V_2 \circ \vf^{\pm s} )(w)\, ds \geq N \g_1 - T_1 m_-,
\end{aligned}
\]
where $m_- = - \inf V_2$. This proves that
\[
\forall w \in \O_b(J), \forall t \geq T_2,  \quad  \int_0^{t} (V_2 \circ \vf^{\pm s} )(w)\, ds \geq 2 .
\]

\noindent
{\bf 3.}
By continuity, there exists a neighborhood $\Uc$ of $\O_b(J)$ such that
\[
\forall w \in \Uc, \quad \int_0^{T_2} (V_2 \circ \vf^{\pm s} )(w)\, ds \geq  1.
\]

\noindent
{\bf 4.}
Now let $K$ be a compact subset of $\O_b^\pm (J)$. 
We prove by contradiction that there exists $T_K \geq 0$ such that $\vf^{\pm t}(w) \in \Uc$ for all $w\in K$ and $t \geq T_K$. If it is not the case we can find  sequences $\seq t m$ and $\seq w m$ with $w_m \in K$ and $t_m \to +\infty$ such that $\vf^{\pm t_m}(w_m) \notin \Uc$. Each term of the sequence $(\vf^{\pm t_m}(w))_{m\in\N}$ belongs to the bounded set $\bigcup _{t\geq 0} \vf^{\pm t} (K)$, so after extracting a subsequence if necessary, we can assume that it converges to
\[
w_\infty \in \O_b^\pm (J) \setminus\Uc  \subset \O_\infty^\mp(\R).
\] 
Let $\Rc$ be such that \eqref{escape-rad} holds for $\nu = {2\inf J}/3 >0$ and $K \cup \singl {w_\infty}  \subset B_x(\Rc)$. There exists $T_\infty \geq 0$ such that $\abs{\bar x (\mp T_\infty,w_\infty)} > 2\Rc$. By continuity and properties of $\Rc$, we can find a neighborhood $\Vc \subset B_x(\Rc)$ of $w_\infty$ such that $\abs {\bar x (\mp t,v)} > 2\Rc$ for all $v \in \Vc$ and $t \geq T_\infty$. Hence for large $m$ we have $\vf^{\mp t_m}(\Vc) \cap K = \emptyset$, and in particular $\vf^{\pm t_m} (w_m) \notin \Vc$. This gives a contradiction.\\

\noindent
{\bf 5.}
Let $w \in K$, $t\geq T_K$ and $N$ be the integer part of $\frac {t-T_K}{T_2}$. We have
\begin{eqnarray*}
\lefteqn{\int_{0}^{t} (V_2 \circ \vf^{\pm s} )(w)\, ds}\\
&& \geq \int_{0}^{T_K} (V_2 \circ \vf^{\pm s} )(w)\, ds + \sum_{k=0}^{N-1} \int_{T_K+kT_2}^{T_K + (k+1) T_2} (V_2 \circ \vf^{\pm s} )(w)\, ds + \int_{T_K + N T_2}^{t} (V_2 \circ \vf^{\pm s} )(w)\, ds\\
&& \geq -T_K m_-  + N - T_2 m_-\\
&& \geq -T_K m_- + \frac {t-T_K}{T_2}-1 - T_2 m_-.
\end{eqnarray*}
Since this integral is not less than $-T_K m_-$ when $t \in [0,T_K]$, this gives the result with $c_0 = 1/T_2$ and $C = 1 + (T_K + T_2) m_- + T/T_2$.
\end{proof}

We are going to use in section \ref{sec-mesure} a more precise result:

\begin{proposition}  \label{prop-gros-amort-2}
Let $\Rc > 0$ and $J \subset \R_+^*$ such that assumption \eqref{hyp-amfaible} holds for any $\l \in J$. Then for any compact subset $\tilde K$ of $p\inv(J)$ there exist $c_0,C > 0$ such that
\[
\forall t \geq 0, \forall w \in \tilde K, \quad  \int_0^t (V_2 \circ \vf^{\pm s} )(w)\, ds  \geq  c_0 t - C \quad \text{or} \quad \abs{\bar x(\pm t ,w)} \geq \Rc.
\]
\end{proposition}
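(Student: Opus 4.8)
The plan is to reduce Proposition \ref{prop-gros-amort-2} to Proposition \ref{prop-gros-amort} by splitting the compact set $\tilde K \subset p\inv(J)$ according to the classification of trajectories recalled at the beginning of this section. Since we may enlarge $J$ to a compact subinterval of $\R_+^*$ (as in the previous proof), the decomposition $p\inv(J) = \O_b^\pm(J) \sqcup \O_\infty^\pm(J)$ applies. Fix the sign, say $+$; the case $-$ is identical. The point $w \in \tilde K$ with trajectory $\vf^{+t}(w)$ is either in $\O_b^+(J)$, in which case Proposition \ref{prop-gros-amort} (applied to a suitable compact) already gives $\int_0^t (V_2 \circ \vf^{+s})(w)\, ds \geq c_0 t - C$ for all $t \geq 0$; or it is in $\O_\infty^+(J)$, in which case $\abs{\bar x(t,w)} \to +\infty$, and we want to say the second alternative $\abs{\bar x(t,w)} \geq \Rc$ holds for all large $t$, uniformly in $w$. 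The difficulty is exactly this uniformity: a given trajectory in $\O_\infty^+(J)$ eventually exits $B_x(\Rc)$ forever, but the exit time is not obviously bounded over $w \in \tilde K \cap \O_\infty^+(J)$, and moreover $\tilde K$ may contain points arbitrarily close to $\O_b^+(J)$, where the exit time blows up.

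To handle this I would argue by an exhaustion/compactness scheme analogous to Step 4 of the proof of Proposition \ref{prop-gros-amort}. First, choose $\Rc' \geq \max(\Rc, \tilde\Rc)$ large enough that \eqref{escape-rad} holds with $\nu = \frac{2\inf J}{3}$ and $\tilde K \subset B_x(\Rc')$; recall from the beginning of the section that any trajectory of energy in $J$ which leaves $B_x(\Rc')$ never returns, and in fact, using $\partial_t^2 \abs{\bar x}^2 \geq 8E_3 > 0$ outside $B_x(\Rc')$, once it is outside $B_x(2\Rc')$ it stays outside $B_x(2\Rc')$. Now consider the set $\Kc_b$ of points $w \in \tilde K$ such that $\abs{\bar x(s,w)} \leq 2\Rc'$ for all $s \geq 0$; this is a closed, hence compact, subset of $\tilde K$, and $\Kc_b \subset \O_b^+(J)$. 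By Proposition \ref{prop-gros-amort} applied to $\Kc_b$ we get constants $c_0, C_0$ with $\int_0^t(V_2\circ\vf^{+s})(w)\,ds \geq c_0 t - C_0$ for all $t\geq 0$, $w \in \Kc_b$. By continuity of $w \mapsto \int_0^T (V_2\circ\vf^{+s})(w)\,ds$, this estimate, for a fixed large enough $T$, survives on a neighborhood $\Uc$ of $\Kc_b$ (with $c_0, C_0$ slightly degraded); and by the semigroup argument of Steps 2–3 one upgrades this to $\int_0^t \geq c_0 t - C_0$ for all $t\geq 0$, $w\in\Uc$, provided $\vf^{+s}(w)$ keeps returning to $\Uc$ — which it does as long as $\vf^{+s}(w) \in \Uc$, so really one wants: on the set of $w$ whose forward orbit stays in some neighborhood, the averaged-damping bound holds.

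The complementary set $\tilde K \setminus \Uc$, intersected with the requirement that the orbit never leaves $B_x(2\Rc')$, is empty by construction, so for every $w \in \tilde K \setminus \Uc$ there is a first time $\tau(w) \geq 0$ with $\abs{\bar x(\tau(w),w)} > 2\Rc'$. I claim $\sup_{w \in \tilde K \setminus \Uc} \tau(w) =: T_K < \infty$: otherwise there is a sequence $w_m \in \tilde K \setminus \Uc$ with $\tau(w_m) \to \infty$, i.e. $\abs{\bar x(s,w_m)} \leq 2\Rc'$ for all $s \leq \tau(w_m)$; after extraction $w_m \to w_\infty \in \tilde K$, and passing to the limit gives $\abs{\bar x(s,w_\infty)} \leq 2\Rc'$ for all $s \geq 0$, so $w_\infty \in \Kc_b \subset \Uc$, contradicting $w_m \notin \Uc$ (since $\Uc$ is open and $w_m \to w_\infty$). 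Hence for $w \in \tilde K\setminus\Uc$ and $t \geq T_K \geq \tau(w)$ we have $\abs{\bar x(t,w)} > 2\Rc' \geq \Rc$ by the no-return property, i.e. the second alternative holds; and for $t \in [0,T_K]$ one can absorb the finite-time integral into the constant $C$ as usual. Combining the two regions with $c_0$ the smaller of the two rates and $C$ large enough (accounting for $T_K \nr{V_2}_\infty$ and the neighborhood losses) gives the dichotomy for all $w \in \tilde K$ and $t \geq 0$. The main obstacle, as indicated, is the uniform exit-time bound for orbits in $\O_\infty^+(J)$ near the boundary with $\O_b^+(J)$, which is precisely what the compactness argument in the last step is designed to overcome; everything else is a rerun of the bookkeeping in the proof of Proposition \ref{prop-gros-amort}.
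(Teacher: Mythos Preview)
Your compactness argument bounding the exit time on $\tilde K\setminus\Uc$ is fine, but the treatment of $\Uc\setminus\Kc_b$ has a real gap. Continuity only gives you the single-time estimate $\int_0^{T}(V_2\circ\vf^{+s})(w)\,ds\geq c_0T-C_0$ on $\Uc$ for one fixed $T$; to upgrade to all $t$ by the chaining of Steps~2--3 in Proposition~\ref{prop-gros-amort} you would need $\vf^{kT}(w)\in\Uc$ at each step. For $w\in\Uc\setminus\Kc_b\subset\O_\infty^+(J)$ the orbit eventually leaves every bounded set, so this fails, and your clause ``which it does as long as $\vf^{+s}(w)\in\Uc$'' is circular. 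Nor can the time before the orbit leaves $\Uc$ be absorbed into $C$, since that time blows up as $w\to\Kc_b$. So as written you have handled $\Kc_b$ (via Proposition~\ref{prop-gros-amort}) and $\tilde K\setminus\Uc$ (via your exit-time bound), but not the intermediate layer $\Uc\setminus\Kc_b$.

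This layer is exactly where the paper does the work. The paper keeps the neighborhood $\Uc$ of $\O_b(J)$ from the proof of Proposition~\ref{prop-gros-amort} (on which $\int_0^{T_2}\geq 1$) and proves that for $w$ in a neighborhood $\Vc$ of $K=\tilde K\cap\O_b^\pm(J)$ there exist $\t_w\geq T_K$ and a \emph{uniform} $T_f$ such that $\vf^{\pm t}(w)\in\Uc$ for $t\in[T_K,\t_w]$ and $\abs{\bar x(\pm t,w)}\geq\Rc$ for $t\geq\t_w+T_f$: the orbit may linger near $\O_b(J)$ for arbitrarily long (during which the single-step bound can be chained), but the transit from leaving $\Uc$ to leaving $B_x(\Rc)$ takes at most $T_f$. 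Establishing this uniform $T_f$ requires a double extraction ($w_m\to w_\infty\in K$, then $t_m\to\infty$, then $v_m=\vf^{\pm t_m}(w_m)\to v_\infty\in\O_b(J)$) that your single-extraction argument does not reach. A shorter repair in your own style: run your exit-time compactness argument on the larger compact $L=\overline{B_x(2\Rc')}\cap p^{-1}(J)$ instead of $\tilde K$, replacing $\Kc_b$ by $L\cap\O_b^+(J)$ and $\Uc$ by a neighborhood $\Uc'$ of this set in $L$ on which the single-step bound holds; then for $w\in\tilde K$ and $t$ with $\abs{\bar x(t,w)}<\Rc$, no-return forces $\vf^s(w)\in L$ for all $s\in[0,t]$, hence $\vf^s(w)\in\Uc'$ for $s\in[0,t-T_f]$, and the chaining becomes legitimate.
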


If $K \subset\O_b^\pm (J)$ this comes from Proposition \ref{prop-gros-amort}, and if $K \subset \O_\infty^\pm(J)$, the second conclusion holds for $t$ large enough, uniformly in $w \in K$, and the first conclusion is always true for finite times. The problem therefore comes from the boundary between $\O_b^\pm(J)$ and $\O_\infty^\pm(J)$.

\begin{proof}
As above we may assume that $J$ is compact. Since the conclusion is stronger if $\Rc$ is taken larger, we may assume that \eqref{escape-rad} holds for $\nu = 2\inf J /3$.
Let $K = \tilde K \cap \O_b^\pm(J)$. $K$ is a compact subset of $\O_b^\pm (J)$. We use the notation introduced in the proof of Proposition \ref{prop-gros-amort}. We know that there exists $T_K \geq 0$ such that $\vf^{\pm t}(w) \in \Uc$ for all $t \geq T_K$ and $w \in K$. By continuity of the hamiltonian flow, there exists a neighborhood $\Vc$ of $K$ in $\tilde K$ such that $\vf^{\pm T_K}(w) \in \Uc$ for all $w \in \Vc$.

We now prove that there exists $T_f \geq 0$ such that for all $w \in \Vc \setminus K \subset \O_\infty^\pm (J)$ we can find $\t_w \geq T_K$ which satisfies: 
\begin{equation} \label{tau-m}
\forall t \in [T_K,\t_w], \quad \vf^{\pm t}(w) \in \Uc \quad \text{and} \quad \forall t \geq \t_w + T_f, \quad \abs{\bar x (\pm t , w)} \geq \Rc.
\end{equation}
This means that even if we cannot say when a trajectory coming from $\Vc \setminus K$ will leave $B_x(\Rc)$, we control the time it can stay in $B_x(\Rc) \setminus \Uc$.

Assume that we cannot find $T_f$ such that \eqref{tau-m} holds. Then we can find a sequence $\seq w m$ of elements in $\Vc \setminus K$ and times $t_m\geq T_K$, $\th_m \geq m$ for $m\in\N$ such that
\[
\vf^{\pm t_m}(w_m) \notin \Uc \quad \text{and}\quad \forall t \in [t_m,t_m + \th_m],\quad  \abs{\bar x (\pm t ,w_m)} \leq \Rc.
\]

After extracting a subsequence if necessary, we may assume that $w_m$ converges to $w_\infty \in \tilde K$. If $w_\infty \in \O_\infty^\pm (J)$ then there exist $T_\infty \geq 0$ and a neighborhood $\Wc$ of $w_\infty$ such that $\abs{\bar x (\pm t,v)}\geq \Rc$ for all $t \geq T_\infty$ and $v\in \Wc$, which is impossible. This means that the limit $w_\infty$ actually belongs to $K$.

If the sequence $\seq t m$ is bounded we can assume, after extraction, that $t_m \to t_\infty \geq T_K$, which cannot be true since we would have $\vf^{\pm t_m}(w_m) \to \vf^{\pm t} (w_ \infty) \in\Uc$ and hence $\vf^{\pm t_m}(w_m) \in \Uc$ for $m$ large enough. Extracting again a subsequence, we can assume that $t_m \to +\infty$.
 
Let $v_m = \vf^{\pm t_m}(w_m)$. The sequence $\seq v m$ is bounded so without loss of generality we can assume that it converges to some $v_\infty \in p\inv(J)$. Since $t_m,\th_m \to \infty$ and the sequences $(\vf^{\mp t_m}(v_m))_{m\in\N}$ and $(\vf^{\pm \th_m}(v_m))_{m\in\N}$ are bounded, we obtain as before that $v_\infty \in \O_b(J)$, which gives a contradiction and hence proves \eqref{tau-m}.

%
The complement $\tilde K \setminus \Vc$ is a compact subset of $\O_\infty^\pm (J)$. Choosing $T_f$ larger if necessary, we can assume that 
\[
\forall w \in \tilde K \setminus \Vc, \forall t \geq T_f, \quad \abs{\bar x (\pm t ,w)} \geq \Rc.
\]
As a consequence, given $t \geq 0$ and $w \in \tilde K$ such that $\abs{\bar x (\pm t,w)} \leq \Rc$, we have $\vf^{\pm s}(w) \in \Uc$ for all $s \in [T_K,t-T_f]$ and hence
\[
 \int_{0}^{t} (V_2 \circ \vf^{\pm s} )(w)\, ds \geq -m_- T_K + \frac {t-T_K-T_f}{T_2} - 1 - T_2 m_- - T_f m_-.
\]
\end{proof}

\section{Resolvent at distance of order $h$ from the real axis}

In this section we give some general properties about the semiclassical measures we consider and prove Theorem \ref{th-vp}.

\begin{proposition} \label{prop-burq-debut}
Let $\seq z m \in \C^\N$ and $\seq h m \in ]0,1]^\N$ be sequences such that
\[
h_m \limt m \infty 0, \quad  \l_m := \Re z_m \limt m \infty E > 0 \quad \text{and} \quad \b_m := h_m\inv \Im z_m \limt m \infty \b \in \R.
\]
Consider $\d \geq 0$ and a sequence $\seq v m \in H^2(\R^n)^\N$ such that
\[
\nr{v_m}_{L^{2,-\d}(\R^n)} = 1, \quad \nr {(\hhm - z_m) v_m }_{L^{2,\d}(\R^n)} = \littleo m \infty (h_m),
\]
and
\begin{equation} \label{conv-mesure}
\forall q\in C_0^\infty(\R^{2n}), \quad \innp{\Opwm(q) v_m}{v_m}_{L^2(\R^n)} \limt m \infty \int _{\R^{2n}} q \, d\m
\end{equation}
for some (non-negative) measure $\m$ on $\R^{2n}$. Then we have the following two properties.
\begin{enumerate} [(i)]
\item If $q \in \symb \big(\pppg x ^{-2\d}\big)$ is supported outside $p\inv(J)$ for some neighborhood $J$ of $E$ we have
\[
 \innp{\Opwm (q) v_m}{v_m} \limt m \infty 0.
\]
In particular $\m$ is supported on $p\inv(\singl E)$ and for $\h \in C_ 0^\infty(\R^n)$ we have
\[
\innp{\h v_m}{v_m}_{L^2(\R^n)} \limt m \infty \int _{\R^{2n}} \h(x) \, d\m(x,\x).  
\]

\item For $q \in C_0^\infty(\R^{2n})$ and $t \geq  0$ we have
\begin{equation} \label{propagation-mu}
\begin{aligned}
\int_{\R^{2n}} q \, d\m
 = \int _{\R^{2n}} (q\circ \vf^t) \exp \left( -2\int_0^t (V_2+\b)\circ \vf^{s}\, ds\right) \, d\m.
\end{aligned}
\end{equation}
\end{enumerate}
\end{proposition}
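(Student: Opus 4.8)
The statement is a classical-dynamics/defect-measure computation in the spirit of Lebeau--Burq--Jecko, so I would organize the proof around the two parts.

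For part (i), the plan is to use an elliptic regularity argument for the semiclassical operator $\hhm - z_m$. If $q \in \symb(\pppg x^{-2\d})$ is supported away from $p\inv(J)$, then on $\supp q$ the symbol $p - E$ is bounded away from zero, so $p - z_m$ is elliptic there for $m$ large (since $\l_m \to E$ and $h_m\b_m \to 0$). I would construct a parametrix: pick $\tilde q \in \symb(\pppg x^{-2\d})$ with $\supp \tilde q$ still away from $p\inv(J)$ and $\tilde q \equiv 1$ near $\supp q$, and write $\Opwm(q) = \Opwm(a)(\hhm - z_m) + h_m \Opwm(r_1) + \Opwm(r_2)$ where $a = q/(p-z_m)$ is a well-defined symbol in the appropriate class, $r_1$ is bounded (symbolic calculus remainder from the composition) and $r_2$ has symbol supported away from $p\inv(\singl E)$ of one order lower; iterating the construction absorbs $r_2$. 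Then $\innp{\Opwm(q)v_m}{v_m} = \innp{\Opwm(a)(\hhm-z_m)v_m}{v_m} + O(h_m) + (\text{lower order})$; the first term is $O(\nr{(\hhm-z_m)v_m}_{L^{2,\d}}\nr{v_m}_{L^{2,-\d}}) = o(h_m)$ by the hypotheses on $v_m$ (using that $\Opwm(a)$, with $a$ decaying like $\pppg x^{-2\d}$, maps $L^{2,-\d}$ to $L^{2,\d}$ boundedly). Hence the limit is $0$. Taking $q$ of the form $\chi(p)\psi$ shows $\m$ charges only $p\inv(\singl E)$; the statement about $\h \in C_0^\infty(\R^n)$ follows by writing $\h(x) = \h(x)\theta(p(x,\x)) + \h(x)(1-\theta(p))$ with $\theta$ a cutoff equal to $1$ near $E$, applying the above to the second piece and \eqref{conv-mesure} (after a density/truncation argument in $\x$) to the first.

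For part (ii), the plan is the standard propagation identity. Fix $q \in C_0^\infty(\R^{2n})$, set $q_t = q\circ\vf^t$ and let $\theta(t) = \exp(-2\int_0^t (V_2+\b)\circ\vf^s\,ds)$. I would show that $g(t) := \int q_t\,\theta(t)\,d\m$ is constant by differentiating. The cleanest route is to introduce, for each fixed $t$, the operator $A_m(t) = \Opwm(q_t)$ and compute $\frac{d}{dt}\innp{A_m(t)v_m}{v_m}$. Using Egorov-type bookkeeping: $\frac{d}{dt}\innp{A_m(t)v_m}{v_m}$ is, modulo $o(1)$, governed by the commutator $\frac{i}{h_m}[\hhm, \Opwm(q_t)]$ together with the anti-self-adjoint part $-ih_mV_2$. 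The real part $-h^2\D + V_1$ contributes $\Opwm(\{p,q_t\})$ at leading order, but $\{p,q_t\} = 0$ since $q_t$ is transported by the flow of $p$; the dissipative part $-ih_mV_2$ contributes $-2\Opwm(V_2 q_t)$ at leading order; and the spectral parameter, via $\nr{(\hhm-z_m)v_m}_{L^{2,\d}} = o(h_m)$ and $\Im z_m = h_m\b_m$ with $\b_m\to\b$, contributes $-2\b\,\Opwm(q_t)$. Concretely: from $(\hhm - \l_m - ih_m\b_m)v_m = o(h_m)$ in $L^{2,\d}$, pairing with $\Opwm(q_t)v_m$ and taking imaginary (or real) parts and using that all these operators are compactly supported in $x$ so the weights are harmless, one gets $\frac{d}{dt}\innp{\Opwm(q_t)v_m}{v_m} = -2\innp{\Opwm((V_2+\b)q_t)v_m}{v_m} + o(1)$, uniformly for $t$ in compact sets. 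Passing to the limit $m\to\infty$ (using \eqref{conv-mesure} applied to $q_t$ and to $(V_2+\b)q_t$, both in $C_0^\infty$), the function $t\mapsto \int q_t\,d\m$ is Lipschitz and satisfies $\frac{d}{dt}\int q_t\,d\m = -2\int (V_2+\b)q_t\,d\m$. Since $\frac{d}{dt}\theta(t) = -2\big((V_2+\b)\circ\vf^t\big)\theta(t)$ and the chain rule gives $\frac{d}{dt}\int q_t\theta(t)\,d\m = 0$; integrating from $0$ to $t$ yields \eqref{propagation-mu}.

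\textbf{Main obstacle.} The delicate point is not the formal flow computation but the justification of the differentiation under the limit: one must commute $\frac{d}{dt}$ with the (only weakly convergent) quantities $\innp{\Opwm(q_t)v_m}{v_m}$ and control the symbolic-calculus remainders uniformly in $m$ and in $t$ on compact intervals. This requires that $q_t$ stays in a fixed bounded subset of $C_0^\infty(\R^{2n})$ as $t$ ranges over a compact set (true, since $\vf^t$ is a smooth flow and $\supp q$ is compact), and careful use of the Calder\'on--Vaillancourt estimate to bound the $O(h_m)$ remainders. A secondary technical care is the treatment of the long-range potential $V_1$: strictly speaking $\Opwm(\{p,q_t\})$ is exact only up to $O(h_m^2)$ corrections because $p$ is not a polynomial symbol, but these are controlled by the standard pseudodifferential calculus since $V_1 \in \symb_0$. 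In part (i), the only subtlety is checking that $a = q/(p - z_m)$ genuinely lies in the symbol class $\Sc(\pppg x^{-2\d})$ with seminorms uniform in $m$ — this follows since $p - z_m$ is elliptic of order $0$ and bounded below on $\supp q$ uniformly for large $m$, and differentiating the quotient only produces lower-order (in $\x$, hence still bounded) terms against the decaying numerator.
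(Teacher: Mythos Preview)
Your treatment of part (i) is essentially the paper's argument: define $a_m = q\langle x\rangle^{2\delta}/(p-z_m)$, observe it lies in a good symbol class uniformly in $m$, and use $\Opwm(a_m)(\hhm-z_m)v_m \to 0$ in $L^{2,-\delta}$. No issue there.

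Part (ii) contains a genuine error. You write that ``$\{p,q_t\}=0$ since $q_t$ is transported by the flow of $p$''. This is false: with $q_t = q\circ\varphi^t$ one has $\partial_t q_t = \{p,q_t\}$, which is the transport equation, not $\{p,q_t\}=0$. So when you differentiate $\langle\Opwm(q_t)v_m,v_m\rangle$ in $t$ you get $\langle\Opwm(\{p,q_t\})v_m,v_m\rangle$, and it is precisely this term that must be converted, via $\frac i{h_m}[\huhm,\Opwm(q_t)]$ and the equation $(\hhm-z_m)v_m=o(h_m)$, into $-2\langle\Opwm((V_2+\beta)q_t)v_m,v_m\rangle+o(1)$. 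With that correction your ODE $\frac d{dt}\int q_t\,d\mu = -2\int (V_2+\beta)\,q_t\,d\mu$ is right, but note that here $V_2+\beta$ is evaluated at $w$, not at $\varphi^t(w)$. Your final ``chain rule'' step then does not close: $\theta(t)$ depends on $w$, so you cannot simply multiply the scalar ODE by $\theta(t)$; you would need to apply the same identity to the symbol $q_t\,\theta(t)$ and compute $\{p,\theta(t)\}$ as well.

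The paper avoids this bookkeeping by building the damping factor into the test symbol from the start: for fixed $t$ it sets
\[
q(\tau,w)=q\big(\varphi^{t-\tau}(w)\big)\exp\Big(-2\int_\tau^t (V_2+\beta)\big(\varphi^{s-\tau}(w)\big)\,ds\Big),
\]
checks that $\partial_\tau q(\tau)=2(V_2+\beta)q(\tau)-\{p,q(\tau)\}$, and then shows directly that $\int\partial_\tau q(\tau)\,d\mu=0$ by passing to the quantum side and writing the integrand as $\frac i{h_m}\big(\Opwm(q(\tau))(\hhm-z_m)-(\hhm^*-\bar z_m)\Opwm(q(\tau))\big)$ paired with $v_m$. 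This gives \eqref{propagation-mu} in one stroke, without separating $q_t$ from $\theta(t)$.
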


\begin{proof} 
{\bf (i)}
For $m$ large enough (such that $\l_m \in \mathring J$) we set:
\[
a_m(x,\x) = \frac {q(x,\x) \pppg x^{2\d}}{p(x,\x)-z_m}.
\]
Since $q$ vanishes on $p\inv(J)$, we have $a_m \in \symb\big(\pppg \x^{-2}\big)$ uniformly in $m$. We can write
\[
\begin{aligned}
\innp{\Opwm(q) v_m}{v_m}
& \leq \nr{\Opwm(q)v_m}_{L^{2,\d}(\R^n)}\\
& \leq \nr{\Opwm(a_m) (\huhm-z)v_m }_{L^{2,-\d}(\R^n)}  + \littleo m \infty(h_m)\\
& \leq \nr{\Opwm(a_m) (\hhm-z)v_m }_{L^{2,-\d}(\R^n)}  + \littleo m \infty(h_m)\\
& \limt m \infty 0,
\end{aligned}
\]
which proves the first assertion. Applied with $q \in C_0^\infty(\R^{2n})$, this proves that $\m$ is supported on $p\inv(\singl E)$.
Now let $\h \in C_0^\infty(\R^n)$ and $\tilde \h \in C_0^\infty(\R^n)$ such that $\R^n \times \supp (1-\tilde \h)$ does not intersect $p\inv (\singl E)$. Then
\[
\innp{\Opwm\big(\h(x)(1-\tilde \h(\x))\big) v_m}{v_m} \limt m \infty 0,
\]
and hence
\begin{align*}
\lim_{m\to\infty} \innp{ \h(x) v_m}{v_m} 
&= \lim_{m\to \infty}\innp{ \Opw(\h(x)\tilde \h (\x)) v_m}{v_m}  =  \int_{\R^{2n}} \h(x)\tilde \h(\x) \, d\m(x,\x)\\
& = \int_{\R^{2n}} \h(x) \, d\m(x,\x).
\end{align*}

\noindent
{\bf (ii)}
Let $q \in C_0^\infty(\R^{2n})$ and $t \geq 0$. For $\t \in [0,t]$ we set
\[
q(\t,w) = q\big(\vf^{t-\t}(w)\big) \exp\left( -2 \int_\t^t (V_2+\b) \big(\vf^{s-\t} (w)\big)\,ds\right) .
\]
Since the union of $\supp (q \circ \vf^{t-\t})$ for $\t \in [0,t]$ is bounded in $\R^{2n}$ we can use differentiation under the integral sign:
\[
\begin{aligned}
\frac d {d\t} \int _{\R^{2n}} q(\t)\,d\m
& = \int _{\R^{2n}}\frac d {d \t}  q(\t)\,d\m \\
& = \int _{\R^{2n}} \big(2(V_2 +\b) q(\t) - \{ p, q(\t)\}\big)\,d\m\\
& = \lim_{m\to \infty}  \innp{ \Opwm\big( 2(V_2 +\b) q(\t) - \{ p, q(\t)\}  \big) v_m}{v_m}\\
& = \lim_{m\to \infty}  \innp{ \left(2 (V_2 + \b_m) \Opwm(q(\t)) - \frac i {h_m} \big[\huhm,\Opwm(q(\t))\big] \right) v_m}{v_m}\\
& = \lim_{m\to \infty} \frac i {h_m}  \innp{ \left(\Opwm(q(\t)) (\hhm-z_m) - (\hhm^*-\bar {z_m}) \Opwm(q(\t))   \right) v_m}{v_m}\\
& = 0.
\end{aligned}
\]
This gives statement (ii). Here we do not have to worry  about decay properties of $v_m$ since we only work with compactly supported symbols.
\end{proof}

We now turn to the proof of Theorem \ref{th-vp}. We first remark that it is easy when $\b > m_-$ since
\[
 \hh-z = \big(\hh  - ihm_- \big) - \big(z-ih m_- \big)
\]
and $\hh  - ih m_- $ is maximal dissipative. This proves that if $\Im z > h m_-$ the resolvent $(\hh-z)\inv$ is well-defined and
\begin{equation} \label{res-non-diss}
\nr{(\hh-z)\inv}_{\Lc(L^2(\R^n))} \leq \frac 1 {\Im z - hm_-}.
\end{equation}

As said in introduction we proceed by contradiction to prove the general case. So we assume we can find sequences $\seq v m \in H^2(\R^n)^\N$, $\seq z m \in \C^\N$ and $\seq h m \in ]0, 1]^\N$ such that
\[
h_m \to 0, \quad  \l_m := \Re z_m \to E, \quad  \b_m := h_m\inv \Im z_m \geq \b,  \quad \nr {v_m}_{L^2(\R^n)} = 1
\]
and
\[
 \nr{(\hhm -z_m) v_m}_{L^2(\R^n)} = \littleo m \infty (h_m).
\]

Since we already have the result for large $\b$, the sequence $\seq \b m$ is necessarily bounded. After extracting a subsequence if necessary, we can assume that $\b_m \to \tilde \b \geq \b$.
Since a bounded sequence in $L^2(\R^n)$ always has a semiclassical measure (see \cite{burq97,evansz}), we can assume after extracting another subsequence that \eqref{conv-mesure} holds for some nonnegative Radon measure $\m$ on $\R^n$. Our purpose is now to prove that $\m$ is both zero and non-zero to get a contradiction.

\begin{proposition} \label{prop-vp-nonzero}
The measure $\m$ is non-zero.
\end{proposition}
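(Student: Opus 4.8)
The plan is to show $\m \neq 0$ by a standard normalization argument combined with part (i) of Proposition \ref{prop-burq-debut}. We know $\nr{v_m}_{L^2(\R^n)} = 1$, so we need to rule out the possibility that all the $L^2$-mass of $v_m$ escapes, either to spatial infinity or into high frequencies. First I would use the equation $\nr{(\hhm - z_m)v_m}_{L^2} = \littleo m\infty(h_m)$ together with $\l_m \to E > 0$ to get a priori control of the frequency localization: writing $v_m = \chi(\hhm) v_m + (1-\chi(\hhm))v_m$ for a suitable cutoff $\chi \in C_0^\infty(\R)$ equal to $1$ near $E$, an elliptic/functional-calculus estimate shows $\nr{(1-\chi(\hhm))v_m}_{L^2} = \littleo m\infty(1)$, so modulo a negligible error $v_m$ is microlocally supported in a compact energy shell $p\inv(J)$. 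By part (i) of Proposition \ref{prop-burq-debut} (or rather its proof, which needs only that $V_2$ is bounded, so it applies verbatim here with $\d=0$), any symbol supported away from $p\inv(\singl E)$ contributes $0$ in the limit.

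The remaining danger is spatial escape to infinity. Here is where I would invoke the propagation identity \eqref{propagation-mu} together with the non-negativity of $\m$ and the structure of the dynamics. Suppose for contradiction $\m = 0$, i.e. $\innp{\Opwm(q)v_m}{v_m} \to 0$ for every $q \in C_0^\infty(\R^{2n})$. The idea, following Burq and Jecko, is to combine this with the control of $v_m$ far from the origin coming from the equation. Concretely, I would fix a large $R$ and write $1 = \nr{v_m}^2 = \innp{\1{\abs x < R} v_m}{v_m} + \innp{\1{\abs x \geq R} v_m}{v_m}$. The first term, after the frequency truncation above, is essentially $\innp{\Opwm(\chi_R) v_m}{v_m}$ for a compactly supported $\chi_R$, hence tends to $0$ if $\m = 0$. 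So it suffices to show that the outgoing-type mass $\limsup_m \innp{\1{\abs x \geq R}v_m}{v_m}$ can be made arbitrarily small by taking $R$ large --- uniformly in the fact that we are at distance $\Im z_m = h_m\b_m$ from the real axis. This is exactly the kind of estimate provided by the a priori bound away from the trapped set; the cleanest route is to use that outside a large ball the operator $\hhm - z_m$ is (semiclassically) elliptic in the ``non-propagating'' directions and to exploit a positive-commutator / escape-function argument at infinity --- but since \eqref{propagation-mu} already encodes transport of $\m$ along the flow with the exponential weight $\exp(-2\int_0^t(V_2+\b)\circ\vf^s\,ds)$, one can instead argue purely at the level of the measure: if $\m$ were nonzero it would have to be supported on $p\inv(\singl E)$ and invariant (up to the weight) under the flow; to contradict $\m = 0$ directly one needs the lower bound on the localized mass, not the measure-level argument.

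Let me restructure: the right proof that $\m \neq 0$ is the contrapositive. Assume $\m = 0$. Then for every $\chi \in C_0^\infty(\R^{2n})$ we have $\innp{\Opwm(\chi)v_m}{v_m} \to 0$. Choose $\theta \in C_0^\infty(\R)$ with $\theta \equiv 1$ near $E$; from the equation and the boundedness of $V_1, V_2$ one gets $v_m = \theta(\hhm)v_m + \littleo m\infty(1)$ in $L^2$. Next choose $\psi_R \in C_0^\infty(\R^{2n})$ equal to $1$ on $B_x(R) \cap p\inv(J)$; then $\theta(\hhm)v_m = \Opwm(\psi_R)\theta(\hhm)v_m + \Opwm(1-\psi_R)\theta(\hhm)v_m$, and the first term has norm $\littleo m\infty(1)$ by $\m = 0$. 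So all the mass of $v_m$ (up to $\littleo m\infty(1)$) lives in $\supp(1-\psi_R) \cap p\inv(J) \subset \zoneS_+(R,d,0) \cup \zoneS_-(R,d,0)$ for suitable $d > 0$ and $R$ large, i.e. in the union of the incoming and outgoing regions. On those regions one uses the classical Robert-Tamura / Isozaki-Kitada type estimate — the semiclassical ellipticity of $(\hhm - z_m)$ on $\zoneS_\pm$, using that $\Im z_m = h_m\b_m > 0$ forces decay of the outgoing part and that the equation forces decay of the incoming part — to conclude $\nr{\Opwm(1-\psi_R)\theta(\hhm)v_m}_{L^2} \to 0$, contradicting $\nr{v_m} = 1$. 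The main obstacle, and the place I expect the real work to be, is precisely this last step: getting the localization in $\zoneS_\pm$ to actually lead to a contradiction requires either a commutator estimate with an escape function adapted to the incoming/outgoing geometry (this is what Jecko's bounded escape function at infinity does in \cite{jecko04}) or the a priori estimate in the incoming region of \cite{robertt89}; since $V_2$ may be negative the semigroup is not contractive, so one must check that the sign of $\Im z_m > 0$ is still enough to kill the outgoing piece, and here one uses that $\b_m \geq \b > 0$ is bounded below, so $\Im z_m$ is of size exactly $h_m$ and not smaller.
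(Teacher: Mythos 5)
There is a genuine gap: you never actually close the argument, and the machinery you reach for is not the one the paper uses here (and is not even available under the hypotheses of Theorem \ref{th-vp}). Your proposal circles around the right observation --- that $\b_m\ge\b>0$ is the crucial hypothesis and that the danger is escape of mass to spatial infinity --- but then invokes Isozaki--Kitada parametrices, Robert--Tamura incoming-region estimates and Jecko's bounded escape function to rule out that escape. None of this is needed (or usable) for Proposition \ref{prop-vp-nonzero}: Theorem \ref{th-vp} assumes only that $V_2$ is smooth with bounded derivatives and tends to $0$ at infinity, not that $V_2$ is of short range, so the incoming-region estimates of Theorem \ref{th-incoming} are simply not available in this setting. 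The escape-function argument is what the paper uses in Proposition \ref{prop-mu-neq0} for Theorem \ref{th-estim}, where $\b_m\to 0$ and a finer argument is forced; here the hypothesis $\b_m\ge\b>0$ makes things elementary.

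The actual argument is one computation. Take the imaginary part of $\innp{(H_{h_m}-z_m)v_m}{v_m}$ and divide by $h_m$; since $\nr{(H_{h_m}-z_m)v_m}=\littleo m\infty(h_m)$ and $\nr{v_m}=1$, this gives
\[
\int_{\R^n}\big(V_2(x)+\b_m\big)\abs{v_m(x)}^2\,dx \limt m\infty 0.
\]
Because $V_2\to 0$ at infinity, there is $R$ with $V_2\ge-\b/2$ on $B_R^c$, so (for $m$ large, using $\b_m\ge\b$) one has $V_2+\b_m\ge\b/2>0$ on $B_R^c$. Hence $\int_{B_R^c}\abs{v_m}^2\to 0$, and consequently for $\h\in C_0^\infty$ equal to $1$ on $B_R$ the quantity $\int\h\abs{v_m}^2$ stays bounded below by a positive constant; passing to the limit via Proposition \ref{prop-burq-debut}(i) gives $\int\h\,d\m>0$. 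That is the whole proof. Your draft identifies the ingredients ($\Im z_m$, the equation, the a priori $L^2$-normalization) but never combines them; instead it substitutes a long detour through propagation estimates that you acknowledge you cannot complete and that rely on hypotheses the theorem does not grant.
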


\begin{proof}
As $V_2$ goes to 0 at infinity, there exists $R \geq 0$ such that $V_2(x) \geq - \frac \b 2$ for all $x \in B_R^c$. We have
\[
\int _{\R^n} (V_2(x) + \b_m)  \abs{v_m(x)}^2\,dx = -h_m \inv  \Im  \innp{(\hhm-z_m) v_m}{v_m}_{L^2(\R^n)} \limt m \infty 0
\]
and hence, for $\h \in C_0^\infty(\R^n,[0,1])$ equal to 1 on $B_R$,
\begin{align*}
\frac \b 2
& =  \frac \b 2 \int_{\R^n} (1-\h(x)) \abs{v_m(x)}^2\,dx+ \frac \b 2 \int_{\R^n} \h(x) \abs{v_m(x)}^2\,dx \\
& \leq \int_{\R^n} (V_2(x) + \b_m) (1-\h(x)) \abs{v_m(x)}^2\,dx+ \frac \b 2 \int_{\R^n} \h(x) \abs{v_m(x)}^2\,dx \\
& \leq   \int_{\R^n}  \left( \frac \b 2 - V_2(x) - \b_m\right) \h(x) \abs{v_m(x)}^2\,dx + \littleo m {+\infty} (1) \\
& \leq  \left( \frac \b 2 +m_- \right) \int_{\R^n} \h(x) \abs{v_m(x)}^2\,dx + \littleo m {+\infty} (1) .
\end{align*}
%
%
%
%
%
%
This proves that

\[
\int _{\R^{2n}} \h(x) \, d\m(x,\x) = \lim_{m \to \infty} \int_{\R^n} \h(x) \abs {v_m(x)}^2 \, dx  \neq 0.
\]
\end{proof}

We now prove that $\m$ is actually zero. Note that by Proposition \ref{prop-burq-debut} we already know that $\m$ is supported on $p\inv(\singl E)$.

\begin{proposition} \label{prop-masse-finie}
The total measure of $\m$ is finite.
\end{proposition}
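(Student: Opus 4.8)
\textbf{Proof plan for Proposition \ref{prop-masse-finie}.}

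The strategy is to exploit the propagation formula \eqref{propagation-mu} together with the strong damping property of Proposition \ref{prop-gros-amort}, the key point being that $\m$ is carried by $p\inv(\singl E)$, hence by $\O_b(\singl E) \sqcup \O_\infty^+(\singl E) \sqcup \O_\infty^-(\singl E)$. First I would note that for $q \in C_0^\infty(\R^{2n})$ with $0 \leq q \leq 1$, formula \eqref{propagation-mu} gives, for every $t \geq 0$,
\[
\int_{\R^{2n}} q\, d\m = \int_{\R^{2n}} (q\circ \vf^t)\exp\left(-2\int_0^t (V_2+\tilde\b)\circ\vf^s\, ds\right) d\m \leq \int_{\R^{2n}} \exp\left(-2\tilde\b t - 2\int_0^t V_2\circ\vf^s\, ds\right) d\m,
\]
so the whole game is to bound the right-hand side uniformly in $t$ on each piece of the support of $\m$. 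On the trapped set the exponent is controlled by Proposition \ref{prop-gros-amort}: for $w \in \O_b^+(\singl E)$ one has $\int_0^t V_2\circ\vf^s(w)\,ds \geq c_0 t - C$, so the integrand on a compact $K \subset \O_b^+(\singl E)$ is $\leq e^{2C} e^{-2(\tilde\b + c_0)t} \to 0$. On $\O_\infty^-(\singl E)$ one uses instead that $\vf^t(w) \to \infty$ as $t\to+\infty$, so $q\circ\vf^t(w) = 0$ for $t$ large; and since $V_2 \to 0$ at infinity, the integral $\int_0^t V_2\circ\vf^s\,ds$ stays bounded below along such a trajectory, so the integrand is $\leq C' q\circ\vf^t(w)$ and tends to $0$. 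The set $\O_\infty^+(\singl E)$ contributes nothing directly to the limit but must be handled when one lets $t\to+\infty$.

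The way I would organize the bound: pick an exhaustion of $\R^{2n}$ by the balls $B_x(R)$, and for fixed $q \in C_0^\infty(\R^{2n})$, $0 \leq q \leq 1$, estimate $\int q\,d\m$ by splitting the time-$t$ integral according to whether $\bar x(t,w) \in B_x(\Rc)$ or not, exactly as in Proposition \ref{prop-gros-amort-2}. Choose $\Rc$ large enough that $\supp q \subset B_x(\Rc)$ and that \eqref{escape-rad} holds. Applying \eqref{propagation-mu} with the (compactly supported) integrand, on the region $\{\bar x(t,\cdot) \notin B_x(\Rc)\}$ the factor $q\circ\vf^t$ already vanishes; on the region $\{\bar x(t,\cdot)\in B_x(\Rc)\}$ Proposition \ref{prop-gros-amort-2} applied to $\tilde K = p\inv(J) \cap B_x(\Rc)$ gives $\int_0^t V_2\circ\vf^s\,ds \geq c_0 t - C$ with $c_0, C$ independent of $w$, whence the exponential factor is $\leq e^{2C}e^{-2c_0 t}$. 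Therefore, for every $t \geq 0$,
\[
\int_{\R^{2n}} q\, d\m \leq e^{2C} e^{-2(c_0+\tilde\b)t} \int_{\R^{2n}} (q\circ\vf^t)\, d\m \leq e^{2C} e^{-2(c_0+\tilde\b)t}\, \m\big(p\inv(J)\cap B_x(\Rc)\big),
\]
which, letting $t \to +\infty$, would force $\int q\,d\m = 0$ — but that is the wrong conclusion, so in fact the right inequality to chase is in the other direction: one should run \eqref{propagation-mu} to express $\int (q\circ\vf^{-t})\,d\m$ (or rather a cutoff of it) in terms of $\int q\,d\m$, so that the growing mass of $\m$ near infinity is captured by a fixed quantity. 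Concretely, apply \eqref{propagation-mu} with $q$ replaced by a cutoff $\h_R \in C_0^\infty$ equal to $1$ on $B_x(R)$, run the flow for a well-chosen finite time $t = t(R)$, and observe that $\h_R \circ \vf^{t} \cdot (\text{exponential factor})$ is then bounded by a fixed compactly supported function times a constant independent of $R$; this yields $\m(B_x(R)) \leq C''$ uniformly in $R$, hence $\m(\R^{2n}) \leq C'' < \infty$.

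\textbf{Main obstacle.} The delicate point is the boundary between $\O_b(\singl E)$ and $\O_\infty^\pm(\singl E)$: a trajectory can spend an arbitrarily long but finite time near the trapped set before escaping, and during that time $\int_0^s V_2\circ\vf^s$ need not yet be positive. This is precisely why Proposition \ref{prop-gros-amort-2} (rather than the simpler Proposition \ref{prop-gros-amort}) is the right tool — it gives the dichotomy ``either the damping integral is already large, or the point has left $B_x(\Rc)$'' with constants uniform on the whole of $p\inv(J)\cap B_x(\Rc)$. The second subtlety is bookkeeping the choice of the finite propagation time $t(R)$ as a function of $R$ so that the escape of mass to infinity is exactly compensated; here one uses Proposition \ref{prop-non-entr} to guarantee that once a point of $p\inv(\singl E)$ sits in an outgoing region far from the origin, $\vf^t$ pushes it out of $B_x(R)$ after a time linear in $R$, so $t(R)$ can be taken $\sim R$ and the exponential gain $e^{-2(c_0+\tilde\b)t(R)}$ dominates any polynomial growth of $\m(B_x(R))$ that could a priori occur.
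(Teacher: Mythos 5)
Your proposal misses the simple argument and, more importantly, is circular in a way that cannot be repaired. The paper's proof of Proposition \ref{prop-masse-finie} does not use the propagation formula \eqref{propagation-mu} at all: it observes that in the setting of Theorem \ref{th-vp} one has $\d=0$, so $\nr{v_m}_{L^2(\R^n)}=1$ (unweighted), and the Calder\'on--Vaillancourt theorem gives $\nr{\Opwm(q)}_{\Lc(L^2(\R^n))} \leq C\nr q _{L^\infty} + O(\sqrt{h_m})$ with $C$ depending only on the dimension. Passing to the limit yields $\int q\,d\m \leq C$ for every $q\in C_0^\infty(\R^{2n},[0,1])$, uniformly in $q$, and taking $q$ equal to $1$ on balls of radius $k$ gives $\m(\R^{2n})\leq C$. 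The whole point is that this bound is uniform over compactly supported $q$ and comes from the $L^2$-normalization; notice that in Section~5, where only $\nr{v_m}_{L^{2,-\d}}=1$ is available for $\d>1/2$, the paper explicitly concedes that the total mass of $\m$ is no longer necessarily finite, so the dynamics alone cannot produce finiteness.

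The gap in your approach: the propagation formula \eqref{propagation-mu} is established only for $q\in C_0^\infty(\R^{2n})$. When you replace $q$ by $\h_R$ (equal to $1$ on $B_x(R)$) and flow for a time $t(R)$, the resulting identity expresses $\m(B_x(R))$ as an integral over $\vf^{-t(R)}(\supp\h_R)$, a set which is unbounded as $R\to\infty$. Dominating that integral by $\sup(\text{exponential factor})\cdot\m\big(\vf^{-t(R)}(\supp\h_R)\big)$ requires exactly the a priori control on $\m$ at infinity that you are trying to establish; without it, the gain $e^{-2(c_0+\tilde\b)t(R)}$ could in principle be swamped by growth of the (unknown) measure near infinity. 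This is the circularity: Propositions \ref{prop-vp-infty} and \ref{prop-vp-ob} run precisely this mechanism, and both of them invoke $\m(\R^{2n})<\infty$ as an input to justify the $\sup$-bound, so you cannot use it to produce the finiteness in the first place. You also noticed midway that your chain of inequalities, taken at face value, gives $\int q\,d\m=0$ for every $q$ --- that is indeed the correct endpoint of the whole contradiction argument of this section, but it is reached \emph{after} Proposition \ref{prop-masse-finie} and cannot serve as its proof.
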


\begin{proof}
Let $q \in C_0^\infty(\R^{2n},[0,1])$. We have
\[
 \nr{\Opw(q)}_{\Lc(L^2(\R^n))} \leq C \nr q _{L^\infty(\R^{2n})} + \bigo m \infty \big( \sqrt {h_m} \big) 
\]
where $C$ only depends on the dimension $n$ (see for instance Theorem 5.1 in \cite{evansz}), and hence:
\[
\int_{\R^{2n}} q  \, d\m = \lim _{m\to \infty} \innp{\Opwm (q ) v_m}{v_m} \leq  \limsup_{m \to \infty } C\nr q _{L^\infty(\R^{2n})} \nr{v_m}_{L^2(\R^n)}^2 \leq C.
\]
Considering $q$ equal to 1 on the ball $B_{\R^{2n}}(k)$ of radius $k \in\N$ in $\R^{2n}$ proves that ${\m(B_{\R^{2n}}(k))\leq C}$ for all $k\in\N$. 
\end{proof}

\begin{proposition}  \label{prop-vp-infty}
$\m = 0$ on $\O_\infty^-(\singl E)$.
\end{proposition}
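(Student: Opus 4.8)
The plan is to use the propagation formula \eqref{propagation-mu} together with the dynamical estimate from Proposition \ref{prop-gros-amort} (applied to backward trajectories, which is legitimate since $\m$ is supported on $p\inv(\singl E)$ and we will test against symbols supported near $\O_\infty^-(\singl E)$). Recall that $\O_\infty^-(\singl E)$ consists of points $w$ whose backward trajectory $\bar x(-t,w)$ escapes to infinity; by time-reversal symmetry of the escape-function argument, a point in $\O_\infty^-$ is in $\O_b^+$ or $\O_\infty^+$ in forward time, so that $\O_\infty^-(\singl E) \subset \O_b^+(\singl E) \cup \O_\infty^+(\singl E)$. Actually, the cleanest route is to note $\O_\infty^-(\singl E) = p\inv(\singl E) \setminus \O_b^-(\singl E)$, hence $\O_\infty^-(\singl E) \subset \O_\infty^+(\singl E) \cup \O_b^+(\singl E)$ where the two pieces are handled separately.

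First I would fix $q \in C_0^\infty(\R^{2n},[0,1])$ supported in a small neighborhood of an arbitrary point $w_0 \in \O_\infty^-(\singl E)$; it suffices to show $\int q\, d\m = 0$ for all such $q$, since $\O_\infty^-(\singl E)$ is covered by such neighborhoods (and $\m$ is finite by Proposition \ref{prop-masse-finie}, so a partition-of-unity/monotone-convergence argument closes the gap). Apply \eqref{propagation-mu}: for every $t \geq 0$,
\[
\int_{\R^{2n}} q \, d\m = \int_{\R^{2n}} (q\circ\vf^t)\, \exp\left(-2\int_0^t (V_2+\tilde\b)\circ\vf^s\, ds\right) d\m.
\]
Changing perspective, since $\m$ is invariant-ish under this weighted transport, I can equivalently write the integrand as supported where $\vf^t(w)$ lies near $w_0$, i.e. where $w$ lies near $\vf^{-t}(w_0)$. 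The key point: for $w$ in a neighborhood of a point of $\O_\infty^-(\singl E)$, by Proposition \ref{prop-gros-amort} applied with $\pm = -$ (using that, near $w_0$, points are in a compact subset of $\O_b^-(J)$ — wait, $w_0 \in \O_\infty^-$, not $\O_b^-$), the relevant estimate is instead Proposition \ref{prop-non-entr} or the direct observation that for $w_0 \in \O_\infty^-(\singl E)$ there is $T_0$ with $\abs{\bar x(-T_0,w_0)}$ large, so $\vf^{-T_0}(w_0)$ lies in an incoming region where the weight $\exp(-2\int_0^t(V_2+\tilde\b)\circ\vf^s)$ can be made small by taking $t$ large, because $\tilde\b \geq \b > 0$ and $V_2 \to 0$ along escaping trajectories so that $\int_0^t(V_2+\tilde\b)\circ\vf^s\, ds \to +\infty$.

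More precisely, the main mechanism: pick $w_0 \in \O_\infty^-(\singl E)$ and a neighborhood $\Vc$ of $w_0$. For $w \in \Vc$ small enough, the forward trajectory (of the time-reversed picture) satisfies $\abs{\bar x(-s,w)} \to \infty$ OR the trapped-case estimate of Proposition \ref{prop-gros-amort} applies; in either case $\int_0^t (V_2+\tilde\b)(\bar x(-s,w))\, ds \geq \tilde\b t - C$ on the trapped part (since $V_2 \geq -m_-$ and additionally the average bound) and $\geq \tilde\b t - o(t)$ on the escaping part (since $V_2(\bar x(-s,w)) \to 0$). Hence the exponential weight is bounded by $C e^{-2\tilde\b t} \to 0$. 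Since $q$ is bounded and $\m$ is finite, dominated convergence gives $\int q\, d\m = \lim_{t\to\infty} \int (q\circ\vf^t)\, e^{-2\int_0^t(V_2+\tilde\b)\circ\vf^s\, ds}\, d\m = 0$. Summing over a countable cover of $\O_\infty^-(\singl E)$ and using finiteness of $\m$ yields $\m(\O_\infty^-(\singl E)) = 0$.

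The main obstacle I anticipate is handling the boundary behavior uniformly — namely, the neighborhood $\Vc$ of $w_0$ will generally contain points of both $\O_\infty^-$ and $\O_b^-$, and one needs the lower bound $\int_0^t(V_2+\tilde\b)\circ\vf^{-s}\, ds \geq \tilde\b t - C$ to hold \emph{uniformly} for all $w$ in a fixed neighborhood of $w_0$, with constants independent of $w$. This is exactly the type of uniform-in-$w$ statement that Proposition \ref{prop-gros-amort-2} is designed to provide (it controls the time a trajectory near the boundary $\partial\O_b$ can spend before escaping), so the resolution is to invoke that proposition rather than Proposition \ref{prop-gros-amort} directly: it guarantees that for any compact $\tilde K$ and any radius $\Rc$, either the absorption integral beats $c_0 t - C$ or the trajectory has already left $B_x(\Rc)$, and combining this with $\tilde\b > 0$ and $V_2 \to 0$ at infinity produces the uniform decay of the weight. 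A secondary (routine) technical point is justifying differentiation/passage to the limit with non-compactly-supported cutoffs when summing the partition of unity, which is handled by the finiteness of $\m$ established in Proposition \ref{prop-masse-finie}.
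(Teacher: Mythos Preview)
Your core mechanism is correct and matches the paper: apply the propagation formula \eqref{propagation-mu}, rewrite the weight in terms of the backward trajectory of a point $v \in \supp q$, and use that $\tilde\b \geq \b > 0$ together with $V_2(\bar x(-s,v)) \to 0$ to force the weight to decay, then conclude via Proposition \ref{prop-masse-finie}. That is exactly what the paper does.

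However, the ``main obstacle'' you describe is artificial, and resolving it via Proposition \ref{prop-gros-amort-2} is unnecessary. The paper observes (in Section 2) that $\O_b^\pm(J)$ is closed in $\R^{2n}$, so $\O_\infty^-(\R_+^*)$ is \emph{open}. Hence one simply takes $q \in C_0^\infty(\R^{2n},[0,1])$ supported in $\O_\infty^-(\R_+^*)$; then \emph{every} point of $\supp q$ has an escaping backward trajectory, and by compactness the escape is uniform: there is $T$ with $V_2(\bar x(-s,v)) + \b \geq \b/2$ for all $v \in \supp q$ and $s \geq T$. The integral of the weight then grows at least like $\b(t-T)/2$ and we are done. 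No neighborhood of $w_0 \in \O_\infty^-(\singl E)$ small enough to sit inside the open set $\O_\infty^-(\R_+^*)$ ever meets $\O_b^-$, so there is no boundary behavior to control, no need for a partition-of-unity summation, and no need for either Proposition \ref{prop-gros-amort} or Proposition \ref{prop-gros-amort-2} at this stage (those are used only for $\O_b^-$ in the next proposition). Your route via Proposition \ref{prop-gros-amort-2} could in principle be made to work, but note that its conclusion only locates $\bar x(\pm t,w)$ at the single time $t$, not on the whole interval $[0,t]$, so turning it into a lower bound on $\int_0^t(V_2+\tilde\b)\circ\vf^{-s}\,ds$ would still require an extra argument.
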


\begin{proof}
Let $q \in C_0^\infty(\R^{2n},[0,1])$ supported in $\O_\infty^-(\R_+^*)$. There exists $T \geq 0$ such that for $w \in \supp q$ and $s \geq T$ we have $V_2(\bar x (-s,w)) + \b \geq \frac \b 2$. Put
\[
C =  \sup_{ \supp q}  \exp \left(- 2 \int_0^T (V_2 +\b)\circ \vf^{-s} \, ds \right).
\]
According to \eqref{propagation-mu} and Proposition \ref{prop-masse-finie}, we have for all $t \geq T$:
\begin{align*}
0 \leq \int_{\R^{2n}} q \, d\m
& = \int _{\R^{2n}} (q\circ \vf^t) \exp \left( -2\int_0^t (V_2 + \b) \circ \vf^{t-s} \, ds \right) \, d\m\\
& \leq \m(\R^{2n}) \sup_{ \supp q} \exp \left( -2\int_0^t (V_2 + \b) \circ  \vf^{-s} \, ds\right) \\
& \leq C \m(\R^{2n}) \exp \left(-(t-T)  \b  \right) \\
& \limt t \infty 0.
\end{align*}
This implies that the integral of $q$ is zero and proves the proposition.
\end{proof}

\begin{proposition} \label{prop-vp-ob}
$\m = 0$ on $\O_b^-(\singl{E})$ and hence on $\R^{2n}$.
\end{proposition}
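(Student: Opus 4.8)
The plan is to exploit the propagation formula \eqref{propagation-mu} together with the averaged damping estimate of Proposition \ref{prop-gros-amort} on trapped trajectories, exactly as in Proposition \ref{prop-vp-infty} but now using the bound in the past on $\O_b^-$. First I would fix $q \in C_0^\infty(\R^{2n},[0,1])$ supported in $\O_b^-(J)$ for a small neighborhood $J$ of $E$ on which assumption \eqref{hyp-amfaible} still holds (such $J$ exists by the openness statement, the Proposition right before \ref{prop-gros-amort}). Since $\m$ is carried by $p\inv(\singl E) \subset \O_b^-(\singl E)$, it suffices to kill the mass of every such $q$: the compact set $K = \supp q \cap \O_b^-(J)$ (one should intersect with a slightly smaller closed set to stay inside $\O_b^-$; alternatively note $\m$ is supported on $p^{-1}(\{E\})$ so one may take $q$ supported in $\O_b^-(\{E\})$ and thicken) is a compact subset of $\O_b^-(J)$, so Proposition \ref{prop-gros-amort} gives $c_0>0$, $C\geq 0$ with
\[
\forall t \geq 0,\ \forall w \in K, \quad \int_0^t (V_2 \circ \vf^{-s})(w)\, ds \geq c_0 t - C.
\]

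Next I would plug this into \eqref{propagation-mu}. For $t\geq 0$, using $\b_m\to\tilde\b\geq\b\geq 0$ one has $(V_2+\b)\circ\vf^{-s} \geq (V_2\circ\vf^{-s})$ pointwise, hence on $\supp(q\circ\vf^t)$, which after the change of variables in \eqref{propagation-mu} is where the trajectories $\vf^{-s}$ of points of $\supp q$ live, we get
\[
0 \leq \int_{\R^{2n}} q\, d\m = \int_{\R^{2n}} (q\circ\vf^t)\exp\Big(-2\int_0^t (V_2+\b)\circ\vf^{t-s}\,ds\Big)\, d\m \leq \m(\R^{2n})\, e^{2C}\, e^{-2c_0 t} \limt t \infty 0,
\]
where finiteness of $\m(\R^{2n})$ comes from Proposition \ref{prop-masse-finie}. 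Note the integral in the exponent is along a trajectory issued backward from a point of $\supp q$, so the estimate on $K$ applies after checking (via Section 2, compactness and invariance of $\O_b^\pm$) that all the relevant trajectory segments stay in a fixed compact piece of $\O_b^-(J)$ on which Proposition \ref{prop-gros-amort} was applied — this is the one bookkeeping point to be careful about, and it is handled precisely as in the boundedness discussion of $\{\vf^{\pm t}(w): t\geq 0, w\in\Bc_\pm\}$ in Section 2. Therefore $\int q\, d\m = 0$ for all such $q$, so $\m = 0$ on $\O_b^-(\singl E)$.

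Combining with Proposition \ref{prop-vp-infty}, which gives $\m = 0$ on $\O_\infty^-(\singl E)$, and the decomposition $p\inv(\singl E) = \O_b^-(\singl E) \sqcup \O_\infty^-(\singl E)$ from Section 2, together with the fact (Proposition \ref{prop-burq-debut}(i)) that $\m$ is supported in $p\inv(\singl E)$, we conclude $\m = 0$ on all of $\R^{2n}$. This contradicts Proposition \ref{prop-vp-nonzero}, and the contradiction proves Theorem \ref{th-vp}. The main obstacle I anticipate is purely technical: justifying that the exponential weight can be bounded uniformly on $\supp q$ using the averaged-damping estimate, i.e. that the trajectory segments appearing in \eqref{propagation-mu} all lie in a fixed compact subset of $\O_b^-(J)$ where Proposition \ref{prop-gros-amort} is valid; once that is in place the argument is a one-line limit as $t\to\infty$.
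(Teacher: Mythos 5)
Your proposal is correct and follows essentially the same route as the paper: plug the averaged-damping estimate of Proposition \ref{prop-gros-amort} on the compact $\supp q \cap \O_b^-(\singl E)$ into the propagation identity \eqref{propagation-mu}, use finiteness of $\m(\R^{2n})$, and let $t\to\infty$. The bookkeeping point you flag is actually a non-issue and the thickening of $q$ is unnecessary: since $\m$ is supported on the compact $\O_b^-(\singl E)$ (by Propositions \ref{prop-burq-debut}(i) and \ref{prop-vp-infty}), one can take an arbitrary $q\in C_0^\infty(\R^{2n},[0,1])$ and apply Proposition \ref{prop-gros-amort} with $K=\supp q\cap\O_b^-(\singl E)$, the $t$-uniform estimate on $K$ already encoding all the required control of the backward trajectories.
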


\begin{proof}
We follow the idea of the previous proof, now using the absorption assumption on trapped trajectories. Let $q \in C_0^\infty(\R^{2n},[0,1])$. Since $\m$ is supported on  $\O_b^-(\singl{E})$ we have for all $t \geq 0$:
\begin{align*}
0 \leq \int_{\R^{2n}} q \, d\m
& = \int _{\R^{2n}} (q\circ \vf^t) \exp \left( -2\int_0^t (V_2 + \b) \circ \vf^{t-s} \, ds\right) \, d\m\\
& \leq \m(\R^{2n}) \sup_{\O_b^-(\singl E) \cap \supp q} \exp \left(-2 \int_0^t (V_2 + \b) \circ \vf^{-s} \, ds\right) .
\end{align*}
Now using Proposition \ref{prop-gros-amort}, we can conclude that the integral of $q$ is zero.
\end{proof}

Propositions \ref{prop-vp-nonzero} and \ref{prop-vp-ob} give the contradiction which proves Theorem \ref{th-vp}. We remark that assumption \eqref{hyp-amfaible} is stronger than necessary to prove Proposition \ref{prop-vp-ob}, since we did not use the fact that $\Im z_m$ is allways greater that $\b$. We can actually prove the following result:

\begin{corollary} \label{cor-vp}
Let $E > 0$ and $\b > 0$ such that
\[
\forall w \in \O_b(\singl E) , \exists T \geq 0, \quad \int_0^T \big(V_2(\bar x (-s,w)) +\b \big)   \, ds > 0.
\]
Then there exist a neighborhood $I$ of $E$, $h_0 > 0$ and $c \geq 0$ such that for $h \in ]0,h_0]$ the operator $\hh$ has no eigenvalue in $\C_{I,h\b}$ and
\[
\forall z \in \C _{I,h\b}, \quad \nr{(\hh-z)\inv}_{\Lc(L^2(\R^n))} \leq \frac c {h}.
\]
\end{corollary}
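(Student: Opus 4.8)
The plan is to re-run the contradiction scheme of Theorem~\ref{th-vp} almost verbatim, only replacing the uniform lower bound $\Im z_m \geq h_m\beta$ by the weaker hypothesis of the corollary. So I would suppose the conclusion fails and produce sequences $\seq v m \in H^2(\R^n)^\N$, $\seq z m\in\C^\N$, $\seq h m\in]0,1]^\N$ with $h_m\to 0$, $\l_m=\Re z_m\to E$, $\nr{v_m}_{L^2}=1$, $\nr{(\hhm-z_m)v_m}_{L^2}=\littleo m\infty(h_m)$, and now $\b_m:=h_m\inv\Im z_m\in[\beta,+\infty[$ (the case $\Im z>hm_-$ being handled by the dissipativity estimate \eqref{res-non-diss}, so that $\seq\b m$ may be taken bounded, hence convergent to some $\tilde\beta\geq\beta$ after extraction). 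Extracting once more, I get a semiclassical measure $\m$ satisfying \eqref{conv-mesure}, and the goal is again to show $\m$ is both nonzero and zero.

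For nonzero: the argument of Proposition~\ref{prop-vp-nonzero} goes through unchanged, because it only uses $\b_m\geq\beta$ (to get $V_2(x)+\b_m\geq\beta/2$ outside a fixed ball via $V_2\to 0$ at infinity) and the energy-identity $\int(V_2+\b_m)\abs{v_m}^2 = -h_m\inv\Im\innp{(\hhm-z_m)v_m}{v_m}\to 0$; so $\int\h\,d\m\neq 0$ for a cutoff $\h$ equal to $1$ on that ball, whence $\m\neq 0$. For finiteness of the total mass, Proposition~\ref{prop-masse-finie} uses only the Calder\'on--Vaillancourt bound and $\nr{v_m}_{L^2}=1$, so it still holds. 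By Proposition~\ref{prop-burq-debut}(i), $\m$ is carried by $p\inv(\singl E)$, and by \eqref{propagation-mu} (with $\beta$ replaced by $\tilde\beta$, since the propagation identity was proved with $\b_m\to\beta$ and the same proof works with limit $\tilde\beta$) we have for $q\in C_0^\infty(\R^{2n})$ and $t\geq 0$
\[
\int_{\R^{2n}} q\,d\m = \int_{\R^{2n}} (q\circ\vf^t)\exp\!\Big(-2\int_0^t (V_2+\tilde\beta)\circ\vf^{s}\,ds\Big)\,d\m.
\]
On $\O_\infty^-(\singl E)$ the weight decays like $e^{-(t-T)\beta}$ (using $\tilde\beta\geq\beta>0$ and that $V_2+\tilde\beta\geq\beta/2$ along the trajectory for large backward time, exactly as in Proposition~\ref{prop-vp-infty}), so $\m=0$ there. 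On $\O_b^-(\singl E)$, the hypothesis of the corollary says precisely that for every $w\in\O_b(\singl E)$ there is $T\geq 0$ with $\int_0^T (V_2(\bar x(-s,w))+\beta)\,ds>0$; since $\tilde\beta\geq\beta$ this a fortiori holds with $\tilde\beta$ in place of $\beta$, so running the compactness-and-induction machine of Proposition~\ref{prop-gros-amort} with the function $V_2+\tilde\beta$ (which satisfies exactly the weak damping hypothesis \eqref{hyp-amfaible} relative to this modified absorption) yields constants $c_0>0$, $C\geq 0$ with $\int_0^t(V_2+\tilde\beta)\circ\vf^{-s}\,ds\geq c_0t-C$ on the relevant compact set; hence the weight in the propagation identity tends to $0$ as $t\to+\infty$ and $\m=0$ on $\O_b^-(\singl E)$ too. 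Since $p\inv(\singl E)=\O_b^-(\singl E)\sqcup\O_\infty^-(\singl E)$, we get $\m=0$, contradicting $\m\neq 0$. This proves the operator has no eigenvalue in $\C_{I,h\beta}$ for $h$ small and $I$ a small neighborhood of $E$; the resolvent bound $\nr{(\hh-z)\inv}\leq c/h$ then follows by the same contradiction argument applied to putative approximate eigenfunctions, i.e.\ it is built into the scheme.

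The only genuinely new point, and thus the main thing to be careful about, is observing that the hypothesis of the corollary is exactly the statement that $\tilde V_2:=V_2+\beta$ (or $V_2+\tilde\beta$) satisfies the weak damping assumption \eqref{hyp-amfaible} along backward trajectories, so that Propositions~\ref{prop-gros-amort} and the remark following \eqref{am-diss-past} apply with $V_2$ replaced by $\tilde V_2$; once that substitution is made, nothing else changes. There is no real obstacle: the work has all been done in Section~3, and this corollary is essentially a bookkeeping remark noting that the proof of Theorem~\ref{th-vp} never used $\Im z_m\geq h_m\beta$ except through the combination $V_2+\b_m$ appearing in the energy identity and the propagation formula. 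I would state this explicitly and refer back to the proofs of Propositions~\ref{prop-vp-nonzero}, \ref{prop-masse-finie}, \ref{prop-vp-infty}, \ref{prop-vp-ob}.
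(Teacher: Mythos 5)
Your proposal is correct and follows essentially the same route as the paper: the paper itself presents Corollary~\ref{cor-vp} merely as a remark, observing that the proof of Theorem~\ref{th-vp} (in particular Proposition~\ref{prop-vp-ob}) only ever uses the damping condition through the combination $V_2+\b_m$ appearing in the propagation identity, so that the weaker hypothesis on $V_2+\b$ suffices. You correctly identify the key substitution (applying Proposition~\ref{prop-gros-amort} to $V_2+\tilde\b$, with $\tilde\b\geq\b$ so the hypothesis is inherited) and verify that Propositions~\ref{prop-vp-nonzero}, \ref{prop-masse-finie} and \ref{prop-vp-infty} go through unchanged.
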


As explained before Corollary \ref{cor-estim}, we obtain results on the lower half-plane under the more ``symmetric'' non-trapping condition:

\begin{corollary}
Let $E > 0$ be a non-trapping energy and $\b > 0$. Then there exist a neighborhood $I$ of $E$, $h_0 > 0$ and $c \geq 0$ such that for $h \in ]0,h_0]$, $\Re z \in I$ and $\abs {\Im z } \geq h \b$ the operator $(\hh-z)$ has a bounded inverse and
\[
\nr{(\hh-z)\inv}_{\Lc(L^2(\R^n))} \leq \frac c {h}.
\]
\end{corollary}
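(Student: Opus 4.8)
The plan is to derive this corollary from Corollary \ref{cor-vp} together with the symmetry that the non-trapping condition provides. First I would note that for the upper half-plane part, $\Im z \geq h\b$, we apply Corollary \ref{cor-vp} directly: since $E$ is non-trapping, the set $\O_b(\singl E)$ is empty (every trajectory of energy $E$ escapes to infinity in both time directions), so the hypothesis of Corollary \ref{cor-vp} is vacuously satisfied for \emph{any} $\b > 0$. This immediately yields a neighborhood $I^+$ of $E$, $h_0^+ > 0$ and $c^+ \geq 0$ such that $\hh$ has no eigenvalue with $\Re z \in I^+$, $\Im z \geq h\b$ and the resolvent is bounded by $c^+/h$ there.

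Next I would handle the lower half-plane $\Im z \leq -h\b$. The key observation is that $\hh^* = -h^2\D + V_1 + ihV_2$, so $(\hh - z)^* = \hh^* - \bar z$. Taking adjoints, $(\hh - z)$ is boundedly invertible with $\nr{(\hh-z)\inv} \leq c/h$ if and only if $(\hh^* - \bar z)$ is boundedly invertible with the same bound. Now $\hh^*$ is obtained from $\hh$ by replacing $V_2$ with $-V_2$. Since $E$ is non-trapping, $\O_b(\singl E) = \vide$, so the hypothesis of Corollary \ref{cor-vp} holds for $\hh^*$ as well (with $-V_2$ in place of $V_2$), again vacuously. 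Applying Corollary \ref{cor-vp} to $\hh^*$ gives a neighborhood $I^-$, $h_0^- > 0$, $c^- \geq 0$ so that $\hh^*$ has no eigenvalue with $\Re w \in I^-$, $\Im w \geq h\b$ and $\nr{(\hh^* - w)\inv} \leq c^-/h$. Setting $w = \bar z$, this translates into: $\hh$ has no eigenvalue with $\Re z \in I^-$, $\Im z \leq -h\b$, and $\nr{(\hh-z)\inv} \leq c^-/h$ there (using that $z$ is an eigenvalue of $\hh$ iff $\bar z$ is an eigenvalue of $\hh^*$, and the adjoint identity for the resolvent norm).

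Finally I would combine the two pieces: take $I = I^+ \cap I^-$, $h_0 = \min(h_0^+, h_0^-)$, $c = \max(c^+, c^-)$. Then for $h \in ]0,h_0]$, $\Re z \in I$ and $\abs{\Im z} \geq h\b$ — which forces either $\Im z \geq h\b$ or $\Im z \leq -h\b$ — the operator $(\hh-z)$ has a bounded inverse with $\nr{(\hh-z)\inv}_{\Lc(L^2(\R^n))} \leq c/h$, as claimed.

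I do not expect any serious obstacle here; the corollary is essentially bookkeeping on top of Corollary \ref{cor-vp}. The one point requiring a little care is the adjoint argument: one must check that $\hh^*$ is indeed the Schr\"odinger operator with potential $V_1 - i h(-V_2) = V_1 + ihV_2$ on the same domain $H^2(\R^n)$ (true since $V_1, V_2$ are bounded), and that the spectral and resolvent statements transfer cleanly under $z \mapsto \bar z$ and taking adjoints. Once that is in place, the non-trapping hypothesis does all the work by making the damping assumption \eqref{hyp-amfaible} trivially true for both $V_2$ and $-V_2$, which is precisely what breaks the asymmetry between the half-planes present in Theorem \ref{th-estim}.
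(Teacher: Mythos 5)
Your proof is correct and takes essentially the same route as the paper, which in one line says to apply Theorem \ref{th-vp} to $\hh$ and to $\hh^*$; the only cosmetic difference is that you invoke Corollary \ref{cor-vp} instead of Theorem \ref{th-vp}, but the two are interchangeable here since the non-trapping assumption makes $\O_b(\singl E)=\vide$, rendering the damping hypothesis vacuous for both $V_2$ and $-V_2$.
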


\begin{proof}
We only have to apply Theorem \ref{th-vp} to $\hh$ and its adjoint $\hh^*$.
\end{proof}

\section{Estimate in the incoming region}

In this section we prove an estimate for the outgoing solution of the Helmholtz equation in the incoming region. 
Assume that $V_2$ is of short range. Let $I\subset \R$, $h_0 > 0$, $\d > \frac 12$, $c\geq 0$, $k\in\N$ and suppose that for $h \in \ ]0,h_0]$ and $z \in \C_{I,+}$ the resolvent $(\hh-z)\inv$ is well-defined and
\begin{equation} \label{hyp-estim}
\nr{\pppg x ^{-\d} (\hh -z)\inv \pppg x ^{-\d}}_{\Lc(L^2(\R^n))} \leq \frac c {h^k}.
\end{equation}

\begin{theorem} \label{th-incoming}
Let $R_1 > 0$, $d > d_1 \geq  0$ and $\s > \s_1 \geq 0$. Then there exists $R > R_1$ such that for $z \in  {\C_{I,+}}$, $\b \in \R$, $\o \in \symb_0(\R^{2n})$ supported outside $\zoneS_- (R_1,d_1,-\s_1)$ and $\o_- \in \symb_0 (\R^{2n})$ supported in $B_x(r) \cap \zoneS_-(R,d,-\s)$ (for some $r > 0$), we have
\[
\nr{\pppg x ^{\b} \Op(\o_-) (\hh-z)\inv  \Op (\o) \pppg x ^{\b}}_{\Lc(L^2(\R^n))} = \bigo h 0 (h^\infty),
\]
and the size of the rest is uniform in $z \in  {\C_{I,+}}$. Moreover if the limiting absorption principle holds in $\Lc(L^{2,\d}(\R^n),L^{2,-\d}(\R^n))$ for $\l\in I$ and $h>0$ small enough, then the estimate remains true for $(\hh-(\l +i0))\inv$, $\l \in I$.
\end{theorem}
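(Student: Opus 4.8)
The plan is to reproduce the argument of Robert--Tamura \cite{robertt89}: one builds an Isozaki--Kitada type outgoing parametrix for $(\hh-z)\inv$ microlocalized in the incoming region and then absorbs the parametrix error by means of the a priori bound \eqref{hyp-estim}. The perturbation $-ihV_2$ is of order $h$ in the semiclassical calculus (and $V_2$ is short range), so it only affects subprincipal terms and neither the eikonal equation nor the classical flow; the single genuinely new point compared with the selfadjoint or dissipative case is that $e^{-it\hh/h}$ is no longer a contraction semigroup, so wherever Robert--Tamura use an energy estimate or the contractivity of the propagator we instead invoke \eqref{hyp-estim} (together with the crude bound $\nr{e^{-it\hh/h}}_{\Lc(L^2(\R^n))}\le e^{t\nr{V_2}_\infty}$ for $t\ge 0$). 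As a preliminary reduction, since $\Op(\o_-)$ is compactly supported in $x$ the operator $\pppg x^\b\Op(\o_-)$ is again of the form $\Op(\tilde\o_-)$ with $\tilde\o_-\in\symb_0$ supported in $B_x(r)\cap\zoneS_-(R,d,-\s)$, so the left weight is harmless, while $\Op(\o)\pppg x^\b$ is a pseudodifferential operator with symbol in $\Sc(\pppg x^\b)$, still compactly supported in $\x$ and with symbol supported outside $\zoneS_-(R_1,d_1,-\s_1)$; it therefore suffices to establish, for each $N\in\N$, a bound $\bigo h 0 (h^N)$ uniform in $z\in\C_{I,+}$.

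The value of $R$ is dictated by classical dynamics. Using \eqref{escape-rad} and Proposition \ref{prop-non-entr} one fixes a neighborhood $\tilde I$ of $E$ and takes $R>R_1$ so large that for every $w\in\zoneS_-(R,d,-\s)\cap p\inv(\tilde I)$ and every $t\ge 0$ one has $\vf^{-t}(w)\in\zoneS_-(R_1,d_1,-\s')$ for some fixed $\s'$ with $\s_1<\s'<\s$, and $\abs{\bar x(-t,w)}\to+\infty$ monotonically. Thus the entire backward flow-out of $\supp\tilde\o_-$ stays in an incoming cone strictly narrower than $\zoneS_-(R_1,d_1,-\s_1)$, escapes to infinity, and in particular never returns near $\supp\o$.

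In the corresponding incoming channel (the support of a symbol equal to $1$ near $\supp\tilde\o_-$ and supported in $\zoneS_-(R_1,d_1,-\s')$) the symbol $p$ is non-trapping, so one solves the eikonal equation and the transport equations up to order $N$ and gets a semiclassical Fourier integral operator $J_+$ with two properties: its canonical relation stays away from the deep incoming cone where $\tilde\o_-$ lives, so $\Op(\tilde\o_-)J_+=\bigo h 0 (h^\infty)$ in every weighted operator norm; and $(\hh-z)J_+=\Op(\o)\pppg x^\b+h^N B_N$, uniformly in $z\in\C_{I,+}$ and $h$, where $B_N$ is bounded and maps $L^2(\R^n)$ into $L^{2,\d}(\R^n)$ --- the eikonal defect at a complex energy produces a factor $\Im z$ which is controlled using $\Im z>0$, while the finite-order transport remainder and the terms coming from differentiating the incoming cutoff carry, by the dynamical choice above, the required $\pppg x^{-\d}$ decay. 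Applying $(\hh-z)\inv$ on the left and then $\Op(\tilde\o_-)$ yields
\[
\pppg x^\b\Op(\o_-)(\hh-z)\inv\Op(\o)\pppg x^\b=\Op(\tilde\o_-)J_+-h^N\,\Op(\tilde\o_-)(\hh-z)\inv B_N ,
\]
and the last term is bounded by $h^N\,\nr{\Op(\tilde\o_-)\pppg x^\d}\,\nr{\pppg x^{-\d}(\hh-z)\inv\pppg x^{-\d}}\,\nr{\pppg x^\d B_N}=\bigo h 0 (h^{N-k})$ by \eqref{hyp-estim}. Since $N$ is arbitrary the left-hand side is $\bigo h 0 (h^\infty)$, uniformly in $z\in\C_{I,+}$.

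I expect the main difficulty to lie in the bookkeeping of the error $B_N$: one must arrange that every piece of $(\hh-z)J_+-\Op(\o)\pppg x^\b$ (the finite-order transport remainder, the boundary terms from the incoming cutoff, the complex-energy eikonal defect) maps $L^2(\R^n)$ continuously into $L^{2,\d}(\R^n)$, so that \eqref{hyp-estim} can be applied without circularity --- and this is exactly where the monotone escape to infinity of the backward flow out of the deep incoming cone is used --- as well as in re-verifying all the operator bounds in the construction of $J_+$ and of the remainder without the contractivity of $e^{-it\hh/h}$. Finally, for the last assertion of the theorem: if the limiting absorption principle holds in $\Lc(L^{2,\d}(\R^n),L^{2,-\d}(\R^n))$ near $I$, then $(\hh-(\l+i0))\inv$ is the limit of $(\hh-z)\inv$ as $z\to\l$ in $\C_{I,+}$ in that space; since the bound just proved is uniform over all such $z$, it passes to the limit and gives the same estimate for $(\hh-(\l+i0))\inv$.
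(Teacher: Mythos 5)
Your proposal is in the same Isozaki--Kitada/Robert--Tamura spirit as the paper, but you organize it as a \emph{stationary} parametrix for $(\hh-z)\inv\Op(\o)\pppg x^\b$, whereas the paper works with the time-dependent propagator: it writes $\Op(\o_-)\approx I_h(e(h),\vf)I_h(b(h),\vf)^*$, applies a Duhamel/intertwining identity to $I_h(b(h),\vf)^*U_h(t)$, and only then passes to the resolvent by integrating $e^{itz/h}U_h(t)$ over $t\geq 0$. In principle both routes can work; yours hides all the substance in the one-line assertion $(\hh-z)J_+=\Op(\o)\pppg x^\b+h^N B_N$ with $B_N:L^2\to L^{2,\d}$ bounded, which is never justified. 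That assertion is doubtful as stated: $\o$ is supported on the complement of a single incoming cone (a region containing the whole outgoing cone and the transverse directions), so a single IK phase/amplitude localized in one incoming channel cannot produce it; you would need the full outgoing parametrix machinery and a patchwork over regions, and the student's sketch does not indicate this. (Also, $\symb_0$ imposes no decay in $\x$, so the claim that $\Op(\o)\pppg x^\b$ is ``compactly supported in $\x$'' is simply false; compactness in $\x$ is something one must introduce through an energy cutoff.)

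More importantly, you do not address the one point that is genuinely new in the non-dissipative setting and that the paper flags explicitly. The identity
\[
(\hh-z)\inv=\frac{i}{h}\int_0^\infty e^{\frac{it}{h}z}U_h(t)\,dt
\]
is valid only when $\Im z>h\,\nr{V_2}_\infty$, because $U_h(t)$ is no longer a contraction; the paper therefore establishes the parametrix identity on that subdomain of $\C_+$ and extends it to all of $\C_{I,+}$ by holomorphy, using that each side is analytic in $z$ and that the time-integrals converge uniformly on $\C_{I,+}$. Your $J_+$, if constructed (as in Robert--Tamura) by integrating a propagator parametrix in time, runs into exactly this problem: the exponential growth $\nr{U_h(t)}\leq e^{t\nr{V_2}_\infty}$, which you mention, destroys the convergence of the $t$-integral for $\Im z$ of size $h$, and you give no substitute for the analytic continuation step. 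Saying that ``wherever Robert--Tamura use contractivity we instead invoke \eqref{hyp-estim}'' and that ``the eikonal defect at a complex energy produces a factor $\Im z$ which is controlled using $\Im z>0$'' does not resolve this: \eqref{hyp-estim} bounds the resolvent, not the integrand that defines $J_+$ and $B_N$, and the resolvent bound $c/h^k$ cannot be used to tame an integral of a factor growing like $e^{t\nr{V_2}_\infty}$. So even if the stationary parametrix route can be pushed through, your sketch omits the specific mechanism that makes the theorem true uniformly up to $\Im z\downarrow 0$ in the non-dissipative case, and that is precisely the content of the paper's proof beyond the dissipative reference \cite{art_mesure}.
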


\begin {remark}
We are going to use this result for a small perturbation of a dissipative Schrödinger operator in order to prove Theorem \ref{th-estim} (see Proposition \ref{prop-mzero-incoming}). Then, once Theorem \ref{th-estim} is proved, we can use Theorem \ref{th-incoming} for the full non-dissipative Schrödinger operator $\hh$ we are interested in (see Section \ref{sec-mesure}).
\end {remark}

The result being stronger in this case, we can assume without loss of generality that $d_1$ and $\s_1$ are positive. The proof of this theorem follows that of the dissipative analog given in \cite{art_mesure}. We recall the sketch of the proof for the reader convenience and refer to \cite{robertt89, wang88, art_mesure} for more details.\\

Let $d_0 \in ]0,d_1[$ and $\s_0 \in ]0,\s_1[$. There exist $R_ 0 > 0$ and $\vf \in C^\infty(\R^{2n})$ such that
\begin{equation} \label{izosaki-kitada}
\forall (x,\x) \in \zoneS_-(R_0,d_0,-\s_0), \quad \abs{ \nabla_x \vf (x,\x)} ^2 + V_1 (x) = \abs \x ^2
\end{equation}
and, for some $\rho > 0$:
\[
 \forall (x,\x) \in \R^{2n},\forall \a,\b \in \N^n, \quad  \abs{\partial_x^\a \partial_\x^\b \big( \vf(x,\x) - \innp x \x  \big)} \leq c_{\a,\b}\pppg x ^{1 - \rho - \abs \a}
\]
(see \cite{isozakik85}). As explained in \cite{wang88}, we can assume that the constants $c_{\a,\b} > 0$ for $\a , \b\in \N^n$ are as small as we wish as long as we replace $\vf$ by
\begin{equation} \label{newphi}
(x,\x) \mapsto \big(\vf(x,\x) - \innp x \x\big) \h \left( \frac x R \right ) + \innp x \x
\end{equation}
for $R \geq 2 R_0 $ large enough and $\h \in C^\infty(\R^n)$ such that $\h(x)=0$ for $\abs x \leq 1/4$ and $\h(x) = 1$ for $\abs x \geq 1/2$. In this case \eqref{izosaki-kitada} remains valid on $\zoneS_-(R/2,d_0,-\s_0)$.\\

For all $(x,\x) \in \R^{2n}$ we denote by $t \mapsto r(t,x,\x) \in \R^n$ the solution of the problem
\[
 \left\{ \begin{array} l \partial_t r (t,x,\x) = \nabla _x \vf \big(r (t,x,\x),\x\big), \\ r(0,x,\x) = x . \end{array} \right.
\]
We can check that this defines a smooth function on $\R \times  \R^{2n}$. If $R$ was chosen large enough in \eqref{newphi}, then for $(x,\x) \in \zoneS_-(0,d_1,-\s_1)$ and $t\geq 0$ we have
\[
 \abs {r(-t,x,\x)} \geq \abs x + \frac {\s_1 d_1 t} 2 .
\]
Moreover for $\a,\b \in \N^n$ with $\abs \a + \abs \b \geq 1$ there exists $c_{\a,\b} \geq 0$ such that
\[
 \abs { \partial_x^\a \partial_\x^\b r(-t,x,\x)} \leq c_{\a,\b} (t + \pppg x ) \pppg x ^{-\abs \a}.
\]

For $t\geq 0$ and $(x,\x) \in \zoneS_-(0,d_1,-\s_1)$ we now set
\[
F (t,x,\x) = \D_x\vf(r(t,x,\x),\x) - V_2(r(t,x,\x)
\]
in order to define on $\zoneS_-(0,d_1,-\s_1)$ the symbols
\[
a_{0}(x,\x) = \exp\left({- \int _{0}^\infty F(- 2s,x,\x)}\, ds \right)
\]
and, for $j \geq 1$:
\[
 a_{j}(x,\x) =  i \int_{0 }^{+\infty} \D_x a_{j-1} \big(r(- 2\t,x,\x),\x \big) \exp\left( - \int_0^\t F(- 2s,x,\x)\,ds \right) \, d\t.
\]
These functions are solutions of the transport equations
\[
2\nabla_x a_{0} \cdot \nabla_x \vf + a_{0} \D_x \vf  - a_{0} V_2 = 0
\]
and, for $j \geq 1$:
\[
  2 \nabla_x a_{j} \cdot \nabla_x \vf + a_{j} \D_x \vf - a_{j} V_2 - i \D_x a_{j-1} = 0.
\]
Moreover $a_j$ decays as a function of $\symb_{-j}(\R^{2n})$ and there exists $c_0 > 0$ such that
\[
\forall (x,\x) \in \zoneS_-(0,d_1,-\s_1), \quad \abs{a_0(x,\x) } \geq c_0.
\]
Note that $V_2$ has to be of short range here but the sign does not matter. \\

Since we work on $\zoneS_-(0,d_1,-\s_1)$, we now introduce a cut-off function as follows. We choose $R_2$ and $R_3$ such that $\max (R_1 , R/2) < R_2 < R_3<R$, $d_2$ and $d_3$ such that $d_1 < d_2 < d_3 < d$ and finally $\s_2$ and $\s_3$ such that $\s_1 < \s_2 < \s_3 < \s$. Then we consider $\h_1,\h_2 , \h_3 \in C^\infty(\R,[0,1])$ such that $\h_1(s) = 0 $ for $s \leq R_2$, $\h_1(s) = 1$ for $s \geq R_3$, $\h_2(s) = 0$ for $s \leq d_2$, $\h_2(s) =1$ for $s \geq d_3$, $\h_3(s) = 0$ for $s \leq \s_2$ and $\h_3(s) = 1$ for $s \geq \s_3$. We fix $N\in\N$. Let us define
\[
a(h) = \sum_{j=0}^N h^j a_j\quad  \text{and} \quad b(x,\x,h) =   \h_1 (\abs x) \h_2 (\abs \x) \h_3 \left(- \frac { x \cdot \x}{\abs x \abs \x}  \right)  a(x,\x,h).
\]
We also consider
\[
p(h) = \frac ih \big( \abs{\nabla _x \vf }^2 + V_1 - \abs \x^2 \big) b(h) + \big( 2 \nabla_x b(h) \cdot \nabla_x \vf  + b(h) \D_x \vf - b(h) V_2 \big) -ih \D_x b(h).
\]
The symbols $b(h)$ and $p(h)$ are supported in $\zoneS_-( R_2 ,d_2 , -\s_2)$ and for $\a,\b\in\N^n$ there exists a constant $c_{\a,\b} >0$ such that for $\hou$ we have
\[
\forall (x,\x) \in \zoneS_-( R_2 ,d_2 , -\s_2), \quad \abs{\partial_x^\a \partial_\x^\b b(x,\x,h)} +\abs{\partial_x^\a \partial_\x^\b p(x,\x,h)} \leq c_{\a,\b} \pppg x ^{- \abs \a}
\]
and
\[
\forall (x,\x) \in \zoneS_-( R_3 ,d_3 , -\s_3), \quad \abs{\partial_x^\a \partial_\x^\b p(x,\x,h)}  \leq c_{\a,\b} \, h^{N+1} \pppg x ^{-2-N- \abs \a} .
\]

If $R$ is chosen large enough, $R_5 \in ]R_3, R[$, $d_5 \in ]d_3,d[$ and $\s_5 \in ]\s_3,\s[$, then we can construct (see \cite[Lemma 4.5]{wang88}) a symbol $e(h) =\sum_{j=0}^N h^j e_j$ such that $e_j \in \symb_{-j}(\R^{2n})$ is supported in $\zoneS_-(R_5,d_5,-\s_5)$ for all $j \in \Ii 0 N$ and
\[
 I_h (e(h), \vf) I_h (b(h),\vf)^* = \Op(\o_-)  + h^{N+1} \Op(r(h)),
\]
where $r(h) \in \symb_{-N}(\R^{2n})$ uniformly in $h \in ]0,1]$ and for $u \in \Sc(\R^n)$ we have set
\[
I_h(b,\vf) u(x)  = \frac 1 {(2\pi h)^n} \int_{\R^n}\int_{\R^n} e^{\frac ih ( \vf(x,\x) - \innp y \x)} b(x,\x) u(y)\,d\x \, dy.
\]


For any $t \geq 0$ we have
\[
I_h(b(h),\vf) ^* \uh (t) = \uoh (t) I_h(b(h),\vf)^* - \int_0^t \uoh(s) I_h(p,\vf)^* \uh(t-s) \, ds ,
\]
and hence
\begin{align*}
\Op(\o_-) \uh(t)
& = -h^{N+1} \Op(r(h)) \uh(t) + I_h (e(h),\vf) \uoh(t) I_h(b(h),\vf)^*\\
& \quad - \int_0 ^t I_h(e(h),\vf) \uoh(s) I_h(p(h),\vf) \uh(t-s) \, ds.
\end{align*}
Contrary to the dissipative case, we cannot write
\[
 (\hh-z) \inv = \frac ih \int_0^\infty e^{-\frac{it} h (\hh-z)} dt =\frac ih \int_0^\infty e^{\frac {it}h z} U_h(t) \,  dt
\]
for any $z \in \C_+$, but only when $\Im z > h\nr {V_2}_\infty$. For such a $z$, we obtain from the previous equality:
\begin{eqnarray*}
\lefteqn{\pppg x ^{\b} \Op(\o_-)(\hh-z)\inv \Op(\o) \pppg x ^{\b}} \\
&& = - h^{N+1}\pppg x ^{\b} \Op(r(h)) (\hh-z)\inv \Op(\o) \pppg x ^{\b} \\
&& \quad  + \frac ih \pppg x ^{\b}\int_{0}^\infty e^{\frac{it}h z}  I_h(e(h),\vf) U_0^h(t)I_h(b(h),\vf)^* \Op(\o) \pppg x ^{\b}\, dt \\
&& \quad -  \pppg x ^{\b}\int_{0}^\infty e^{\frac{is}h z}   I_h(e(h),\vf) U_0^h(s)I_h(p(h),\vf)^ *  (\hh-z)\inv \Op(\o)  \pppg x ^{\b}\, ds.
\end{eqnarray*}
This equality is proved for $\Im z > \nr {V_2}_\infty$ but the two integrands decay with time uniformly in $z \in \C_{I,+}$. Each term is holomorphic on $\C_{I,+}$, so for any $h \in ]0,h_0]$ this equality remains valid on $\C_{I,+}$ by unique continuation. Then it only remains to estimate each term of the right-hand side to conclude (we use assumption \eqref{hyp-estim} here). This can be done as in the dissipative case.

\section{Uniform resolvent estimates}

We now prove the uniform resolvent estimates for the non-dissipative Schrödinger operator up to the real axis.
In order to use Theorem \ref{th-incoming} we assume that $V_2$ is of short range. But we expect Theorem \ref{th-estim} to be true under a weaker assumption on $V_2$, so we are going to give the other arguments only assuming that $V_2 \in C^\infty(\R^n)$ is of long range with a short range negative part: there exist $\rho > 0$, $C > 0$ and constants $c_\a$ for $\a \in \R^n$ such that for all $x\in \R^n$ we have
\begin{equation} \label{estim-V2}
V_2(x) \geq - C \pppg x ^{-1-\rho} \quad \text{and} \quad  \forall \a \in \N^n,\, \abs {\partial^\a V_2(x) } \leq c_\a \pppg x ^{-\abs \a -\rho}.
\end{equation}

As for Theorem \ref{th-vp}, we proceed by contradiction. We suppose that Theorem \ref{th-estim} is wrong and consider sequences $\seq v m \in H^2(\R^n)^\N$, $\seq z m \in \C^\N$ and $\seq h m \in ]0,1]^\N$ such that if we set $\l_m = \Re z_m$ and $\b_m = h_m\inv \Im z_m$ then for some $\d \in \left] \frac 12 , \frac {1+\rho}2 \right[$ we have
\[
h_m \to 0, \quad  \l_m \to E, \quad  0 < \b_m  \to 0 ,  \quad \nr {v_m}_{L^{2,-\d}(\R^n)} = 1
\]
and
\[
\nr{(\hhm -z_m) v_m}_{L^{2,\d}(\R^n)} = \littleo m \infty (h_m).
\]
We remark that $v_m$ is assumed to be in $H^2(\R^n)$ for all $m\in \N$, but is only uniformly bounded in $L^{2,-\d}(\R^n)$.\\

We are going to prove that such a sequence $\seq v m$ cannot exist. First considering a sequence of eigenvectors, this will prove that for $h$ small enough, $\hh$ has no eigenvalue with real part close to $E$ and positive imaginary part. Then, the operator $\pppg x ^{-\d} (\hh-z)\inv \pppg x ^{-\d}$ is well-defined as a bounded operator from $L^2(\R^n)$ to $H^2(\R^n)$ when $\Im z > 0$, $\Re z$ is close to $E$ and $h$ is small enough. Applying again the argument now gives the estimate of Theorem \ref{th-estim}.\\

If there exists a subsequence $\seq m k$ such that $\b_{m_k} \geq \b > 0$ for all $k \in \N$, then we obtain a contradiction with Theorem \ref{th-vp}. Therefore we can assume that
\[
 \b_m \limt m \infty 0.
\]
After extracting a subsequence if necessary, we can assume that \eqref{conv-mesure} holds for some non-negative Radon measure $\m$. We already know that $\m$ is supported in $p\inv(\singl E)$. In order to get a contradiction, we prove that $\m = 0$ and $\m \neq 0$.\\

Let $W_2 = V_2  + 2 C \pppg x ^{-1-\rho} \geq C \pppg x ^{-1-\rho}$, the constant $C$ being given by \eqref{estim-V2}. We first prove that $\m \neq 0$. The proof relies on the existence of an escape function in the sense of \cite{jecko04}.

\begin{proposition} \label{prop-fonc-fuite}
Let $E > 0$. There exist $f \in \symbor(\R^{2n},\R)$, $\h \in C_0^\infty(\R^n,[0,1])$ and $\tilde \h \in C_0^\infty(\R,[0,1])$ equal to 1 in a neighborhood of $E$ such that
\[
\forall (x,\x) \in \R^{2n}, \quad \{p,f\} (x,\x) = (1-\h(x)) \tilde \h(p(x,\x)) \pppg x^{-2\d}.
\]
\end{proposition}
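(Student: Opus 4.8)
The idea is to construct $f$ as a symbol that encodes how classical trajectories escape to infinity outside a compact set, following the construction of the bounded escape function in Jecko's work. Set $g(x,\xi) = (1-\h(x))\tilde\h(p(x,\xi))\pppg x^{-2\d}$ for suitable cutoffs $\h$ and $\tilde\h$ to be fixed; the goal is to solve the cohomological equation $\{p,f\} = g$ along the flow. The natural candidate is
\[
f(x,\xi) = -\int_0^{+\infty} g\big(\vf^t(x,\xi)\big)\,dt,
\]
since then formally $\{p,f\}(x,\xi) = -\frac{d}{ds}\big|_{s=0} f(\vf^s(x,\xi)) = -\frac{d}{ds}\big|_{s=0}\big(-\int_0^\infty g(\vf^{t+s}w)\,dt\big) = g(x,\xi)$. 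So the entire content is to choose $\h$ and $\tilde\h$ so that this integral converges, is smooth, and has bounded derivatives of all orders, i.e. lies in $\symbor(\R^{2n},\R)$.

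\textbf{Convergence and the role of the cutoffs.} First I would fix $\tilde\h \in C_0^\infty(\R,[0,1])$ supported in a small neighborhood $J$ of $E$ with $J \subset [E_1,E_2] \subset \R_+^*$, and equal to $1$ near $E$. On $\supp(g)$ we then have $p(x,\xi) \in J$, so only trajectories of energy in a compact subset of $\R_+^*$ matter. By \eqref{escape-rad} and the convexity estimate recalled in Section 2 ($\partial_t^2|\bar x(t,w)|^2 \geq 8E_1 - 12\nu > 0$ once $|\bar x(t,w)| \geq \Rc$ and $p(w)\in J$), a trajectory that leaves $B_x(\Rc)$ never returns and escapes linearly: $|\bar x(t,w)| \gtrsim t$. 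Now choose $\h \in C_0^\infty(\R^n,[0,1])$ equal to $1$ on a large ball so that $\supp(1-\h) \subset \{|x| \geq \Rc'\}$ for $\Rc'$ large enough that \textbf{any} point $(x,\xi) \in \supp(g)$ already lies outside $B_x(\Rc)$ and hence its forward trajectory escapes to infinity at linear speed. This is the crucial point where the long-range hypothesis on $V_1$ is used: it guarantees the convexity/non-return property. Then along the forward orbit of any $w \in \supp(g)$ we have $\pppg{\bar x(t,w)}^{-2\d} \lesssim \pppg{t+|x|}^{-2\d}$ for $t \geq 0$; since $g \geq 0$ and $g \leq \pppg x^{-2\d}$, the integrand at time $t$ is $O(\pppg{t}^{-2\d})$ uniformly on $\supp(g)$. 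For $w \notin \supp(g)$ the forward orbit may re-enter $\supp(g)$ at most... actually one must argue that the set of times $t$ with $\vf^t(w) \in \supp(g)$ is contained in a bounded interval, which again follows from the non-return property (once a trajectory of energy in $J$ leaves the large ball, it stays out). Hence the integral converges absolutely and locally uniformly; since $2\d > 1$ this is where $\d > \frac12$ enters.

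\textbf{Smoothness and symbol estimates — the main obstacle.} The substantive work is to show $f \in \symbor$, i.e. that all derivatives $\partial_x^\a\partial_\xi^\b f$ are bounded on $\R^{2n}$. Differentiating under the integral sign, $\partial^\g f(w) = -\int_0^\infty \sum (\partial g)(\vf^t w)\cdot(\text{products of }\partial\vf^t w)\,dt$. The flow derivatives $\partial_w^\g \vf^t(w)$ grow polynomially in $t$ (bounded by $c_\g(t+\pppg x)\pppg x^{-|\a|}$-type estimates, as recalled for the modified flow $r$ in Section 4, and analogously for $\vf^t$ on the relevant energy shell). The decay $\partial g$ contributes $\pppg{\bar x(t,w)}^{-2\d - |\a|} \lesssim \pppg{t}^{-2\d}$, and we need polynomial growth in $t$ times $\pppg{t}^{-2\d}$ to be integrable — which forces us to have enough decay, hence we may need to take $\d$ close to $\frac12$ to be insufficient and instead should observe that $\partial_x g$ brings an \emph{extra} factor $\pppg x^{-1}$ (derivative of $\pppg x^{-2\d}$, or of the cutoffs which are supported in an annulus), improving decay; iterating, each $x$-derivative gains a power of $\pppg x^{-1} \sim \pppg{t+|x|}^{-1}$, so high-order derivatives are even better behaved. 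The delicate bookkeeping is: on $\supp(g \circ \vf^t)$ we have $|\bar x(t,w)| \gtrsim t + |x|$, and the derivatives of $\vf^t$ restricted to this set grow at most linearly in $t$; so a term with $k$ flow-derivative factors and $j$ derivatives hitting $g$ is bounded by $t^k \pppg{t+|x|}^{-2\d - j}$, and since we can arrange $j \geq k$ is \emph{not} automatic — rather, each differentiation of the composition produces at most as many flow factors as derivatives of $g$, and the Faà di Bruno combinatorics keep the total power of $t$ below the total decay order $2\d + (\text{number of }x\text{-derivatives of }g)$. Making this precise — that $f$ and all its derivatives are genuinely bounded uniformly on all of $\R^{2n}$, including handling $w \notin \supp(g)$ where one uses that the entry-time into $\supp(g)$ is controlled — is the technical heart; I would carry it out by the same inductive argument as in \cite{jecko04}, tracking the weights $\pppg{\bar x(t,w)}$ against the polynomial growth of the flow on the compact energy band $p^{-1}(\supp\tilde\h)$. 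Finally $f$ is real-valued since $p$, $g$ are, and the identity $\{p,f\} = g$ holds pointwise by differentiation under the integral, giving exactly the claimed formula with the chosen $\h$, $\tilde\h$.
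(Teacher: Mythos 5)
Your candidate $f(w) = -\int_0^{+\infty} (g\circ\varphi^t)(w)\,dt$ has a genuine gap in the symbol estimates, and the inductive bookkeeping you gesture at cannot fix it. The polynomial bound $\abs{\partial_x^\alpha\partial_\xi^\beta\varphi^{\pm t}} \lesssim \langle t\rangle$ that you invoke holds only along orbit segments that stay in an outgoing (resp.\ incoming) zone $\mathcal Z_\pm(\mathcal R,0,\mp\sigma)\cap p^{-1}(J)$ --- this is exactly what is proved in Appendix A, and it is what fails globally on $p^{-1}(J)$ when $E$ is trapping, which is allowed under \eqref{hyp-amfaible}. Your $g$ is supported on an \emph{unbounded} set containing both incoming and outgoing points, so there are $w\in\supp g$ whose forward orbit re-enters $B_x(\mathcal R)$, lingers for an arbitrarily long time $T$ near a hyperbolic trapped set (with $\nr{\partial\varphi^T(w)}\sim e^{\lambda T}$), then escapes and spends the rest of its life back inside $\supp g$. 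The tail of $\partial^\gamma f(w)$ then carries a factor $\nr{\partial\varphi^T(w)}$ which the algebraic decay $\langle t-T\rangle^{-2\delta}$ cannot absorb, so $\partial^\gamma f$ blows up as $T\to\infty$ along a sequence of nearby points. In short: differentiating under the integral sign couples times before and after passage through the trapping region, and on $p^{-1}(J)$ the flow Jacobian only has exponential bounds there.

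The paper's construction avoids this by splitting $g$ according to the sign of $x\cdot\xi$: write $g=g_++g_-$ with $g_\pm$ supported in the outgoing/incoming zone, and set $f_\pm(w)=\pm\int_0^{+\infty}(g_\pm\circ\varphi^{\mp t})(w)\,dt$, $f=f_++f_-$. Propagating $g_+$ only \emph{backward} and $g_-$ only \emph{forward} guarantees that whenever the integrand is nonzero, the whole relevant orbit segment lies in the escape zone $\mathcal Z_{J,\pm}$ where the $O(\langle t\rangle)$ flow estimates of the appendix apply; the trapping region is never visited. Each piece still satisfies $\{p,f_\pm\}=g_\pm$ by differentiating along the flow, so $\{p,f\}=g$. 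Your reduction to the transport equation, the choice of $\tilde\eta$ supported near $E$, and the use of the convexity/non-return property to control $\abs{\bar x(t,w)}\gtrsim t+\abs x$ are all correct and are indeed the paper's first steps; the missing idea is the directional split of $g$.
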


The proof of this proposition is postponed to appendix A.

\begin{proposition} \label{prop-mu-neq0}
The measure $\m$ is non-zero.
\end{proposition}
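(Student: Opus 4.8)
The plan is to argue by contradiction, in the spirit of the proof of Proposition~\ref{prop-vp-nonzero} but now replacing the uniform lower bound $\b$ on $\Im z_m / h_m$ (which is no longer available since $\b_m \to 0$) by the escape function $f$ of Proposition~\ref{prop-fonc-fuite} together with the modified damping $W_2 = V_2 + 2C\pppg x^{-1-\rho}$. Suppose $\m = 0$. The key identity comes from testing the equation $(\hhm - z_m)v_m = \littleo m \infty(h_m)$ in $L^{2,\d}$ against $\Opwm(f) v_m$ and taking imaginary parts, using that $f$ is bounded so $\Opwm(f)$ is bounded on $L^2$ uniformly in $m$. First I would write
\[
\frac i{h_m} \innp{\big( \Opwm(f)(\hhm - z_m) - (\hhm^* - \bar z_m)\Opwm(f)\big) v_m}{v_m} = \littleo m \infty(1),
\]
which since $\nr{v_m}_{L^{2,-\d}} = 1$ and the error term is $\littleo m \infty(h_m)$ in $L^{2,\d}$ controls the left side. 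Expanding the commutator with $\huhm = -h_m^2\D + V_1$ produces, to leading order, $\Opwm(\{p,f\})$, while the $V_2$ term and the $\Im z_m = h_m\b_m$ term contribute $2(\Opwm((V_2+\b_m)f))$. Thus
\[
\innp{\Opwm\big(\{p,f\}\big) v_m}{v_m} + 2\innp{\Opwm\big((V_2 + \b_m)f\big) v_m}{v_m} = \littleo m \infty(1).
\]

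Now I would use Proposition~\ref{prop-fonc-fuite}: $\{p,f\} = (1-\h)\tilde\h(p)\pppg x^{-2\d}$, which is a nonnegative symbol. Combined with the sign information on $W_2$ and the fact that $V_2 + \b_m = W_2 - 2C\pppg x^{-1-\rho} + \b_m$, where $\pppg x^{-1-\rho}\leq \pppg x^{-2\d}$ by the choice $\d < \frac{1+\rho}2$, I want to extract a contribution that is bounded below by a positive multiple of $\int (1-\h)\tilde\h(p)\pppg x^{-2\d}\,|v_m|^2$ minus terms that are either controlled by the escape function identity or negligible. More precisely, one writes $2(V_2+\b_m)f$ and uses a sharp G\r{a}rding-type argument on the combination of $\{p,f\}$ and the lower bound $W_2 \geq C\pppg x^{-1-\rho}$, plus the positivity of $\b_m$, to conclude that $\int_{\R^n}\pppg x^{-2\d}(1-\h(x))\tilde\h(p)\abs{v_m(x)}^2\,dx \to 0$, i.e. that the part of $\m$-mass (or rather the limiting quadratic form) living where $\h = 0$ vanishes. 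Since $\h$ has compact support and $\m$ is supported on $p\inv(\{E\})$ with $\tilde\h \equiv 1$ near $E$, this forces all the mass to concentrate in the compact set $\supp\h$; but then one still has to see it is nonzero. To get the nonvanishing, I would combine this localization with the global mass bound $\nr{v_m}_{L^{2,-\d}} = 1$: since $\pppg x^{-2\d}\abs{v_m}^2$ has total integral $1$ and its part outside $\supp\h$ tends to $0$, the part inside $\supp\h$ has integral bounded below, and inside a compact set $\pppg x^{-2\d}$ is bounded below by a positive constant, so $\innp{\h v_m}{v_m} \to \int \h\,d\m$ is bounded below away from $0$. Hence $\m \neq 0$.

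The main obstacle I anticipate is the passage from the commutator identity to a genuine coercive lower bound: $\Opwm(\{p,f\})$ is only nonnegative as a symbol, so one needs sharp G\r{a}rding (or the Fefferman--Phong inequality) to turn it into a positivity statement modulo $O(h_m)$, and one must handle simultaneously the indefinite term $2\Opwm(V_2 f)$ (which is not sign-definite because $V_2$ changes sign). The trick is exactly the reason $W_2$ and the condition $\d < \frac{1+\rho}2$ were introduced: writing $V_2 = W_2 - 2C\pppg x^{-1-\rho}$, the bad negative part $-2C\pppg x^{-1-\rho}$ is dominated, after multiplication by $f$ and up to choosing $f$ (and the cutoffs) appropriately, by the positive symbol $\{p,f\} = (1-\h)\tilde\h(p)\pppg x^{-2\d}$ since $\pppg x^{-1-\rho} = o(\pppg x^{-2\d})$ at infinity and on the compact piece it is absorbed into $\h$; the genuinely positive part $W_2 f$ and $\b_m f$ only help. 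A secondary technical point is controlling the subprincipal terms in the expansion of $\frac i{h_m}[\huhm, \Opwm(f)]$ and the fact that $f \in \symbor$ rather than a compactly supported symbol, so one cannot simply invoke Proposition~\ref{prop-burq-debut}(ii); one must track the weights $\pppg x^{\pm\d}$ carefully, which is precisely what the choice of $f$ with the exact right-hand side $\pppg x^{-2\d}$ is designed to make work. I would carry this out by reducing to the estimates of \cite{jecko04}, citing that reference for the pseudodifferential bookkeeping.
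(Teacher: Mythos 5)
Your overall architecture matches the paper's: use the escape function $f$ of Proposition~\ref{prop-fonc-fuite}, exploit the commutator identity $\frac{i}{h_m}[\huhm,\Opwm(f)] \sim \Opwm(\{p,f\})$ together with the equation to show $\innp{\Opwm(\{p,f\})v_m}{v_m}\to 0$, then feed this into the decomposition $1 = \innp{\pppg x^{-2\d}v_m}{v_m}$ to force the mass into $\supp\h$. But the mechanism you propose for killing the term $2\Re\innp{(V_2+\b_m)\Opwm(f)v_m}{v_m}$ is wrong and would not close.

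The issue is that you treat $\{p,f\} + 2(V_2+\b_m)f$ as a symbol to which sharp G\r{a}rding/Fefferman--Phong could be applied, arguing that $W_2 f$ and $\b_m f$ ``only help'' and that $-2C\pppg x^{-1-\rho}f$ is dominated by $(1-\h)\tilde\h(p)\pppg x^{-2\d}$. Neither claim holds: the escape function $f = f_+ + f_-$ constructed in Appendix~A is \emph{not} sign-definite ($f_+\geq 0$, $f_-\leq 0$), so $W_2 f$ and $\b_m f$ are indefinite; and the proposed domination fails on the compact set where $\h \equiv 1$, since there $\{p,f\}\equiv 0$ while $\pppg x^{-1-\rho}f$ need not vanish. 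The paper avoids any symbolic positivity argument. Its proof begins with a dichotomy: either $\int\th_R(x)\pppg x^{-1-\rho}\,d\m \neq 0$ for some $R$ (and then $\m\neq 0$ immediately), or this fails for every $R$, which (via the tail bound $\pppg R^{2\d-1-\rho}$, using $\d<\tfrac{1+\rho}{2}$) gives $\innp{\pppg x^{-1-\rho}v_m}{v_m}\to 0$. In the second case, taking $\Im$ of $\innp{(\hhm-z_m)v_m}{v_m}$ yields the two crucial facts $\b_m\nr{v_m}_{L^2}^2\to 0$ and $\nr{\sqrt{W_2}\,v_m}_{L^2}\to 0$ (see \eqref{bmV2}--\eqref{W2fv}), and these are what make the $W_2 f$-, $\pppg x^{-1-\rho}f$- and $\b_m f$-contributions in the commutator vanish by \emph{direct} Cauchy--Schwarz estimates, with no sign of $f$ ever used. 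Your proof-by-contradiction framing does implicitly give the ``otherwise'' branch of the dichotomy (if $\m=0$ then \eqref{thetaR} fails), but you never extract the consequence $\innp{\pppg x^{-1-\rho}v_m}{v_m}\to 0$, and without it there is no way to control the commutator correction terms.
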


\begin{proof}
{\bf 1.}
Let $\th \in C_0^\infty(\R^n,[0,1])$ be supported in $B_2$ and equal to 1 on $B_1$. For $R > 0$ we set $\th_R(x) = \th\big( \frac x R \big)$. If there exists $R > 0$ such that
\begin{equation} \label{thetaR}
 \int_{\R^{2n}} \th_R(x) \pppg x ^{-1-\rho} \, d\m(x,\x) \neq 0,
\end{equation}
then the proposition is proved. Otherwise for any $R > 0$ we have
\begin{eqnarray*}
\lefteqn{\limsup_{m \to \infty} \abs{\innp{\pppg x ^{-1-\rho} v_m} {v_m}_{L^2(\R^n)}}}\\
&& \leq \limsup_{m \to \infty} \abs{  \innp{\th_R \pppg x ^{-1-\rho}v_m} {v_m}_{L^2(\R^n)} } + \limsup_{m \to \infty} \abs{\innp{(1- \th_R) \pppg x ^{-1-\rho} v_m} {v_m}_{L^2(\R^n)}}\\
&& \leq \nr{\pppg x ^{2\d-1-\rho} (1-\th_R) }_{L^\infty(\R^n)} \leq \pppg R^{2\d -1-\rho}.
\end{eqnarray*}
This proves that
\[
\innp{\pppg x ^{-1-\rho}v_m} {v_m}_{L^2(\R^n)} \limt m {+\infty} 0,
\]
and hence:
\begin{eqnarray} \label{bmV2}
\lefteqn{\b_m \nr{v_m}^2_{L^2(\R^n)} + \nr{\sqrt {W_2} v_m}^2_{L^2(\R^n)}}\\
\nonumber && = - h_m \inv \Im \innp{ (\hh-z_m) v_m}{v_m}_{L^2(\R^n)} + 2C \innp{\pppg x ^{-1-\rho} v_m}{v_m}_{L ^2(\R^n)}  \limt m {+\infty} 0.
\end{eqnarray}
Since both terms of the left-hand side are non-negative, this means that each goes to 0 as $m$ goes to $+\infty$. Moreover $\sqrt {W_2}$ is smooth and all its derivatives of order at least 1 belong to $\symb \big ( \pppg x ^{-\d} \big)$, so for any $f \in \symbor (\R^{2n})$ we have
\begin{equation} \label{W2fv}
 \nr {\sqrt {W_2} \Opw(f) v_m}_{L^2(\R^n)} = \nr {\Opw(f) \sqrt {W_2} v_m}_{L^2(\R^n)} + \bigo m \infty(h_m) \limt m \infty 0.
\end{equation}

\noindent
{\bf 2.}
Let $\h \in C_0^\infty(\R^n,[0,1])$, $\tilde \h \in C_0^\infty(\R,[0,1])$ and $f \in \symbor(\R^{2n},\R)$ as given by Proposition \ref{prop-fonc-fuite}. For $m\in\N$ we have:
\begin{align*}
1
& = \innp{ \pppg x^{-2\d} v_m}{v_m}\\
& = \innp{ \pppg x^{-2\d} \h(x) v_m}{v_m} + \innp{\Opwm  \big( \pppg x ^{-2\d} (1-\h(x)) ((1-\tilde \h)\circ p)\big) v_m }{v_m}\\
& \quad +  \innp{\Opwm  \big( \pppg x ^{-2\d} (1-\h(x)) (\tilde \h\circ p)\big) v_m }{v_m}.
\end{align*}
According to Proposition \ref{prop-burq-debut}, the second term goes to 0 as $m$ goes to $+\infty$. We now prove that this also holds for the third term to prove that
\[
 \int_{\R^{2n}} \pppg x ^{-2\d} \h(x) \, d\m(x,\x) = \lim_{m \to \infty} \innp{ \pppg x^{-2\d} \h(x) v_m}{v_m} \neq 0.
\]

\noindent
{\bf 3.}
We have
\[
\frac i {h_m} [\huhm , \Opwm(f)] =  \Opwm (\{p,f\})  + h_m^2 \Opwm(r_3(h_m)),
\]
where $r_3(h) \in \symb\big(\pppg x ^{-1-\rho}\big)$ uniformly in $h \in ]0,1]$, and hence
\begin{align*}
\innp{\Opw\big((\tilde \h \circ p) (1-\h(x)) \pppg x^{-2\d}\big) v_m}{v_m}
& = \innp{\Opwm(\{p,f\}) v_m}{v_m}\\
& = \frac i {h_m} \innp{[\huh , \Opwm(f)] v_m}{v_m} + \bigo m \infty(h_m^2).
\end{align*}
Since $v_m \in H^2(\R^n)$ for all $m \in\N$ and according to \eqref{bmV2}-\eqref{W2fv} we have
\begin{eqnarray} \label{conv-infinity}
\lefteqn{\innp{\Opw\big((\tilde \h \circ p) (1-\h(x)) \pppg x^{-2\d}\big) v_m}{v_m}}\\
\nonumber && = \frac i {h_m} \innp{\left( (\hhm-z_m)^* \Opwm(f) - \Opwm(f) (\hhm-z_m) \right) v_m}{v_m} + \littleo m \infty(1)\\
\nonumber&& \limt m \infty 0,
\end{eqnarray}
which concludes the proof.
%
%
\end{proof}

Let
\[
 \hdh = -h^2 \D + V_1(x) - ih W_2 (x).
\]
The Schrödinger operator $\hdh$ is dissipative and its dissipative part is positive on trapped trajectories of energy $E$, so there exist a neighborhood $I$ of $E$, $h_0 >0$ and $C_2 >0$ such that for $h \in ]0,h_0]$ and $z \in\C_{I,+}$ we have
\[
 \nr{\pppg x^{-\d} (\hdh -z)\inv \pppg x ^{-\d}}_{\Lc(L^2(\R^n))} \leq \frac {C_2} h
\]
(see \cite{art_mourre}). Since $2\d <1+\rho$ we can write $2C \pppg x^{-1-\rho} = W_3 + W_4$ where $W_4 \in C_0^\infty(\R^n)$ and
\[
 \forall x \in \R^n, \quad \pppg x^{2\d} \abs{W_3(x)} \leq \frac {1} {2C_2}.
\]
Put
\[
\hth = -h^2 \D + V_1(x) - ih W_2(x) +  ih W_3(x) = \hdh + ih W_3(x).
\]
Let $z \in \C_{I,+}$. By a standard perturbation argument, we know that for $h \in ]0,h_0]$ the resolvent $(\hth -z)\inv$ is well-defined and
\begin{equation}\label{estim-hth}
 \nr{\pppg x ^{-\d} (\hth -z)\inv \pppg x ^{-\d}}_{\Lc(L^2(\R^n))} \leq \frac {2C_2} h.
\end{equation}
As a result we can apply Theorem \ref{th-incoming} with $\hth$ on $\C_{I,+}$ (we recall that $V_2$ has to be of short range here). We use this result to prove that the semiclassical measure $\m$ is supported outside $\O_\infty^-(\singl E)$:

\begin{proposition} \label{prop-mzero-incoming}
$\m = 0$ on $\O_\infty^-(\singl{E})$.
\end{proposition}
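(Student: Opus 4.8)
The plan is to show that the semiclassical measure $\m$ vanishes on the incoming region $\O_\infty^-(\singl E)$, using the comparison operator $\hth$ (a small short-range perturbation of the dissipative operator $\hdh$, for which the resolvent estimate \eqref{estim-hth} holds) together with the incoming estimate of Theorem \ref{th-incoming}. The key point is that a point $w_\infty \in \O_\infty^-(\singl E)$ lies, after flowing backward for a large time, in an incoming region $\zoneS_-(R,d,-\s)$ with $d,\s$ small; by the classical dynamics (Proposition \ref{prop-non-entr} and the properties of the flow at infinity recalled in Section 2), the whole backward half-trajectory from that point stays in such an incoming region. So it suffices to show $\innp{\Opwm(q)v_m}{v_m} \to 0$ for $q \in C_0^\infty(\R^{2n},[0,1])$ supported in a small conic incoming neighborhood of such a $w_\infty$, of the form $B_x(r) \cap \zoneS_-(R,d,-\s)$.

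First I would fix such a cut-off $q = \o_-$ supported in $B_x(r)\cap \zoneS_-(R,d,-\s)$, and use the propagation formula \eqref{propagation-mu}: $\int q\,d\m$ equals the integral of $(q\circ\vf^t)$ weighted by an exponential factor, which together with finiteness of $\m$ (Proposition \ref{prop-masse-finie}) reduces matters to controlling $q\circ\vf^{-t}$-type quantities — but as in the purely non-dissipative Theorem \ref{th-vp} case one no longer gets decay for free, so instead I would follow the approach of \cite{art_mesure}. The strategy is to compare $v_m$ with the outgoing solution of the dissipative-type operator $\hth$: write $(\hhm - z_m)v_m = o(h_m)$ in $L^{2,\d}$, and $(\hhm - z_m) = (\hth - z_m) + ih_m(W_2 - W_3 - V_2/\dots)$ — more precisely $\hhm - \hth = -ih(V_2 - W_2 + W_3) = ih\,W_4 \cdot(\text{bounded compactly supported})$ up to the explicit zeroth-order terms; the difference is $O(h_m)$ times a compactly supported (hence nice) operator. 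Hence $v_m = (\hth - z_m)\inv g_m$ with $g_m = (\hth-z_m)v_m$ bounded in $L^{2,\d}$ (using \eqref{estim-hth} backwards one controls $v_m$).

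Then I would insert a microlocal cut-off $\Op(\o)$ supported away from the incoming region $\zoneS_-(R_1,d_1,-\s_1)$: the point is that $v_m$, or rather $g_m$, is essentially microlocalized there up to $o(1)$ errors, because on $p\inv(\singl E)$ away from $\O_\infty^-$ one is away from the deep incoming region — this needs an argument that $\Op(1-\o)g_m$ contributes negligibly near $w_\infty$, which is where one uses that $w_\infty\in\O_\infty^-$ sits on a backward trajectory escaping to infinity. Applying Theorem \ref{th-incoming} to $\hth$ with weights $\pppg x^{\b}$ ($\b = 0$ suffices), we get $\nr{\pppg x^{0}\Op(\o_-)(\hth-z_m)\inv\Op(\o)\pppg x^0} = O(h_m^\infty)$ uniformly in $z_m \in \C_{I,+}$, so $\innp{\Opwm(\o_-)v_m}{v_m}$ is bounded by $\nr{\Op(\o_-)(\hth-z_m)\inv\Op(\o)g_m}\nr{v_m} + (\text{error from }1-\o) \to 0$, giving $\int q\,d\m = 0$. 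Covering $\O_\infty^-(\singl E)\cap p\inv(\singl E)$ (which is a manageable set since $\m$ has finite mass and is supported on $p\inv(\singl E)$) by finitely many, or countably many, such incoming neighborhoods then yields $\m = 0$ there.

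The main obstacle I expect is the bookkeeping needed to pass from ``$v_m$ solves $(\hhm-z_m)v_m = o(h_m)$'' to ``$v_m = (\hth-z_m)\inv$ applied to something that is microlocally supported away from the incoming region near $w_\infty$'': one must handle the non-compactly-supported, merely $L^{2,-\d}$-bounded nature of $v_m$ (weights matter here, which is why Theorem \ref{th-incoming} is stated with the $\pppg x^\b$ weights and why $\d < \frac{1+\rho}2$), verify the perturbation $\hhm - \hth$ really is $O(h_m)$ in the relevant weighted norms, and justify that the ``wrong'' microlocal piece $\Op(1-\o)$ — supported in the incoming region — does not see the energy shell near $w_\infty$ in a way that would spoil the estimate. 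This is exactly the step that \cite{art_mesure} carries out in the dissipative case, and the claim of the proposition is precisely that the argument goes through verbatim once one has the non-dissipative incoming estimate Theorem \ref{th-incoming} and the dissipative-type bound \eqref{estim-hth}; the remaining propagation/continuity facts about the flow are supplied by Section 2.
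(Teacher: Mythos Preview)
Your strategy is correct and is essentially the paper's, but you overcomplicate the key step. The paper writes directly
\[
\Opwm(\o_-)v_m = \Opwm(\o_-)(\hthm-z_m)\inv(\hhm-z_m)v_m - ih_m\,\Opwm(\o_-)(\hthm-z_m)\inv W_4 v_m
\]
and handles the two terms separately: the first by the weighted resolvent bound \eqref{estim-hth} together with $\nr{(\hhm-z_m)v_m}_{L^{2,\d}} = o(h_m)$; the second by Theorem \ref{th-incoming}, because $W_4 \in C_0^\infty(B_{R_1})$ is \emph{already} a symbol supported outside $\zoneS_-(R_1,d_1,-\s_1)$ and so plays the role of your $\Op(\o)$ with no extra work. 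There is no need to introduce an auxiliary cut-off and then worry about the piece $\Op(1-\o)g_m$. Once $\m = 0$ is established on a single incoming region $\zoneS_-(R,d,-1/2)$, a single application of the propagation identity \eqref{propagation-mu} (here $\b = 0$) handles every $q \in C_0^\infty(\O_\infty^-(J))$, since $\vf^{-t}(\supp q)$ enters that region for $t$ large; the covering argument is unnecessary.

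Two small corrections: you cannot invoke Proposition \ref{prop-masse-finie} here, since in this section $v_m$ is only bounded in $L^{2,-\d}$ and $\m$ may have infinite total mass (the paper explicitly notes this later when treating $\O_b^-$); and in Theorem \ref{th-incoming} one needs $\b = \d$, not $\b = 0$, to absorb the weights coming from $\nr{v_m}_{L^{2,-\d}}$.
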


\begin{proof}
Let $J \subset I \cap ]E/2,2E[$ be a neighborhood of $E$. We first check that $\m=0$ in the incoming region $\zoneS_-(R,0,-1/2)$ for some $R$ large enough. Let $R_1$ be such that $\supp W_4 \subset B_{R_1}$, $d \in ]0, \sqrt{E/2}[$ and $\s = \frac 12$. Let $R$ be given by Theorem \ref{th-incoming} applied to $\hth$, and finally $\o_- \in C_0^\infty(\R^{2n})$ supported in $\zoneS_-(R,d,-1/2)$. For $m$ large enough the operator $(\hthm -z_m)$ has a bounded inverse, so we can write
\begin{align*}
\Opwm(\o_-) v_m
& =  \Opwm(\o_-) (\hthm -z_m)\inv (\hthm -z_m) v_m\\
& = \Opwm(\o_-) (\hthm -z_m)\inv (\hhm -z_m)v_m\\
& \quad  - ih_m \Opwm(\o_-)(\hthm -z_m)\inv W_4 v_m.
\end{align*}
According to \eqref{estim-hth} and Theorem \ref{th-incoming} we obtain
\begin{eqnarray}
\label{calc-zone-entr}
\lefteqn{\nr{\Opwm(\o_-) v_m}_{L^{2,\d}(\R^n)}}\\
\nonumber
&& =  \nr{ \pppg x ^\d \Opw(\o_-) \pppg x ^\d } \nr{\pppg x ^{-\d} (\hthm -z_m)\inv \pppg x ^{-\d}} \nr{(\hhm-z_m)v_m}_{L^{2,\d}(\R^n)}\\
\nonumber
&&\quad  +  h_m \nr{\pppg x ^\d\Opw(\o_-) (\hthm -z_m)\inv  W_4 \pppg x ^\d}   \nr{v_m}_{L^{2,-\d}(\R^n)}\\
\nonumber
&& \limt m \infty 0.
\end{eqnarray}
This proves that
\[
\innp{\Opwm(\o_-) v_m}{v_m}_{L^2(\R^n)} \limt m \infty 0,
\]
and hence $\m = 0$ on $\zoneS_-(R,d,-1/2)$. Now let $q \in C_0^\infty(\R^{2n})$ supported in $\O_\infty^-(J)$. For $t \geq 0$ large enough we have $\vf^{-t} (\supp q) \subset \zoneS_-(R,d,-1/2)$. According to Proposition \ref{prop-burq-debut} we obtain
\[
 \int_{\R^{2n}} q \, d\m = \int_{\R^{2n}} (q \circ \vf^t) \exp \left( -2 \int_0^t V_2 \circ \vf^s \, ds \right) \, d\m = 0,
\]
which proves that $\m = 0$ on $\O_\infty^-(J)$.
\end{proof}

When \eqref{thetaR} holds for some $R >0$, then Proposition \ref{prop-mu-neq0} is proved but not \eqref{conv-infinity}. So we cannot use it to show that $\m$ is zero at infinity as is done in the self-adjoint case (see \cite{jecko04}).

\begin{proposition}
$\m = 0$ on $\O_b^-(\singl{E})$.
\end{proposition}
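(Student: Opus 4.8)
The plan is to mimic the proof of Proposition \ref{prop-vp-ob} from Section 3, now working with the full non-dissipative operator and exploiting that $\m$ is supported on $\O_b^-(\singl E)$ together with the ``average positivity'' of the absorption supplied by Proposition \ref{prop-gros-amort}. Since we are no longer at distance $\b h$ from the real axis, we have $\b_m \to 0$, so the propagation identity \eqref{propagation-mu} reads, for $q \in C_0^\infty(\R^{2n},[0,1])$,
\[
\int_{\R^{2n}} q \, d\m = \int_{\R^{2n}} (q\circ\vf^t) \exp\left(-2\int_0^t V_2 \circ \vf^{t-s}\, ds\right) d\m
= \int_{\R^{2n}} (q\circ\vf^t) \exp\left(-2\int_0^t V_2 \circ \vf^{-s}\, ds\right) d\m .
\]
Using $0 \leq q\circ\vf^t \leq \1{\O_b^-(\singl E)\cap\supp(q\circ\vf^t)}$ and the fact that $\m$ has finite total mass (Proposition \ref{prop-masse-finie} — which applies verbatim here since its proof only used the $L^2$ bound $\nr{v_m}_{L^{2,-\d}}=1$ restricted to compact sets), we bound the right-hand side by $\m(\R^{2n})$ times the supremum over $w$ in the relevant compact subset of $\O_b^-(\singl E)$ of $\exp(-2\int_0^t V_2(\bar x(-s,w))\,ds)$. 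By Proposition \ref{prop-gros-amort} applied with $\pm = -$ and a compact $K$ containing $\O_b^-(\singl E)\cap\supp q$, there are $c_0>0$, $C\geq 0$ with $\int_0^t V_2(\bar x(-s,w))\,ds \geq c_0 t - C$, so the supremum is $\leq e^{2C} e^{-2c_0 t} \to 0$ as $t\to +\infty$. Hence $\int q\,d\m = 0$ for every such $q$, giving $\m = 0$ on $\O_b^-(\singl E)$.

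There is one subtlety to address before invoking Proposition \ref{prop-gros-amort}: it is stated for a compact set $K\subset\O_b^-(J)$, while here we want to integrate over $\O_b^-(\singl E)$. This is fine because $\m$ is supported on $p\inv(\singl E)$, and $\O_b^-(\singl E)$ is compact (by the discussion at the start of Section 2, $\O_b(J)\subset B_x(\Rc)$ is compact for $J$ a small compact neighborhood of $E$, and $\O_b^-(\singl E) = \O_b^-(J)\cap p\inv(\singl E)$ is a closed subset of the bounded set $\O_b^-(J)\cap p\inv(\singl E)$, hence compact). So one takes $K$ to be any compact neighborhood in $\O_b^-(J)$ of $\O_b^-(\singl E)\cap\supp q$; alternatively, since $q$ is compactly supported, simply take $K = \O_b^-([E/2,2E])\cap\supp q$, which is compact. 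The exponential decay constants $c_0, C$ then depend on $q$ but that is harmless since $t$ is sent to infinity for fixed $q$.

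Combining this proposition with Propositions \ref{prop-mzero-incoming} and \ref{prop-mu-neq0}: we have shown $\m = 0$ on $\O_\infty^-(\singl E)$ and on $\O_b^-(\singl E)$, and since $\m$ is supported on $p\inv(\singl E) = \O_b^-(\singl E)\sqcup\O_\infty^-(\singl E)$ (the decomposition from Section 2, valid for energies in $\R_+^*$), we conclude $\m = 0$ on $\R^{2n}$. This contradicts Proposition \ref{prop-mu-neq0}, which asserts $\m\neq 0$, and therefore completes the proof by contradiction of Theorem \ref{th-estim}. The main obstacle, compared to the $\Im z \geq \b h$ case of Section 3, was precisely that the $+\b$ term is gone, so the decay of the exponential weight must come entirely from the damping assumption \eqref{hyp-amfaible} via its time-averaged form in Proposition \ref{prop-gros-amort}; the rest is a direct transcription of the earlier argument.
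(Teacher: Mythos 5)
Your proof has a genuine gap at the step where you invoke the finiteness of the total mass of $\m$. You bound $\int q\,d\m$ by $\m(\R^{2n})$ times the supremum of the exponential weight, citing Proposition \ref{prop-masse-finie}. But that proposition was proved in Section 3 under the normalization $\nr{v_m}_{L^2(\R^n)}=1$, which is exactly what allows the operator bound $\nr{\Opwm(q)}_{\Lc(L^2)} \leq C\nr{q}_{L^\infty} + O(\sqrt{h_m})$ (with $C$ depending only on $n$) to yield $\m(B_{\R^{2n}}(k)) \leq C$ \emph{uniformly} in $k$. In the present Section 5 setting the normalization is only $\nr{v_m}_{L^{2,-\d}(\R^n)}=1$ with $\d>\frac12$, so estimating $\innp{\Opwm(q)v_m}{v_m}$ forces you to insert weights $\pppg x^{\pm\d}$ and the resulting bound is of order $\pppg k^{2\d}$ for $q$ supported in $B_{\R^{2n}}(k)$ --- no longer uniform in $k$. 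The paper's proof of this very proposition flags the point explicitly: \emph{``even though the total measure of $\m$ is no longer necessarily finite.''} Your parenthetical claim that Proposition \ref{prop-masse-finie} ``applies verbatim'' only shows that $\m$ is locally finite (a Radon measure), which is much weaker and does not license your bound by $\m(\R^{2n})$.

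The paper circumvents the issue by a different observation: the integrand $\1{\O_b^-(\singl E)}\,(q\circ\vf^t)\,\exp\big(-2\int_0^t V_2\circ\vf^{t-s}\,ds\big)$ is supported in $\vf^{-t}\big(\supp q\cap\O_b^-(\singl E)\big)$, and the union over $t\geq 0$ of these sets is a \emph{bounded} subset of $\R^{2n}$ (because, as noted in Section 2, the backward orbit of any bounded subset of $\O_b^-(J)$ stays bounded). Since $\m$ is a nonnegative Radon measure, the $\m$-mass of that one fixed bounded set is a finite constant $c$, and one obtains $\int q\,d\m \leq c\,\sup_{\supp q\cap\O_b^-(\singl E)}\exp\big(-2\int_0^t V_2\circ\vf^{-s}\,ds\big)$, which tends to $0$ by Proposition \ref{prop-gros-amort} exactly as you describe. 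So the repair is short --- replace $\m(\R^{2n})$ by the $\m$-mass of the bounded union of backward orbits --- but this is precisely the substantive change from the Section 3 argument, and it is the one you must not skip.

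A small secondary point: your assertion that $\O_b^-(\singl E)$ is compact is incorrect, and the justification given is circular. That set is closed in $\R^{2n}$ but in general unbounded (an outgoing phase point far from the origin can have a trapped backward trajectory). This does not hurt the rest of your argument since only $\supp q\cap\O_b^-(\singl E)$ enters, and that set is compact.
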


\begin{proof}
This proposition is proved as Proposition \ref{prop-vp-ob}, even though the total measure of $\m$ is no longer necessarily finite. Let $q \in C_0^\infty(\R^{2n},[0,1])$. We know that $\m$ is supported in $\O_b^-(\singl{E})$, so according to Proposition \ref{prop-burq-debut} we can write
\[
\int_{\R^{2n}} q \, d\m = \int_{\R^{2n}} \1 {\O_b^-(\singl E)} (q\circ \vf ^t) \exp\left( -2 \int_0^t V_2 \circ \vf^{t-s} \, ds\right) d\m.
\]
Since the set
\[
\bigcup _{t\geq 0} \vf^{-t}(\supp q \cap \O_b^-(\singl E))
\]
is bounded, there exists $c \geq 0$ such that for all $t \geq 0$
\[
\int_{\R^{2n}} q \, d\m \leq c \sup_{\supp q \cap \O_b^-(\singl E)}   \exp\left( -2 \int_0^t V_2 \circ \vf^{-s} \, ds\right).
\]
Then we can conclude with Proposition \ref{prop-gros-amort}.
\end{proof}

\section{Limiting Absorption Principle}

After having proved resolvent estimates on $\C_{I,+}$, we can show the limiting absorption principle and study existence and uniqueness of an outgoing solution for \eqref{helmholtz}. Before giving more precise statements, we introduce some notation. 
Let
\[
 \C_{++} = \singl{ \z \in \C \tqe \Re \z > 0, \Im \z \geq 0}.
\]
 For $u \in H^1(\R^n)$ supported outside a neighborhood of 0 we set
\[
\partial_r u = \frac {x \cdot \nabla u} {\abs x}, \quad    \dpar u = \partial_r u  + \frac {n-1}{2\abs x}u \quad \text{and} \quad \nabla_\bot u  = \nabla u  -  \frac {x \partial_r u} {\abs x}.
\]
We are going to use the following basic properties of these operators:
\begin{proposition}  \label{prop-dpar}
For $u,v\in H^1(\R^n)$ supported outside a neighborhood of 0 we have
\[
\frac d {dr}  \innp u v _\ldgr  = \innp{\dpar u} v _\ldgr  + \innp u {\dpar v}_\ldgr,\\
\]
and
\[
\innp{\partial _r u}{\partial_r v} = \innp{\dpar u}{\dpar v} + \innp{ \frac {(n-1)(n-3)}{4\abs x^2} u} v.
\]
If moreover $u$ belongs to $H^2(\R^n)$ we also have
\[
\partial_r \nabla_\bot u(x) = - \frac 1 {\abs x} \nabla _\bot u(x) + \nabla_\bot \partial_r u(x) .
\]
\end{proposition}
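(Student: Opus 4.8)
The plan is to verify each of the three identities in Proposition~\ref{prop-dpar} by direct computation in spherical coordinates, reducing everything to the radial variable $r = \abs x$ and the sphere $\Sphr$. Throughout, I write a point of $\R^n \setminus \singl 0$ as $x = r\theta$ with $r>0$ and $\theta \in \Sph^{n-1}$, so that the Lebesgue measure on $\Sphr$ of radius $r$ is $r^{n-1}$ times the surface measure on $\Sph^{n-1}$, and I note that $\partial_r u(r\theta) = (\nabla u)(r\theta)\cdot\theta$ is exactly the operator $\partial_r$ defined above.

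\textbf{First identity.} For the derivative of the inner product $\innp u v_\ldgr = \int_{\Sph^{n-1}} u(r\theta)\,\overline{v(r\theta)}\, r^{n-1}\, d\theta$, I would differentiate under the integral sign. The $r$-derivative produces $(\partial_r u)\bar v\, r^{n-1} + u\,\overline{\partial_r v}\, r^{n-1} + (n-1)r^{n-2} u\bar v$. Splitting the last term symmetrically as $\frac{n-1}{2r}u\cdot\bar v + u\cdot\frac{n-1}{2r}\bar v$ times $r^{n-1}$ and regrouping gives precisely $\innp{\dpar u}v_\ldgr + \innp u{\dpar v}_\ldgr$, since $\dpar = \partial_r + \frac{n-1}{2\abs x}$. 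A density argument ($C^\infty$ functions supported away from $0$ are dense in the relevant $H^1$ space with the stated support condition) handles the regularity bookkeeping.

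\textbf{Second identity.} Here the point is that $\dpar = \partial_r + \frac{n-1}{2r}$ is, up to the multiplication operator $r^{(n-1)/2}$, a conjugate of $\partial_r$: namely $\dpar u = r^{-(n-1)/2}\,\partial_r\big(r^{(n-1)/2} u\big)$. Expanding $\innp{\dpar u}{\dpar v}$ over $\R^n$ using this conjugation, or else expanding directly, one gets $\innp{\partial_r u}{\partial_r v}$ plus cross terms involving $\frac{n-1}{2r}$ plus the term $\big(\frac{n-1}{2r}\big)^2 = \frac{(n-1)^2}{4\abs x^2}$. The cross terms, after an integration by parts in $r$ (using that the functions are supported away from $0$, so there are no boundary contributions), combine with $\frac{(n-1)^2}{4r^2}$ to produce $\frac{(n-1)(n-3)}{4r^2}$; the shift from $(n-1)^2$ to $(n-1)(n-3)$ comes exactly from the derivative of $r^{-1}$ hitting the weight. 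So $\innp{\partial_r u}{\partial_r v} = \innp{\dpar u}{\dpar v} + \innp{\frac{(n-1)(n-3)}{4\abs x^2}u}v$.

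\textbf{Third identity.} For $u \in H^2$, I would compute $\partial_r \nabla_\bot u$ directly from $\nabla_\bot u = \nabla u - \frac{x}{\abs x}\partial_r u$. Applying $\partial_r = \frac{x}{\abs x}\cdot\nabla$ and using $\partial_r\big(\frac{x}{\abs x}\big) = 0$ (the unit radial field is $0$-homogeneous) together with $\partial_r \nabla u = \nabla\partial_r u - (\text{correction})$ — more carefully, differentiating the relation $\partial_r u = \frac{x}{\abs x}\cdot\nabla u$ and comparing with $\nabla_\bot\partial_r u = \nabla\partial_r u - \frac{x}{\abs x}\partial_r^2 u$ — one extracts the stated formula $\partial_r\nabla_\bot u = -\frac{1}{\abs x}\nabla_\bot u + \nabla_\bot\partial_r u$. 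The cleanest route is to work componentwise: with $\omega = x/\abs x$, $\partial_j(\omega_k) = \frac{1}{\abs x}(\delta_{jk} - \omega_j\omega_k)$, and then $\partial_r(\nabla_\bot u)_k = \omega_j\partial_j\big(\partial_k u - \omega_k\partial_r u\big)$ unwinds to the claim after collecting the $\frac{1}{\abs x}$ terms.

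\textbf{Main obstacle.} None of the three steps is deep; the only place where care is genuinely needed is the integration by parts in the second identity — getting the sign and the weight factor right so that $(n-1)^2$ becomes $(n-1)(n-3)$ — and, secondarily, making sure all manipulations are justified in the stated Sobolev classes, which is why the support-away-from-$0$ hypothesis and a density argument are invoked. I expect the bulk of the write-up to be the elementary spherical-coordinate computation for the second and third identities.
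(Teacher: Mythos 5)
The paper states Proposition~\ref{prop-dpar} without proof, treating the three identities as elementary facts, so there is no "paper's proof" to compare against. Your direct computation in polar coordinates is the natural route and it is correct: for the first identity the extra term $(n-1)r^{n-2}u\bar v$ from differentiating the Jacobian splits symmetrically into the two $\frac{n-1}{2r}$ contributions; for the second, the cross term $\int \frac{n-1}{2r}\partial_r(u\bar v)\,r^{n-1}\,d\theta\,dr$ integrates by parts to $-\frac{(n-1)(n-2)}{2}\int r^{-2}u\bar v$, and $\frac{(n-1)^2}{4}-\frac{(n-1)(n-2)}{2}=-\frac{(n-1)(n-3)}{4}$, which is exactly the claimed shift; and for the third, the componentwise computation you sketch, using $\partial_j\omega_k=\frac{1}{\abs x}(\delta_{jk}-\omega_j\omega_k)$ and hence $\omega_j\partial_j\omega_k=0$, closes up to give $\partial_r(\nabla_\bot u)_k=(\nabla_\bot\partial_r u)_k-\frac{1}{\abs x}(\nabla_\bot u)_k$. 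The density argument to justify differentiating under the integral and the absence of boundary terms (supports away from the origin) is exactly the right bookkeeping.
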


In this section we consider a Schrödinger operator
\[
 H = -\D + V_1(x) -i V_2(x)
\]
with domain $\Dom(H) = H^2(\R^n)$. $V_1$ and $V_2$ are bounded and real-valued. We assume that 
\begin{equation} \label{dec-pot}
 V_1 -iV_2 = W_1 -iW_2 + W_3,
\end{equation}
where: 
\begin{enumerate}[(i)]
\item $W_1$ and $W_2$ are differentiable,
\item for all $x \in \R^n$ we have $W_1(x) \in \R$, $W_2(x) \geq 0$ and $W_3(x) \in \C$,
\item there exist $\rho >0$ and $c\geq 0$ such that for all $x \in \R^n$ we have
\begin{equation} \label{dec-W}
 \abs{W_1(x)} + W_2(x) \leq c\pppg x ^{-\rho} \quad \text{and} \quad \abs{\nabla W_1(x)} + \abs{\nabla W_2(x)} + \abs{W_3(x)} \leq c \pppg x ^{-1-\rho}.
\end{equation}
\end{enumerate}
For $x \in B_1^c$ we also set
\[
 \tilde W_3(x) = W_3 (x) + \frac {(n-1)(n-3)}{4 \abs x^2}.
\]

\begin{definition} \label{def-outgoing}
 Let $\z \in \C_{++}$, $f \in L^2_\loc(\R^n)$, and suppose that $u \in H^2_\loc(\R^n)$ is a solution for the equation
\begin{equation} \label{eq-radiation}
 (H-\z^2) u = f.
\end{equation}
Then we say that $u$ is an outgoing solution for \eqref{eq-radiation} if there exists $\d > \frac 12$ such that $(\dpar-i\z) u \in L^{2,\d-1}(B_1^c)$.
\end{definition}

Let $\d \in \left] \frac 12, \frac 12 + \frac \rho 4 \right[$ be fixed for all this section, $\rho$ begin given by \eqref{dec-W}. 
Let $K$ be a compact subset of $\C_{++}$ such that $K = \bar {K \cap \C_+}$. We set $K^* = K \cap \C_+$.

\begin{proposition} \label{prop-lap}
Assume the resolvent $(H-\z^2)\inv$ is defined for all $\z \in  K^*$ and the equation $(H-\l^2)u=0$ has no non-trivial outgoing solution when $\l \in K\cap \R_+^*$. Let $\l \in K \cap \R_+^*$ and $f \in  L^{2,\d}(\R^n)$. 

Then $(H-\z^2)\inv f $ converges in $L^{2,-\d}(\R^n)$ to the unique outgoing solution for the equation $(H-\l^2)u = f$ when $\z \in K^*$ goes to $\l$.

Moreover, there exists a constant $C \geq 0$ such that for any $\z \in K$ and $f \in L^{2,\d}(\R^n)$, if we denote by $u$ the unique outgoing solution for the equation $(H-\z^2)u = f$ then we have for all $R >0$ the following estimates:
\begin{equation} \label{estim-lap}
 \nr u _{L^{2,-\d}(\R^n)} + \nr{(\dpar -i\z)u}_{L^{2,\d-1}(B_1^c)} + R^{\d - \frac 12} \nr {u}_{L^{2,-\d}(B_R^c)}  \leq C \nr f _{L^{2,\d}(\R^n)}.
\end{equation}
\end{proposition}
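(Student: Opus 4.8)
The plan is to organize the argument around a single a priori estimate and then bootstrap from it to uniqueness, convergence, and the quantitative bound. First I would establish the \emph{uniform a priori estimate}: there exists $C\geq 0$ such that for every $\z\in K^*$ and every $f\in L^{2,\d}(\R^n)$, writing $u=(H-\z^2)\inv f$,
\[
\nr u_{L^{2,-\d}(\R^n)}+\nr{(\dpar-i\z)u}_{L^{2,\d-1}(B_1^c)}+\sup_{R>0}R^{\d-\frac12}\nr u_{L^{2,-\d}(B_R^c)}\leq C\nr f_{L^{2,\d}(\R^n)}.
\]
The $L^{2,-\d}$ bound on all of $\R^n$ is exactly the resolvent estimate hypothesis (with $h=1$), so the real content is the boundary term $(\dpar-i\z)u$ and the weighted exterior decay. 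These come from a Morawetz/commutator computation with the equation $(H-\z^2)u=f$: multiply by suitable multipliers of the form $\pppg x^{\a}\partial_r u$, $\pppg x^\a u$ (and use Proposition \ref{prop-dpar} to handle the angular derivatives $\nabla_\bot u$ and the potential terms $\tilde W_3$), integrate by parts, and absorb the long-range pieces of $W_1,W_2,W_3$ using \eqref{dec-W} together with $2\d<1+\frac\rho2$. The positivity $W_2\geq 0$ and $\Im\z^2\geq 0$ give a good-signed imaginary part, which is what lets one close the estimate; this is essentially the Isozaki-type identity and I would cite/adapt the computation of \cite{ikebes72,saito,saito74}.

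Next, \emph{uniqueness for $\l\in K\cap\R_+^*$} is assumed in the statement, but I also need that the limit object is well-defined and outgoing. So the second step is a \emph{compactness/limiting argument}: fix $\l\in K\cap\R_+^*$, $f\in L^{2,\d}$, and take any sequence $\z_k\in K^*$ with $\z_k\to\l$. By the a priori estimate the family $u_k=(H-\z_k^2)\inv f$ is bounded in $L^{2,-\d}(\R^n)$ and in $H^2_{\loc}$ (local elliptic regularity from the equation), so a subsequence converges weakly in $L^{2,-\d}$ and strongly in $L^2_{\loc}$ to some $u$ solving $(H-\l^2)u=f$. The uniform bound on $(\dpar-i\z_k)u_k$ in $L^{2,\d-1}(B_1^c)$ passes to the limit (weak lower semicontinuity of the norm), so $u$ is outgoing. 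By the assumed uniqueness of the outgoing solution of $(H-\l^2)u=f$ (which follows from the no-nontrivial-outgoing-solution hypothesis by linearity), the limit $u$ does not depend on the subsequence, hence $(H-\z^2)\inv f\to u$ in $L^{2,-\d}(\R^n)$ as $\z\to\l$ in $K^*$; a standard argument upgrades weak to strong convergence using that the exterior tails are uniformly small (the $R^{\d-\frac12}\nr{u_k}_{L^{2,-\d}(B_R^c)}$ bound) while on $B_R$ one has strong $L^2$ convergence.

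Finally, the \emph{quantitative estimate \eqref{estim-lap} on all of $K$} follows by combining the two previous steps: for $\z\in K^*$ it is the a priori estimate; for $\z\in K\cap\R_+^*$ it follows by passing to the limit in the a priori estimate along $\z_k\to\z$, using lower semicontinuity of each of the three norms and the already-established convergence $u_k\to u$. The constant $C$ is the one from the a priori estimate and is uniform over $K$ since $K$ is compact and the hypotheses hold uniformly there. The main obstacle I expect is the a priori estimate itself — specifically, carrying out the Morawetz-type integration by parts so that \emph{all} three terms (weighted $L^2$ norm, boundary/radiation term, and exterior decay) are controlled simultaneously, while the non-self-adjoint and long-range contributions $W_1,W_2,W_3,\tilde W_3$ are genuinely error terms; this is where the precise range $\d\in\left]\frac12,\frac12+\frac\rho4\right[$ and the decomposition \eqref{dec-pot}--\eqref{dec-W} are used, and where one must be careful that $W_2\geq0$ only helps the imaginary part and does not need a sign on its derivative.
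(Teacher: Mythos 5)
Your proposal captures much of the surrounding machinery (the Morawetz/commutator estimates of Lemma \ref{lem4.1} and Propositions \ref{prop-dparz}--\ref{lem2.5bis}, and the compactness-plus-uniqueness limiting argument from Saito \cite{saito}), but it contains a genuine gap in the logic: you assert that the $L^{2,-\delta}(\R^n)$ bound on $u$ ``is exactly the resolvent estimate hypothesis (with $h=1$).'' That is not what the proposition assumes. The hypotheses of Proposition \ref{prop-lap} are only that $(H-\z^2)^{-1}$ \emph{exists} for each $\z\in K^*$ (with no uniformity as $\z$ approaches $\R_+^*$) and that the homogeneous equation has no non-trivial outgoing solution at $\l\in K\cap\R_+^*$. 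The uniform bound $\nr{u}_{L^{2,-\d}}\leq C\nr f_{L^{2,\d}}$ over all of $K^*$ is precisely the first --- and hardest --- term of the conclusion \eqref{estim-lap}, not a given.

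What your sketch is missing is the contradiction argument that \emph{derives} this uniform weighted bound from the uniqueness hypothesis. The paper's route: suppose the bound fails, so there are $\z_m\to\l\in K\cap\R_+^*$ and $f_m$ with $\nr{(H-\z_m^2)^{-1}f_m}_{L^{2,-\d}}=1$ while $\nr{f_m}_{L^{2,\d}}\to 0$. The preliminary lemmas control $(\Dc_r-i\z_m)u_m$ and the exterior tails in terms of $\nr{u_m}_{L^{2,-\d}}+\nr{f_m}_{L^{2,\d}}$, so they are uniformly bounded. Interior elliptic regularity plus the uniform exterior decay (Proposition \ref{lem2.5bis}) then give strong $L^{2,-\d}$ convergence of a subsequence of $u_m$ to some $u$ which is an outgoing solution of $(H-\l^2)u=0$. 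By the uniqueness hypothesis $u=0$, contradicting $\nr{u_m}_{L^{2,-\d}}=1$. Only after this step do the other two terms of \eqref{estim-lap} reduce to $\nr f_{L^{2,\d}}$. Once the uniform estimate is in hand, your limiting argument for the convergence of $(H-\z^2)^{-1}f$ is correct and matches the paper's; so the fix is to drop the claim that the weighted resolvent bound is assumed and to insert the normalization/contradiction step in its place.
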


We denote by $(H-(\l^2 + i 0))\inv f$ the unique outgoing solution $u \in H^2_{\loc}(\R^n) \cap L^{2,-\d}(\R^n)$ for the equation $(H-\l^2)u =f$.
To prove this theorem, we study the behavior of the solutions at infinity. In a compact subset of $\R^n$, the estimates we need are given by the interior regularity (see for instance \cite[§6.3.1]{evans}):

\begin{proposition} \label{prop-reg-int}
Let $f \in L_{\loc}^2(\R^n)$ and $z\in\C$. If $u \in H_{\loc}^2(\R^n)$ is a solution for the equation $(H-z)u=f$ then for all $R \geq 0$ we have
\[
\nr u _{H^2(B_R)} \leq C \left( \nr f _{B_{R+1}} + \nr u _{B_{R+1}} \right),
\]
where $C$ is uniform for $(z,R)$ in a compact subset of $\C \times \R_+$.
\end{proposition}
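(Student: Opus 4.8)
The statement to prove is Proposition \ref{prop-reg-int}, the interior regularity estimate
\[
\nr u_{H^2(B_R)} \leq C\left(\nr f_{B_{R+1}} + \nr u_{B_{R+1}}\right)
\]
for solutions of $(H-z)u = f$ with $H = -\D + V_1 - iV_2$.

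The plan is to reduce this to the classical interior elliptic regularity estimate for the Laplacian. First I would rewrite the equation $(H-z)u = f$ as $-\D u = f - V_1 u + iV_2 u + zu =: g$. Since $V_1, V_2$ are bounded, we have $\nr{g}_{B_{R+1}} \leq \nr f_{B_{R+1}} + C'\nr u_{B_{R+1}}$ with $C'$ depending only on $\nr{V_1}_\infty$, $\nr{V_2}_\infty$ and $|z|$, hence uniform for $z$ in a compact set. So it suffices to prove the estimate for $-\D u = g$, i.e. the purely Laplacian case.

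For $-\D u = g$, the standard interior $H^2$ estimate (e.g. Evans, §6.3.1, Theorem 1, applied on the pair of concentric balls, or rather on $B_R \subset\subset B_{R+1}$) gives $\nr u_{H^2(B_R)} \leq C''(\nr g_{L^2(B_{R+1})} + \nr u_{L^2(B_{R+1})})$, where the constant $C''$ depends only on the dimension $n$, on $R$, and on the geometry of the two balls, hence is locally bounded in $R$. One applies this with a cutoff $\chi \in C_0^\infty(B_{R+1})$ equal to $1$ on $B_R$: then $\chi u \in H^2(\R^n)$ (using that $u\in H^2_{\loc}$) solves $-\D(\chi u) = \chi g - 2\nabla\chi\cdot\nabla u - (\D\chi)u$ in $\R^n$, and elliptic regularity on the whole space gives $\nr{\chi u}_{H^2} \leq C(\nr{\chi g}_{L^2} + \nr{\nabla u}_{L^2(B_{R+1})} + \nr u_{L^2(B_{R+1})})$. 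The $\nabla u$ term is then absorbed either by an interpolation/Ehrling argument or by applying the first-order interior estimate $\nr{\nabla u}_{L^2(B_{R'})} \leq C(\nr g_{L^2(B_{R+1})} + \nr u_{L^2(B_{R+1})})$ on a slightly larger ball $B_{R'}$ with $R < R' < R+1$. Combining, and then substituting back $g = f - V_1 u + iV_2 u + zu$, yields the claim, with a constant uniform for $(z,R)$ in a compact subset of $\C\times\R_+$ since all intermediate constants depend continuously (indeed polynomially) on $|z|$ and are locally bounded in $R$.

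There is essentially no main obstacle here: this is a bookkeeping exercise citing a textbook result. The only point requiring a small amount of care is the uniformity claim — one must check that the constant in the final estimate depends on $z$ only through $|z|$ (which enters linearly when moving $zu$ to the right-hand side) and on $R$ only through the fixed geometry of the nested balls $B_R \subset B_{R+1}$, so that it stays bounded as $(z,R)$ ranges over a compact set. Accordingly I would simply invoke \cite[§6.3.1]{evans} for the Laplacian estimate and spell out the reduction and the absorption of the lower-order terms.
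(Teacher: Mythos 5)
Your proposal is correct and matches the paper's approach: the paper gives no written proof for this proposition and simply invokes the interior elliptic regularity estimate from Evans \cite[\S 6.3.1]{evans}, exactly the reduction (moving $V_1u$, $-iV_2u$, and $zu$ to the right-hand side and applying the standard $H^2$ interior estimate for $-\Delta$) that you spell out. Your remarks on the uniformity of the constant in $(z,R)$ are a reasonable elaboration of what the paper leaves implicit.
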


The difficulty is to give some estimates of $(H-\z^2)\inv f$ uniform when $\z$ approaches $\R_+^*$. For some fixed $\z \in K^*$ we have the following lemma:
\begin{lemma} \label{lem-est-udu}
 Let $f \in L^{2,\d}(\R^n)$ and $\z \in K^*$. Then $u$ and $\nabla u$ belong to $L^{2,\d}(\R^n)$.
\end{lemma}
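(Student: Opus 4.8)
The plan is to exploit the fact that, for a fixed $\z \in K^* = K \cap \C_+$, we have $\Im(\z^2) > 0$, so $\z^2$ lies strictly in the upper half-plane and $(H-\z^2)\inv$ is a genuine bounded operator on $L^2(\R^n)$. First I would note that $u = (H-\z^2)\inv f \in L^2(\R^n)$ since $f \in L^{2,\d}(\R^n) \subset L^2(\R^n)$, and moreover $u \in H^2(\R^n) = \Dom(H)$; in particular $\nabla u \in L^2(\R^n)$. The point is to upgrade the plain $L^2$ membership to weighted $L^{2,\d}$ membership for both $u$ and $\nabla u$, using the ellipticity of $H$ together with the strict spectral gap coming from $\Im(\z^2)>0$.

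The key step is a weighted energy (commutator) estimate. I would test the equation $(H-\z^2)u = f$ against $\pppg x^{2\d} u$ (or, to stay rigorous, against $\chi_R^2 \pppg x^{2\d} u$ with $\chi_R$ a cutoff to $B_R$, then let $R \to \infty$). Integrating by parts, the Laplacian term produces $\int \pppg x^{2\d}\abs{\nabla u}^2$ plus a term with one derivative falling on the weight, which by Cauchy–Schwarz is absorbed into $\e \int\pppg x^{2\d}\abs{\nabla u}^2 + C_\e \int \pppg x^{2\d-2}\abs u^2$; since $\d - 1 < \d$ the latter is controlled once we know $u \in L^{2,\d}$. Taking the imaginary part of the identity, the term $-\Im(\z^2)\int\pppg x^{2\d}\abs u^2$ appears with a favorable sign (because $\Im(\z^2)>0$), which lets one close the estimate
\[
 \Im(\z^2)\,\nr{\pppg x^\d u}_{L^2}^2 + \nr{\pppg x^\d \nabla u}_{L^2}^2 \leq C\big(\nr{\pppg x^\d f}_{L^2}\nr{\pppg x^\d u}_{L^2} + \text{lower order}\big).
\]
To make this a priori argument actually produce finiteness (rather than just an inequality between possibly infinite quantities), I would run it with the regularized weight $\pppg x^{2\d} \wedge R^{2\d}$ (bounded for each $R$), obtain a bound uniform in $R$, and pass to the limit by monotone convergence; the potential terms $V_1, V_2$ are bounded and cause no trouble, and $f \in L^{2,\d}$ is used exactly here.

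The main obstacle I anticipate is purely technical: justifying the integrations by parts and the limiting procedure, since a priori $\pppg x^{2\d} u$ need not lie in $H^2$, so one genuinely has to work with truncations and control the commutator of the cutoff with $\D$. This is a standard density/approximation argument (one can also first localize to $B_1^c$ and use interior regularity, Proposition \ref{prop-reg-int}, to handle the compact part), and it should be carried out with the regularized weights to keep every integral finite at each stage. Once $u \in L^{2,\d}$ is established, the gradient bound $\nabla u \in L^{2,\d}$ follows from the same computation, completing the proof.
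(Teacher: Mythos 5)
The paper gives no proof of Lemma \ref{lem-est-udu} (it is treated as routine, following \cite{saito,saito74}), so there is nothing to compare against directly; but your plan has two genuine gaps that would need repair before it could be called a proof.

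First, the step ``the potential terms $V_1,V_2$ are bounded and cause no trouble'' is where the argument actually breaks down if taken literally. Taking the imaginary part of the pairing $\innp{(H-\z^2)u}{\phi u}=\innp{f}{\phi u}$ with $\phi=\pppg x^{2\d}$ (or a regularization) yields
\[
\Im(\z^2)\int\phi\abs u^2+\int V_2\,\phi\abs u^2
 =\Im\int(\nabla\phi\cdot\nabla u)\bar u-\Im\int f\,\phi\bar u ,
\]
and in this paper $V_2$ is \emph{not} assumed non-negative. If $V_2$ is merely bounded, the term $\int V_2\,\phi\abs u^2$ can be as negative as $-\nr{V_2}_\infty\int\phi\abs u^2$, and for a fixed $\z\in K^*$ close to the real axis the favourable coefficient $\Im(\z^2)=2\Re\z\,\Im\z$ is small, so the estimate does not close. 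What saves the argument is precisely the structure imposed in \eqref{dec-pot}--\eqref{dec-W}: writing $V_2=W_2-\Im W_3$ with $W_2\geq 0$ and $\abs{W_3}\leq c\pppg x^{-1-\rho}$, the term $\int W_2\,\phi\abs u^2$ is non-negative and may be dropped, while $\int\abs{\Im W_3}\,\phi\abs u^2\leq c\int\pppg x^{2\d-1-\rho}\abs u^2\leq c\nr u_{L^2}^2$ because $2\d<1+\rho$. You must invoke this decomposition explicitly; boundedness alone is not enough.

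Second, the sentence ``since $\d-1<\d$ the latter is controlled once we know $u\in L^{2,\d}$'' is circular, since $u\in L^{2,\d}$ is exactly what is being proved. After regularizing the weight, the term $\int\pppg x^{2\d-2}\abs u^2$ is harmless only if $\d\leq 1$, in which case it is $\leq\nr u_{L^2}^2$. The paper allows $\d\in\left]\tfrac12,\tfrac12+\tfrac\rho4\right[$, which may exceed $1$ when $\rho>2$; in that case you need a short bootstrap in the weight exponent (prove $u,\nabla u\in L^{2,s_k}$ for $s_k=\min(\d,k/2)$ inductively). The same bootstrap also offers an alternative fix for the first gap, using only $\abs{V_1-iV_2}\leq c\pppg x^{-\rho}$ rather than the sign of $W_2$. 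Finally, it is worth noting that there is a cleaner route altogether: for $\z\in K^*$ one has $\Im\z>0$, so the free resolvent $(-\D-\z^2)\inv$ has an exponentially decaying convolution kernel and therefore maps $L^{2,s}$ to $L^{2,s}\cap H^2$ for every $s$; writing $u=(-\D-\z^2)\inv\bigl(f-(V_1-iV_2)u\bigr)$ and iterating, using the decay $\abs{V_1-iV_2}\leq c\pppg x^{-\rho}$ from \eqref{dec-W}, gives $u,\nabla u\in L^{2,\d}$ after finitely many steps, avoiding both of the above issues.
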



The self-adjoint version of the following result is Lemma 4.1 in \cite{saito}.

\begin{lemma} \label{lem4.1}
There exists $C$ such that for $\z = \z_1 + i \z_2\in K^*$, $f \in L^{2,\d}(\R^n)$ and $u = (H-\z^2)\inv f$ we have
\begin{equation*}
\z_2 \nr{  u}  _{L^{2,1-\d}(\R^n) } \leq  C \left( \nr{u}_{L^{2,-\d}(\R^n)} +  \nr{(\dpar-i\z) u}_{L^{2,\d-1}(B_2^c)}  + \nr{f}_{L^{2,\d}(\R^n)}\right)
\end{equation*}
and
\[
\nr{\sqrt {W_2} u}^2_{L^{2,\frac 12 -\d}(\R^n)} \leq C \left( \nr{u}_{L^{2,-\d}(\R^n)} +  \nr{(\dpar - i\z) u}_{L^{2,\d-1}(B_2^c)}  + \nr{f}_{L^{2,\d}(\R^n)}\right) \nr{u}_{L^{2,-\d}(\R^n)}.
\]
\end{lemma}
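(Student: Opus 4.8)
The plan is to derive both inequalities from the same integration-by-parts identity applied to $u = (H-\z^2)\inv f$, testing the equation $(H-\z^2)u = f$ against a well-chosen multiplier. First I would record the elementary identity obtained by pairing $(H-\z^2)u = f$ with $\bar u$ weighted by $\pppg x^{1-2\d}$ (or, for the second estimate, by $\pppg x^{\frac 12 - 2\d}$, possibly after a preliminary reduction): writing $H = -\D + W_1 - iW_2 + W_3$ via \eqref{dec-pot} and taking imaginary parts, the leading term produces $\z_1\z_2 \nr{u}^2_{L^{2,1-\d}}$ (here $\z^2 = \z_1^2 - \z_2^2 + 2i\z_1\z_2$) together with a contribution $\nr{\sqrt{W_2}\,u}^2$ coming from the dissipative part $-iW_2$, both with a favorable sign since $W_2 \geq 0$ and $\z_1 > 0$ is bounded below on the compact $K \subset \C_{++}$. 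The remaining terms are error terms: the commutator of $-\D$ with the weight $\pppg x^{1-2\d}$, the contribution of $W_3$ and $\tilde W_3$ (controlled by $\pppg x^{-1-\rho}$ and hence absorbable because $2\d - 1 < \rho/2$), a boundary term on $\Sph_1$ (controlled by interior regularity, Proposition \ref{prop-reg-int}), and the source term $\innp{f}{\pppg x^{1-2\d} u}$ which by Cauchy--Schwarz is $\leq \nr f_{L^{2,\d}} \nr u_{L^{2,-\d}}$ provided $1 - 2\d + \d \leq -\d$, i.e. exactly when $2\d - 1 \leq 2\d - 1$ — one checks the weights match with room to spare since $\d < \frac 12 + \frac\rho 4$.

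The key point in handling the commutator term is that $\partial_r$-derivatives of $u$ appearing there should be re-expressed, using $\dpar u = \partial_r u + \frac{n-1}{2\abs x}u$ from Proposition \ref{prop-dpar}, in terms of $(\dpar - i\z)u$ plus lower-order pieces; the ``dangerous'' part $\Im(i\z \,\overline{u}\, \partial_r u)$ of the radial derivative combines with the $\z_1\z_2\nr u^2$ term, while the genuinely quadratic-in-gradient leftover $\nr{\nabla_\bot u}^2$ has a good sign and can simply be dropped. After these manipulations every error is bounded by
\[
C\Big( \nr u_{L^{2,-\d}} + \nr{(\dpar - i\z)u}_{L^{2,\d-1}(B_2^c)} \Big)\Big( \nr u_{L^{2,-\d}} + \nr f_{L^{2,\d}} \Big),
\]
possibly after a Young inequality to absorb a small multiple of $\z_2^2\nr u^2_{L^{2,1-\d}}$ from a cross term back into the left side. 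Dividing by the lower bound for $\z_1$ and, for the first estimate, by $\z_2$ where legitimate (or keeping $\z_2\nr{u}_{L^{2,1-\d}} \leq (\z_1\z_2)^{1/2}\nr u_{L^{2,1-\d}}\cdot(\z_1)^{-1/2}\cdot(\cdots)$ and using $\z_1$ bounded above too) yields the two claimed bounds.

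The main obstacle I anticipate is bookkeeping the weight exponents so that every error term genuinely closes: one must verify that differentiating $\pppg x^{1-2\d}$ loses one power of $\pppg x$, that the resulting $\pppg x^{-2\d}$-weighted quadratic form in $u$ is controlled by $\nr u_{L^{2,-\d}}^2$, and that the $W_3$-term, weighted by $\pppg x^{1-2\d-1-\rho} = \pppg x^{-2\d-\rho}$, is likewise dominated by $\nr u_{L^{2,-\d}}^2$ — this is where the hypothesis $\d < \frac 12 + \frac\rho4$ (equivalently $2\d - 1 < \frac\rho2 < \rho$) is used. A secondary technical point is that one only knows a priori $u \in L^{2,-\d}$, so the integrations by parts must first be justified on $B_R \setminus B_1$ with $R \to \infty$, using Lemma \ref{lem-est-udu} (which gives $u, \nabla u \in L^{2,\d}_{}$ for the fixed $\z \in K^*$, hence in particular enough decay to kill the sphere-at-infinity boundary term) — note this is why the lemma is stated only for $\z \in K^*$ rather than on all of $K$.
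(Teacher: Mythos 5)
Your overall strategy is exactly the paper's: pair $(H-\z^2)u = f$ against $\h^2 (1+\abs x)^{2\a}u$ for a suitable radial cutoff $\h$ and exponent $\a$, take imaginary parts to extract $2\z_1\z_2\nr{\h u}^2_{L^{2,\a}}$ and $\nr{\h\sqrt{W_2}u}^2_{L^{2,\a}}$ with good signs, integrate by parts the Laplacian term, bound the resulting $\partial_r u$ cross term by $(\dpar - i\z)u$ and $u$, estimate the $W_3$ and source contributions by Cauchy--Schwarz, and finally choose the two values of $\a$ to produce the two inequalities. That is the same proof, so the core of your plan is fine.

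There are, however, several bookkeeping slips that would surface when you actually write it out, and one of them genuinely prevents your stated weights from producing the first estimate. With the paper's convention $\nr u_{L^{2,s}}^2 = \int \pppg x^{2s}\abs{u}^2$, the multiplier that produces $\z_1\z_2\nr{u}^2_{L^{2,1-\d}}$ is $\pppg x^{2(1-\d)} = \pppg x^{2-2\d}$, not $\pppg x^{1-2\d}$, and the one for $\nr{\sqrt{W_2}u}^2_{L^{2,\frac12-\d}}$ is $\pppg x^{2(\frac12-\d)} = \pppg x^{1-2\d}$, not $\pppg x^{\frac12-2\d}$. Your exponents are each off by a factor $\pppg x$: the weight you propose for the first estimate actually yields the \emph{second} one (it gives $\z_1\z_2\nr u^2_{L^{2,\frac12-\d}}$ on the left, which is a strictly weaker norm than $L^{2,1-\d}$ and cannot upgrade to it). This also contaminates the weight-matching checks later in the paragraph (for instance the source-term condition should read $1-3\d \leq -\d$, i.e.\ $\d \geq \tfrac12$, not ``$1-\d \leq -\d$''). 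Two further, smaller inaccuracies: when you take imaginary parts of $\innp{-\D u}{\text{(radial weight)}\, u}$ the $\abs{\nabla u}^2$ and $\abs{\nabla_\bot u}^2$ contributions are real and drop out, so there is no $\nr{\nabla_\bot u}^2$ term to ``simply drop'' in this lemma (you may be thinking of the proof of Proposition~\ref{prop-dparz}, where the multiplier is $\dparz u$); and the upper constraint $\d < \tfrac12 + \tfrac\rho4$ is not what makes the $W_3$ term close --- that term carries an extra $\pppg x^{-\rho}$ and is dominated by $\nr{u}^2_{L^{2,-\d}}$ for any $\rho > 0$, so only $\d > \tfrac12$ is used here. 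None of this is a missing idea, but the exponent error must be corrected for the argument to deliver the first inequality as stated.
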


\begin{proof} 
We consider  $\h : x \mapsto \tilde \h(\abs x)$ on $\R^n$, where $\tilde \h \in C^\infty(\R, [0,1])$ is equal to 0 on $]-\infty,2]$ and equal to 1 on $[3,+\infty[$. Let $\a \in [0,1-\d]$, $\z = \z_1 + i \z_2\in K^*$, $f \in L^{2,\d}(\R^n)$ and $u = (H-\z^2)\inv f$. According to Lemma \ref{lem-est-udu} $u$ belongs to $L^{2,\d}(\R^n)$, so we can write
\[
 \innp{(H-\z^2) u }{\h^2(x) (1+\abs x)^{2\a} u}_{L^2(\R^n)}  =  \innp{f}{\h^2(x) (1+\abs x)^{2\a} u}_{L^2(\R^n)}.
\]
Taking the imaginary part in this equality gives
\begin{eqnarray*}
\lefteqn{ \nr{\h \sqrt {W_2} u}^2_{L^{2,\a}(\R^n)} + 2 \z_1 \z_2 \nr {\h u}^2_{L^{2,\a}(\R^n)}}\\
&& \leq \nr f _{L^{2,\d}(\R^n)} \nr u _{L^{2,2\a - \d}(\R^n)}  + \nr u _{L^{2,-\d}(\R^n)} \nr {W_3 u}_{L^{2,2\a + \d}(\R^n)}\\
&& \quad + \nr{\h \partial_r u}_{L^{2,-\d}(\R^n)} \nr{ \big(2(1+\abs{x})^{2\a} \partial_r \h + 2\a (1+ \abs x)^{2\a-1}\h  \big) u}_{L^{2,\d}(\R^n)}\\
&& \leq c \left( \nr f _{L^{2,\d}(\R^n)}  + \nr{\h (\dpar-i\z) u}_{L^{2,\d-1}(\R^n)} +  \nr u _{L^{2,-\d}(\R^n)}\right) \nr { u}_{L^{2,2\a + \d-1}(\R^n)} . 
\end{eqnarray*}
Proposition \ref{prop-reg-int} and this inequality with $\a = 1-\d$ give the first estimate. We take $\a = \frac 12 - \d$ to obtain the second.
\end{proof}

For all $\z \in K^*$ and $x \in\R^n$ the complex number $\z^2 + iW_2(x)$ belongs to $\C_+$ and hence has a unique square root $\z_W (x,\z) = \z_1 (x,\z) + i \z_2(x,\z) \in \C_{++}$. Moreover there exists $C > 0$ such that for all $\z = \z_1 + i\z_2 \in K^*$ and $x \in\R^n$ we have
\begin{equation} \label{zeta2}
 C\inv \big(2 \z_1 \z_2 + W_2(x)\big) \leq \z_2(x,z) \leq C \big(2 \z_1 \z_2 + W_2(x)\big)
\end{equation}
and
\[
\abs{\nabla_x \z_W (x,\z)} \leq C \pppg x^{-1-\rho},
\]
where $\rho > 0$ is given by \eqref{dec-W}. For $\z \in K^*$ we set
\[
\dparz = \dpar - i \z_W(x,\z).
\]

\begin{proposition} \label{prop-dparz}
 There exists $C$ such that for $\z = \z_1 + i \z_2\in K^*$, $f \in L^{2,\d}(\R^n)$ and $u = (H-\z^2)\inv f$ we have
\[
 \nr{(\dpar -i\z) u}_{\nlpdsp{\d-1}{B_2^c}} + \nr{\nabla_\bot u }_{\nlpdsp {\d-1}{B_2^c}} \leq C \left( \nr u _{\nlpds {-\d}} + \nr f _{\nlpds \d} \right).
\]
\end{proposition}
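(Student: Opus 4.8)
The plan is to obtain the estimate by a standard commutator/energy identity, pairing the equation $(H-\z^2)u = f$ against a carefully chosen multiplier supported away from the origin, and then reading off the control of $(\dpar - i\z)u$ and $\nabla_\bot u$ from the real and ``angular'' parts of the resulting identity. The key point is that the natural object to control is $\dparz u = \dpar u - i\z_W(x,\z)u$ rather than $(\dpar - i\z)u$, because $\z_W(x,\z)^2 = \z^2 + iW_2(x)$ is precisely what absorbs the dissipative term $W_2$; the difference between $\z_W$ and $\z$ is $O(\pppg x^{-1-\rho})$ by the estimates recalled just before the statement, so $\|(\dpar - i\z)u - \dparz u\|_{\nlpdsp{\d-1}{B_2^c}} \leq C\|u\|_{\nlpdsp{-\d}{}}$ for free (here one uses $2\d - 2 - 2\rho < -\d$, which holds since $\d < \frac12 + \frac\rho4$). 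So it suffices to bound $\dparz u$ and $\nabla_\bot u$.

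First I would write, with $\h$ the radial cutoff equal to $1$ on $B_2^c$ and vanishing near $0$, the identity obtained by expanding $\nabla(\h u) = \frac{x}{\abs x}\partial_r(\h u) + \nabla_\bot(\h u)$ and using
\[
-\D u = -\partial_r^2 u - \frac{n-1}{\abs x}\partial_r u - \frac{1}{\abs x^2}\D_{\Sph} u,
\]
rewritten via Proposition \ref{prop-dpar} so that $\partial_r$ and $\nabla_\bot$ are replaced by $\dpar$ and $\nabla_\bot$, absorbing the $\frac{(n-1)(n-3)}{4\abs x^2}$ term into $\tilde W_3$. Then, testing $(H-\z^2)u = f$ against $\h^2 (\dparz - \text{(zeroth order)}) u$ — more precisely taking $\Re$ and $\Im$ of $\innp{(H-\z^2)u}{\h^2 \psi}$ for suitable $\psi$ built from $u$, $\dparz u$ and a radial weight — one gets, after integrating by parts in $r$ along spheres $\Sph_r$, an identity of the schematic form
\[
\tfrac12\tfrac{d}{dr}\Big(\nr{\dparz u}_{\ldgr}^2 + \nr{\nabla_\bot u}_{\ldgr}^2 + \cdots\Big) + \tfrac1{\abs x}\nr{\nabla_\bot u}_{\ldgr}^2 = \text{(terms controlled by $f$, lower-order weights, and $W_3,\tilde W_3, \nabla W_i$)}.
\]
Integrating this in $r$ from $2$ to $\infty$ against the weight $(1+r)^{2\d-2}$, the $\frac1{\abs x}\nr{\nabla_\bot u}^2_\ldgr$ term and the boundary term at $r=2$ give the two quantities on the left of the claimed inequality; the error terms on the right involve $\nr{f}_{\nlpds\d}\nr{u}_{\nlpds{-\d}}$ and, crucially, $\nr{u}_{\nlpds{-\d}}$ times the left-hand side to a small power, which is then absorbed by Young's inequality. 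The contribution of $W_2$ (the only non-real, non-small part of the potential) is exactly what is killed by completing the square with $\z_W$ in place of $\z$, using \eqref{zeta2}; what is left of the dissipative term is controlled by the second estimate of Lemma \ref{lem4.1}, namely $\nr{\sqrt{W_2}u}^2_{L^{2,\frac12-\d}}\leq C(\cdots)\nr u_{\nlpds{-\d}}$.

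The main obstacle, as usual in these Saito/Mourre-type arguments, is controlling the cross terms coming from the long-range part $\nabla W_1$, $\nabla W_2$, $W_3$ and from the commutator of the weight $(1+\abs x)^{2\d-2}$ with $\partial_r$: each of these produces a term of size $\nr u_{L^{2,2\d + \text{something}}}$ and one must check the weights line up so that $2\d - 2 + \rho < -\d$ or that the term carries a factor $f$ and can be put on the right. This is where the hypothesis $\d < \frac12 + \frac\rho4$ is used (somewhat generously), and where one invokes Lemma \ref{lem-est-udu} to justify that all the integrals converge — $u$ and $\nabla u$ lie in $L^{2,\d}$ for the fixed $\z \in K^*$, so every integration by parts is legitimate and no boundary term at infinity appears. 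Once the weighted integral identity is established and the error terms absorbed, the estimate for $\dparz u$ and $\nabla_\bot u$ follows, and then the conversion back to $(\dpar - i\z)u$ described above completes the proof.
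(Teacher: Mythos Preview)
Your approach is the paper's: reduce from $(\dpar-i\z)$ to $\dparz$, pair $(H-\z^2)u=f$ against a radially weighted copy of $\dparz u$, handle the $W_2$-contributions via \eqref{zeta2} and Lemma~\ref{lem4.1}, justify the integrations by parts through Lemma~\ref{lem-est-udu}, and close by absorption (the paper organizes this as $\Nc_d^2 \leq c\,\Nc_u\Nc_d$ with $\Nc_u=\nr u_{\nlpds{-\d}}+\nr f_{\nlpds\d}$ and $\Nc_d=\max(\nr{\h\dparz u}_{L^{2,\d-1}},\nr{\h\nabla_\bot u}_{L^{2,\d-1}})$).

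Two small corrections to your sketch. First, $\z_W-\z = iW_2/(\z_W+\z)$ is only $O(W_2)=O(\pppg x^{-\rho})$, not $O(\pppg x^{-1-\rho})$; the reduction still goes through since $2\d<1+\rho$. Second, the multiplier in the paper is $\h_R^2(1+\abs x)^{2\d-1}\dparz u$, and the two positive bulk terms $(\d-\tfrac12)\nr{\h\dparz u}_{L^{2,\d-1}}^2$ and $(\tfrac32-\d)\nr{\h\nabla_\bot u}_{L^{2,\d-1}}^2$ come from the \emph{derivative of this weight} (for the angular piece, combined with the commutator $\partial_r\nabla_\bot = -\tfrac1{\abs x}\nabla_\bot + \nabla_\bot\partial_r$ of Proposition~\ref{prop-dpar}), not from a boundary term on $\Sph_2$: a boundary contribution there would only control a sphere norm, and with your weight $(1+r)^{2\d-2}$ the integration by parts would produce the wrong exponent and sign.
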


The proof is inspired from the self-adjoint version given in \cite{saito}.

\begin{proof}
{\bf 1.} Since
\[
\abs{\z_W(x,\z) -\z} = \frac {W_2(x)}{\abs{\z_W(x,z) + \z}} \leq c W_2(x),
\]
we have
\begin{equation*} 
 \nr{(\dpar-i\z)u}_{L^{2,\d-1}(\R^n)} \leq  \nr{\dparz u}_{L^{2,\d-1}(\R^n)} + c \nr {u}_{L^{2,-\d}(\R^n)} ,
\end{equation*}
where $c \geq 0$ denotes different constants which do not depend on $\z \in K^*$ and $f \in L^{2,\d}(\R^n)$. It is therefore enough to prove the proposition with $(\dpar -i\z)$ replaced by $\dparz$.\\

\noindent
{\bf 2.} We consider $\h : x \mapsto \tilde \h (\abs x)$, where $\tilde \h \in C^\infty(\R,[0,1])$ is non-decreasing, equal to 0 on $]-\infty,1]$ and equal to 1 on $[2,+\infty[$. For $R > 2$ we set $\h_R (x) = \tilde \h(\abs x) - \tilde \h (\abs x - R)$.  Let $\z = \z_1 + i \z_2\in K^*$, $f \in L^{2,\d}(\R^n)$ and $u = (H-\z^2)\inv f$. It is enough to consider the case $\Nc_u < \Nc_d$, where 
\[
\Nc_u = \nr {u} _{L^{2,-\d}(\R^n)} +  \nr {f} _{L^{2,\d}(\R^n)} \quad \text{and} \quad \Nc_d = \max\left( \nr{\h \dparz u}_{L^{2,\d-1}(\R^n)} , \nr{\h \nabla_\bot u}_{L^{2,\d-1}(\R^n)} \right).
\]
According to Proposition \ref{prop-dpar} we can write
\begin{align*}
\Re \innp{f} {\h_R^2 (1+ \abs x)^{2\d}  \dparz u}
& = \Re\innp{(H-\z^2) u} {\h_R^2 (1+ \abs x)^{2\d-1}  \dparz u}\\
& = \Re\innp{\dpar u } { \dpar \h_R^2 (1+ \abs x)^{2\d-1}  \dparz u}+ \Re\innp{\nabla_\bot u } { \nabla _\bot \h_R^2 (1+ \abs x)^{2\d-1}  \dparz u}\\
& \quad + \Re\innp{ W_1(x) + \tilde W_3(x) - \z_W(x,\z)^2 } {\h_R^2 (1+ \abs x)^{2\d-1}  \dparz u}\\
& =: A_1(R) + A_2(R) + A_3(R) + A_4(R) + A_5(R).
\end{align*}

\noindent
{\bf 3.}
We have
\begin{align*}
 A_1(R) + A_5(R)
& = \Re\innp{\dparz u } { \dpar \h_R^2 (1+ \abs x)^{2\d-1}  \dparz u}\\
& \quad  + \Re \innp{-i(\partial_r \z_W(x,\z) + \z_W(x,\z) \dparz) u } { \h_R^2 (1+ \abs x)^{2\d-1}  \dparz u}.
\end{align*}
According to Lemma \ref{lem-est-udu} $\dparz u$ belongs to $L^{2,\d}(B_1^c)$ and hence
\begin{align*}
\liminf_{R \to \infty} \Re\innp{\dparz u } { \dpar \h_R^2 (1+ \abs x)^{2\d-1}  \dparz u}
&  = \frac 12 \liminf_{R \to \infty}\left( \innp{\dparz u }{ \partial_r(\h_R^2 (1+ \abs x)^{2\d-1} ) \dparz u}\right) \\
&\geq \left(\d- \frac 12\right) \nr{\h \dparz u}^2_{\nlpds{\d-1}} - c \nr{\dparz u}^2 _{L^2(B_2 \cap B_1^c)}. 
\end{align*}
Moreover
\begin{eqnarray*}
\lefteqn{ \liminf_{R \to \infty} \Re \innp{-i(\partial_r \z_W(x,\z) + \z_W(x,\z) \dparz) u } { \h_R^2 (1+ \abs x)^{2\d-1}  \dparz u}}\\
&\hspace{0.5cm}& \geq  \liminf_{R \to \infty} \Re \innp{-i\partial_r \z_W(x,\z)  u } { \h_R^2 (1+ \abs x)^{2\d-1} \dparz u}\\
&& \geq -c \nr u _{\nlpds{-1-\rho + \d}} \nr{\h \dparz u}_{\nlpds{\d-1}} .
\end{eqnarray*}
According to Proposition \ref{prop-reg-int}, these two estimates give
\begin{equation} \label{A1A5}
 \liminf_{R \to \infty} (A_1(R) + A_5(R)) \geq  \left(\d- \frac 12\right) \nr{\h \dparz u}_{\nlpds{\d-1}}  - c \Nc_d \Nc_u.
\end{equation}
With the same kind of argument and using the last property of Proposition \ref{prop-dpar} we prove that
\[
 \liminf_{R \to \infty} A_2(R) \geq \left(\frac 32 - \d\right) \nr {\nabla_\bot u }^2 _ {\nlpds{-\d}}-  c \Nc _u \Nc_d.
\]
Using \eqref{zeta2} and Lemma \ref{lem4.1} we show that
\[
\liminf_{R \to \infty} A_3(R) \geq -  c \Nc _u \Nc_d.
\]
We have the same estimate for $A_4(R)$ and 
\[
\forall R > 2, \quad  \Re \innp{f} {\h_R^2 (1+ \abs x)^{2\d}  \dparz u} \leq c \Nc _u \Nc_d,
\]
so we finally have
\[
\left(\frac 32 - \d\right) \nr {\nabla_\bot u }^2 _ {L^{2,\d-1}(B_2^c)} + \left(\d- \frac 12\right) \nr{\h \dparz u}^2_{L^{2,\d-1}(B_2^c)}  \leq c \Nc _u \Nc_d,
\]
which, together with Proposition \ref{prop-reg-int}, concludes the proof.
\end{proof}


As in the self-adjoint case we also use the following estimate:

\begin{proposition} \label{lem2.5bis}
There exists $C$ such that for $R > 0$, $\z \in K^*$, $f \in L^{2,\d}(\R^n)$ and $u = (H-\z^2)\inv f$ we have
\begin{equation*}
\nr{u}_{L^{2,-\d}( B_R^c)}^2 \leq C R^{1-2\d} \left( \nr u _{L^{2,-\d}(\R^n)}^2 + \nr f _{L^{2,\d}(\R^n)}^2 \right).
\end{equation*}
\end{proposition}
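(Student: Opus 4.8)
The plan is to reduce the estimate to a uniform bound on the spherical norms of $u$. Write $N := \nr u _{\nlpds{-\d}} + \nr f _{\nlpds\d}$. Since $R^{1-2\d}$ is bounded below on any bounded range of $R$ and $\nr u _{\nlpdsp{-\d}{B_R^c}} \leq \nr u _{\nlpds{-\d}}$, it suffices to treat large $R$. In polar coordinates
\[
\nr u _{\nlpdsp{-\d}{B_R^c}}^2 = \int_R^{\infty} (1+r)^{-2\d}\, g(r)\, dr , \qquad g(r) := \nr u _{\ldgr}^2 ,
\]
and $\int_R^{\infty}(1+r)^{-2\d}\, dr = (2\d-1)\inv (1+R)^{1-2\d}$, so the statement reduces to the uniform bound
\begin{equation} \label{gbound}
\forall r \geq 0,\ \forall \z \in K^*, \qquad g(r) \leq C N^2 .
\end{equation}
For small $r$ this also uses Proposition \ref{prop-reg-int}; note moreover that, $\z$ being in $\C_+$, Lemma \ref{lem-est-udu} and interior regularity give $g(r) \limt r \infty 0$.

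To prove \eqref{gbound} I would first collect two a priori facts. Taking the imaginary part of $\innp{(H-\z^2)u}{u}_{L^2(\R^n)}$ — legitimate since $u \in \nlpds\d$ by Lemma \ref{lem-est-udu} — and using the decomposition \eqref{dec-pot} together with $\abs{\Im W_3(x)} \leq c\pppg x^{-1-\rho} \leq c\pppg x^{-2\d}$ (valid because $2\d < 1+\rho$) gives $\nr{\sqrt{W_2}\, u}_{L^2(\R^n)}^2 + 2\z_1\z_2 \nr u _{L^2(\R^n)}^2 \leq C N^2$; combined with \eqref{zeta2} this yields $\int_{\R^n} \z_2(x,\z)\, \abs{u(x)}^2\, dx \leq C N^2$. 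Secondly, Proposition \ref{prop-dparz} and $\abs{\z_W(x,\z)-\z} \leq c\, W_2(x) \leq c\pppg x^{-\rho}$ (so that the error in replacing $\dpar - i\z$ by $\dparz$ is $\leq c\nr u _{\nlpdsp{\d-1-\rho}{B_2^c}} \leq c\nr u _{\nlpds{-\d}}$) give $\nr{\dparz u}_{\nlpdsp{\d-1}{B_2^c}} + \nr{\nabla_\bot u}_{\nlpdsp{\d-1}{B_2^c}} \leq C N$.

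Then I would exploit the radial structure. By Proposition \ref{prop-dpar}, $g'(r) = 2\Re\innp{\dpar u}{u}_{\ldgr}$, and since $\dpar = \dparz + i\z_W$ with $\Im\z_W(\cdot,\z) = \z_2(\cdot,\z) \geq 0$, one gets $g'(r) + 2\innp{\z_2(\cdot,\z)u}{u}_{\ldgr} = 2\Re\innp{\dparz u}{u}_{\ldgr} \leq 2\nr{\dparz u}_{\ldgr}\sqrt{g(r)}$; integrating from $r$ to $+\infty$ and using $g(\infty)=0$,
\[
g(r) \leq 2\int_{B_r^c} \z_2(x,\z)\,\abs{u(x)}^2\, dx + 2\int_r^{\infty} \nr{\dparz u}_{\Sph_s}\sqrt{g(s)}\, ds \leq C N^2 + 2\int_r^{\infty} \nr{\dparz u}_{\Sph_s}\sqrt{g(s)}\, ds .
\]
The remaining integral is the crux, and a bare Cauchy--Schwarz does not suffice: it only bounds it by $\nr{\dparz u}_{\nlpdsp{\d-1}{B_r^c}}\,\nr u _{\nlpdsp{1-\d}{B_r^c}}$, whose second factor carries a weight beyond the one controlled by the radiation estimate and is not controlled by $N$ (indeed $\int_r^\infty\nr{\dparz u}_{\Sph_s}\, ds$ diverges since $\dparz u$ decays only at the $L^{2,\d-1}$ rate). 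To close the estimate, as in the self-adjoint case \cite{saito}, I would bring in the equation written in the radial variable: on $B_1^c$, $\dpar^2 u + r^{-2}\D_{\Sph^{n-1}} u + \Re(\z^2)\, u = (W_1 + \tilde W_3)u - i\big(2\z_1\z_2 + W_2\big)u - f$ (this being essentially the content of the definitions of $\dparz$ and $\tilde W_3$). Differentiating the radiation energy $\Ec(r) := \nr{\dpar u}_{\ldgr}^2 + r^{-2}\nr{\nabla_\bot u}_{\ldgr}^2 + \Re(\z^2)\, g(r)$, modelled on the conserved quantity of $v'' + \Re(\z^2) v = 0$, one checks that $\Ec'(r)$ is — up to the good sign coming from $\z_2 \geq 0$ and $W_2 \geq 0$ — bounded by a combination of $\nr u _{\Sph_s}$, $\nr{\dpar u}_{\Sph_s}$, $\nr{\nabla_\bot u}_{\Sph_s}$ and $\nr f _{\Sph_s}$ weighted so that the two a priori bounds above give $\int_r^{\infty}\abs{\Ec'(s)}\, ds \leq C N^2$, hence $\Ec(r) \leq C N^2$ on $B_1^c$; on the relevant part of $K^*$, where $\Re(\z^2)$ stays bounded below (the complementary region $\Re(\z^2) \leq 0$, where $\z^2$ is away from $\Sp(H) \supseteq \R_+$, being treated directly), this yields $g(r) \leq (\Re\z^2)\inv\Ec(r) \leq C N^2$, i.e. \eqref{gbound}, and then the Proposition follows; for small $r$ one concludes with Proposition \ref{prop-reg-int}. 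I expect the main obstacle to be exactly the uniform (in $\z \in K^*$, up to the real axis) control of the variation of $\Ec$: each perturbative term ($W_1$, $\tilde W_3$, $\nabla_x\z_W$, $f$, and the $r^{-2}$-terms) must be paired with the right power of $\pppg x$ so that only the already-established weighted estimates are needed — this is where $\d < \frac12 + \frac\rho4$ enters — and one must handle with care the behaviour near $r=1$ and the $x$-dependence of $\z_W$.
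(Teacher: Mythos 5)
Your proposal is correct in its first reduction but then heads down a genuinely different and considerably longer road than the paper. You aim for a \emph{pointwise} bound $g(r) := \nr u_\ldgr^2 \leq C N^2$ uniformly in $\z \in K^*$, via the classical Ikebe--Saito energy method built around the functional $\Ec(r)$. The paper instead proves only the \emph{integrated} bound, directly, from a single Green/Pohozaev-type identity (equation \eqref{eq-Sr}, attributed to \cite[Prop.~3.4]{saito}): expanding
$\abs{(\dpar - i\z)u}_\ldgr^2 = \abs{\dpar u + \z_2 u}_\ldgr^2 + \z_1^2\abs u_\ldgr^2 - 2\z_1\Im\innp{\dpar u}{u}_\ldgr$
and computing $\Im\innp{\dpar u}{u}_\ldgr$ by Green's formula from $\Im\innp{(H-\z^2)u}{u}_{B_r}$ gives exactly
\[
\abs{(\dpar -i\z) u}_\ldgr^2 = \abs{\dpar u + \z_2 u}_\ldgr^2 + \z_1^2 \abs{u}_\ldgr^2 + 2\z_1 \innp{V_2 u}{u}_{B_r} + 4 \z_1^2 \z_2 \nr{u}^2_{B_r} + 2 \z_1 \Im \innp{f}{u}_{B_r}.
\]
Dropping the nonnegative terms (using $W_2\geq 0$) and bounding $\abs{\Im W_3}\lesssim\pppg x^{-1-\rho}$ with $2\d<1+\rho$, this yields
\[
\z_1^2\, g(r) \leq \abs{(\dpar-i\z)u}_\ldgr^2 + 2\z_1\nr f_{\nlpds\d}\nr u_{\nlpds{-\d}} + c\nr u_{\nlpds{-\d}}^2,
\]
with $\z_1$ bounded away from $0$ on the compact $K\subset\C_{++}$. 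Multiplying by $(1+r)^{-2\d}$ and integrating over $[R,\infty)$ finishes: the constant terms give $cN^2 R^{1-2\d}$, and the radiation term gives $(1+R)^{2-4\d}\nr{(\dpar-i\z)u}_{\nlpdsp{\d-1}{B_R^c}}^2 \leq C N^2 R^{1-2\d}$ by Proposition~\ref{prop-dparz} and $\d>\tfrac12$. No pointwise control of $g(r)$ is needed.

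In other words, the paper pairs $g(r)$ with $\abs{(\dpar-i\z)u}_\ldgr^2$ through a pointwise \emph{identity}, whereas you pair $g'(r)$ with $\Re\innp{\dparz u}u_\ldgr$ and then have to recover $g$ by integrating from infinity, which forces you to prove the much stronger uniform bound. That stronger bound is attainable (it is essentially Saito's original argument), but the energy estimate you sketch is where all the work lives, and you correctly flag the obstacles rather than resolve them: the $W_1$-term in $\Ec'$ only carries the weight $\pppg x^{-\rho}$, which is not $r$-integrable when $\rho\leq 1$, so a bare Cauchy--Schwarz--Gronwall argument does not close; integrating by parts in $r$ to improve it reintroduces boundary terms $\innp{W_1 u}{u}_\ldgr$ that need the very bound on $g$ you seek; and the sign discussion for $\Re(\z^2)$ must be carried out. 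These difficulties are real, and the paper's route sidesteps them entirely by settling for what the statement actually asks.

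Also a small point of care on your reduction: Lemma~\ref{lem-est-udu} gives $u,\nabla u\in L^{2,\d}$ for \emph{fixed} $\z\in K^*$, so $g(r)\to 0$ holds for each $\z$ but the rate is not uniform as $\z_2\to 0$; that is fine for the qualitative step $g(\infty)=0$, but it means the intermediate uniform bound $g(r)\leq CN^2$ cannot be obtained from this decay alone and genuinely requires the full energy argument.
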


\begin{proof}
For all $r > 0$ we have
\begin{equation} \label{eq-Sr} 
\abs{(\dpar -i\z) u}_\ldgr^2 = \abs{\dpar u + \z_2 u}_\ldgr^2 + \z_1^2 \abs{u}_\ldgr^2 + 2\z_1 \innp{V_2 u}{u}_{B_r} + 4 \z_1^2 \z_2 \nr{u}^2_{B_r} + 2 \z_1 \Im \innp{f}{u}_{B_r}
\end{equation}
(see \cite[Prop.3.4]{saito} in the self-adjoint case). In particular 
\[
\z_1^2 \abs u _\Sphr ^2 \leq \abs{(\dpar -i\z)u}_\Sphr^2 + 2 \z_1 \nr f _{\nlpds \d} \nr u _{\nlpds{-\d}} + c \nr u ^2 _{L^{2,-\d}(\R^n)}.
\]
We multiply by $(1+r)^{-2\d}$ and integrate from $R$ to $+\infty$ to prove the proposition.
\end{proof}

With all these estimates and uniqueness of the outgoing solution for the equation $(H-\l^2)u = 0$ we can now conclude that the limiting absorption principle holds as in the self-adjoint case. 

The first step is to prove that if we have the estimates \eqref{estim-lap} on $K^*$, then we have the limiting absorption principle. We refer to \cite[Lemma 2.6]{saito} and recall briefly the idea. If $\z_m \to \l$ and $u_m = (H-\z_m^2)\inv f$, then a subsequence of $\seq u m$ converges to some $u$ in $L^{2}_{\loc}(\R^n)$ according to Proposition \ref{prop-reg-int}. We obtain convergence in $L^{2,-\d}(\R^n)$ according to the last estimate of \eqref{estim-lap}. Using again Proposition \ref{prop-reg-int} we also have convergence in $H^2_{\loc}(\R^n)$. The limit $u$ is necessarily an outgoing solution for the equation $(H-\l^2)u=f$. And since we have uniqueness for such a solution, we actually have convergence of the whole sequence.

To apply this result, we still have to check the first estimate of \eqref{estim-lap}. This is a contridaction argument: we assume that we can find sequences $\seq f m \in L^{2,\d}(\R^n)^\N$ and $\seq zm \in (K^*)^\N$ such that 
\[
z_m \limt m \infty \l \in K \cap \R_+^*, \quad \nr{(H-z_m)\inv f_m}_{L^{2,-\d}} = 1 \quad \text{and} \quad  \nr {f_m}_{L^{2,\d}(\R^n)} = \littleo m \infty(1).
\]
All the estimates of \eqref{estim-lap} hold for this sequence and hence $(H-z_m)\inv f_m$ converges to an outgoing solution for the equation $(H-\l)u=0$, which must be zero. This gives a contradiction (see the proof of Theorem 2.3 in \cite{saito}).\\

For the dissipative Schrödinger operator everything holds as in the self-adjoint case on the upper half-plane:

\begin{proposition} \label{prop-lap-diss}
If $V_2 \geq 0$ then the assumptions and hence the conclusions of Proposition \ref{prop-lap} hold for any $K$.
\end{proposition}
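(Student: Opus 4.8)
The plan is to verify that, when $V_2 \geq 0$, the hypotheses of Proposition \ref{prop-lap} are met for any compact $K \subset \C_{++}$ with $K = \bar{K \cap \C_+}$: namely that $(H-\z^2)\inv$ is well-defined on $K^* = K \cap \C_+$ and that the homogeneous equation $(H-\l^2)u = 0$ has no non-trivial outgoing solution for $\l \in K \cap \R_+^*$. Since $H$ is then dissipative (more precisely $-H$ is maximal accretive, as $\Im H = -V_2 \leq 0$), for $\z \in \C_+$ we have $\z^2 \in \C \setminus \R_+$ unless $\Re \z = 0$; in either case one checks that $\z^2$ is outside the numerical range of $H$ closure, so $(H-\z^2)\inv$ exists and is bounded on $L^2(\R^n)$. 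This takes care of the resolvent hypothesis. Note the decomposition \eqref{dec-pot} is available with $W_2 = V_2$, $W_1 = V_1$, $W_3 = 0$ (after absorbing the $\pppg x^{-1-\rho}$ bounds, which follow from the long/short range assumptions already in force), so the estimates of Lemmas \ref{lem-est-udu}, \ref{lem4.1} and Propositions \ref{prop-dparz}, \ref{lem2.5bis} all apply.

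The heart of the matter is uniqueness of the outgoing solution for $\l \in K \cap \R_+^*$. So let $u \in H^2_{\loc}(\R^n)$ solve $(H-\l^2)u = 0$ with $(\dpar - i\l)u \in L^{2,\d-1}(B_1^c)$ for some $\d > \frac12$. First I would show that $u \in L^{2,-\d}(\R^n)$: taking $f = 0$ in the estimates above (all of which are valid on $K \cap \R_+^*$ by passing to the limit, or directly since the constants are uniform), combined with the radiation condition, gives enough decay via Proposition \ref{lem2.5bis} and the elliptic regularity of Proposition \ref{prop-reg-int} to conclude $u \in L^{2,-\d}(\R^n)$ and in fact $\nabla u, (\dpar-i\l)u \in L^{2,\d-1}(B_1^c)$. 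Then I would run the standard Rellich-type argument: pairing $(H-\l^2)u = 0$ against $u$ and taking the imaginary part yields $\innp{V_2 u}{u}_{L^2(\R^n)} = 0$ plus boundary terms at infinity which vanish by the radiation condition; since $V_2 \geq 0$ this forces $\sqrt{V_2}\, u = 0$. Next, the Rellich uniqueness theorem (applied to $(-\D - \l^2)u = (\l^2 - V_1 + iV_2 - \l^2)u + \l^2 u$, i.e.\ using that $u$ solves a Schr\"odinger equation with decaying potential and satisfies an outgoing radiation condition) together with unique continuation shows $u \equiv 0$. The cleanest route is: the quadratic form identity combined with the radiation bound shows $u$ decays faster than any $L^2$ weight permits near infinity in the sense needed for Rellich's lemma, hence $u = 0$ outside a large ball, and then unique continuation for the elliptic operator $H - \l^2$ propagates this to all of $\R^n$.

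The main obstacle I anticipate is making the Rellich argument fully rigorous in the non-selfadjoint (dissipative) setting: one must be careful that the boundary terms $\lim_{R\to\infty} \int_{\Sph_R} (\bar u \,\partial_r u - u\, \partial_r \bar u)$ and $\lim_{R\to\infty}\int_{\Sph_R}|\partial_r u - i\l u|^2$ really do control each other, using the radiation condition $(\dpar - i\l)u \in L^{2,\d-1}$ and the relation between $\dpar$ and $\partial_r$ from Proposition \ref{prop-dpar}. In the self-adjoint case this is classical (Ikebe--Saito \cite{ikebes72}, \cite{saito}); here the only new feature is the term $\innp{V_2 u}{u}$, which actually \emph{helps} since it has a favorable sign --- it gives the extra information $\sqrt{V_2}\,u=0$ for free and does not obstruct the vanishing of the spherical flux. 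Once uniqueness is in hand, Proposition \ref{prop-lap} applies verbatim and delivers both the convergence $(H-\z^2)\inv f \to (H-(\l^2+i0))\inv f$ in $L^{2,-\d}(\R^n)$ and the uniform bound \eqref{estim-lap}, which is exactly the conclusion claimed.
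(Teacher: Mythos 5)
Your argument is correct and follows essentially the same route as the paper: maximal dissipativity of $H$ (for $V_2 \geq 0$) gives well-definedness of the resolvent on $\C_+$, and the radiation condition together with the sign of $V_2$ forces $\sqrt{V_2}\,u = 0$ for any homogeneous outgoing solution, after which one is reduced to self-adjoint Rellich uniqueness. The paper is just a little more economical at the key step: instead of carrying out the pairing/imaginary-part computation by hand, it plugs $f = 0$ and $\zeta = \lambda$ (so $\zeta_2 = 0$) into the already-established identity \eqref{eq-Sr}, where every term on the right is nonnegative, and reads off $\innp{V_2 u}{u} = 0$ directly; and rather than re-running Rellich plus unique continuation, it observes that $u$ is then an outgoing solution of $(H_1 - \lambda^2)u = 0$ with $H_1 = -\Delta + V_1$ self-adjoint, and cites Saito \cite[\S 3]{saito} for uniqueness in that case.
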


\begin{proof}
If $V_2 \geq 0$ then the operator $H$ with domain $H^2(\R^n)$ is maximal dissipative, so its resolvent $(H-z)\inv$ is at least well-defined for $\Im z >0$. Moreover, if $u \in H^2_{\loc}(\R^n) \cap L^{2,-\d}(\R^n)$ is an outgoing solution for the equation $(H-\l^2)u=0$ for some $\l \in K \cap \R_+^*$, then according to \eqref{eq-Sr} the solution $u$ vanishes on the support of $V_2$ and hence is an outgoing solution for the equation $(H_1-\l^2)u=0$, where $H_1 = -\D + V_1(x)$ is the self-adjoint part of $H$. We know from \cite[§3]{saito} that such a solution must be zero.
\end{proof}

In the non-selfadjoint case we do not have such a systematic result. In order to prove Theorem \ref{th-lap} we use the fact that for $h>0$ small enough and $z \in \C_{I,+}$ we already have existence and uniform estimates for the resolvent $(\hh-z)\inv$.

\begin{proof} [Proof of Theorem \ref{th-lap}]
 Let $\th_0 \geq 0$ be such that $V_2(x) + \th_0 \pppg x ^{-1 -\rho} \geq 0$ for all $x \in\R^n$. There exist a neighborhood $I$ of $E$, $h_0 >0$ and $C>0$ such that for all $h \in ]0,h_0]$, $z \in \C_{I,+}$ and $\th \in [0,\th_0]$ we have
\[
 \nr{\pppg x ^{-\d} (\hh^\th - z)\inv \pppg x^{-\d}}_{\Lc(L^2(\R^n))} \leq \frac C h,
\]
where $\hh^\th = \hh - h \th \pppg x ^{-1-\rho}$. Indeed, Theorem \ref{th-estim} gives such an estimate for any fixed $\tilde \th \in [0,\th_0]$, and the perturbation argument already used to prove \eqref{estim-hth} gives an estimate uniform for $\th$ in a neighborhood of $\tilde \th$. Now let $\th_1 \in [0,\th_0]$ and assume that $u=0$ is the unique outgoing solution for the equation $(\hh^\th -\l)u=0$ when $\th \in [\th_1,\th_0]$, $\l \in I$ and $h \in ]0,h_0]$. Let $h\in]0,h_0]$, $\l \in I$, $\th \in [\th_1 - C/2,\th_1]$ and let $u \in H^2_\loc(\R^n) \cap L^{2,-\d}(\R^n)$ be an outgoing solution for the equation $(\hh^\th - \l)u = 0$. We have 
\[
(\hh^{\th_1} - \l) u = h(\th_1-\th) \pppg x ^{-1-\rho} u \in L^{2,\d}(\R^n).
\]
According to Proposition \ref{prop-lap}, the outgoing solution for the equation $(\hh^{\th_1} - \l) u = f$ is given by the limiting absorption principle and hence we have
\[
\nr{u}_{L^{2,-\d}(\R^n)} \leq \frac C h \nr{h (\th_1 - \th) \pppg x ^{-1-\rho} u}_{L^{2,\d}(\R^n)} \leq \frac 12 \nr u _{L^{2,-\d}(\R^n)}.
\]
This proves that $u = 0$.
\end{proof}

\section{Semiclassical Measure}  \label{sec-mesure}

We study in this section the semiclassical measures for the outgoing solution of \eqref{helmholtz} when the source term $f_h$ concentrates on a bounded submanifold of $\R^n$, $V_1$ is of long range and $V_2$ is of short range. We adapt to this purpose the proof given in \cite{art_mesure} for the dissipative setting.\\

Let $\G$ be a (bounded) submanifold of dimension $d \in \Ii 0 {n-1}$ in $\R^n$. We consider $A \in C_0^\infty(\G)$, $S \in \Sc(\R^n)$ and define
\begin{equation} \label{source}
 f_h(x) = \int_\G  A(z) S\left( \frac {x-z} h \right) \, d\sgamma (z),
\end{equation}
where $\sgamma$ is the Lebesgue mesure on $\G$. We can check that $f_h$ is microlocalized on $N\G$ and $\nr{f_h}_{L^{2,\d}(\R^n)} = O(\sqrt h)$ for any $\d > \frac 12$. Let $E > 0$ be an energy which satisfies assumption \eqref{hyp-amfaible}. We assume that 
\begin{equation} \label{hyp-propag}
 \forall z \in \G, \quad V_1(z) < E.
\end{equation}

Let 
\[
 \negg = N\G \cap p\inv(\singl E) = \singl{(z,\x) \in \G \times \R^n \tqe \x \bot T_z \G  \text{ and } \abs \x^2 = E- V_1(z)}.
\]
Assumption \eqref{hyp-propag} ensures that $\negg$ is a submanifold of dimention $n-1$ in $\R^{2n}$. The Riemannian structure $g$ on $\negg$ is defined as follows. For $(z,\x) \in \negg$ and $(Z,\Xi),(\tilde Z,\tilde \Xi) \in T_{(z,\x)}\negg \subset \R^{2n}$ we set
\[
 g_{(z,\x)} \big((Z,\Xi),(\tilde Z,\tilde \Xi) \big) = \innp{Z}{\tilde Z}_{\R^n} + \innp{ \Xi_\bot}{\tilde \Xi_\bot}_{\R^n},
\]
where $\Xi_\bot,\tilde \Xi_\bot$ are the orthogonal projections of $\Xi,\tilde \Xi \in \R^n$ on $(T_z\G \oplus \R \x)^\bot = T_\x ( N_z\G \cap \negg)$. We denote by $\snegg$ the canonical measure on $\negg$ given by $g$, and assume that
\begin{equation} \label{hyp-Phi}
 \snegg \left( \singl{ (z,\x) \in \negg \tqe \exists  t > 0 , \vf^t (z,\x) \in \negg} \right) = 0.
\end{equation}

\begin{theorem} \label{th-mesure}
Let $f_h$ be given by \eqref{source} and $u_h$ be the outgoing solution for the Helmholtz equation \eqref{helmholtz}. Let assumptions \eqref{hyp-amfaible}, \eqref{hyp-propag} and \eqref{hyp-Phi} be fulfilled.
\begin{enumerate}[(i)]
\item There exists a non-negative Radon measure $\m$ on $\R^{2n}$ such that
\begin{equation}  \label{lim-opquu}
\forall q \in C_0^\infty(\R^{2n}),\quad  \innp{\Opw(q) u_h}{u_h} \limt h 0  \int_{\R^{2n}} q \, d\m.
\end{equation}
\item This measure is characterized  by the following three properties:
\begin{enumerate} [a.]
\item $\m$ is supported in $p\inv(\singl{E})$.
\item For any $\s \in ]0,1[$ there exists $R \geq 0$ such that $\m = 0$ in the incoming region $\zoneS_-(R,0,-\s)$.
\item $\m$ is solution of the Liouville equation
\begin{equation} \label{liouville}
\{p,\m\} + 2V_2 \m =\pi (2\pi)^{d-n}  \abs{A(z)}^2 \abs \x \inv \big|\hat S (\x)\big|^2   \snegg,
\end{equation}
\end{enumerate}
\item These three properties imply that for $q \in C_0^\infty(\R^{2n})$ we have
\begin{equation} \label{expr-mu}
\int_{\R^{2n}}\!\! q \, d\mu = \int_0^{+\infty} \!\!\! \int_{\negg} \pi (2\pi)^{d-n}  \abs{A(z)}^2 \abs \x \inv \big|\hat S (\x)\big|^2  q(\vf^t(z,\x)) e^{ - 2\int_0^t V_2(\bar x (s,z,\x))\,ds} \,d\snegg(z,\x) \,dt.
\end{equation}
\end{enumerate}
\end{theorem}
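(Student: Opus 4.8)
The plan is to transpose to the non-dissipative setting the proof of the dissipative analogue given in \cite{art_mesure}, checking that each of its ingredients survives the loss of positivity of $V_2$. Three points will require attention. First, the outgoing solution $u_h = (\hh-(E+i0))\inv f_h$ is now well defined in $L^{2,-\d}(\R^n)$ thanks to Theorem \ref{th-estim} and Theorem \ref{th-lap}, not to maximal dissipativity of $\hh$. Second, the incoming-region estimate is now available for $\hh$ itself: once Theorem \ref{th-estim} is proved, Theorem \ref{th-incoming} (together with the limiting absorption principle) applies to $\hh$; here I will use that, $\G$ being bounded and $f_h$ microlocalized on $N\G$, the wave front set of $f_h$ avoids every incoming region $\zoneS_-(R_1,d_1,-\s_1)$ with $R_1$ larger than a bound on $\supp A$, so that $f_h = \Op(\o)f_h + \bigo h 0(h^\infty)$ for a suitable $\o$ supported outside $\zoneS_-(R_1,d_1,-\s_1)$. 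Third, $U_h(t)=e^{-\frac{it}h\hh}$ is no longer a contraction; one has only $\nr{U_h(t)}_{\Lc(L^2(\R^n))}\leq e^{t\nr{V_2}_\infty}$, and the identity $(\hh-z)\inv=\frac ih\int_0^\infty e^{\frac{it}h z}U_h(t)\,dt$ holds a priori only for $\Im z > h\nr{V_2}_\infty$; exactly as in the proof of Theorem \ref{th-incoming}, I will derive the needed relations there and extend them to $\C_{I,+}$ by holomorphy, each relevant time integrand being controlled uniformly once localized by a compactly supported symbol, since the corresponding trajectory either leaves the support of that symbol or is damped on average by Proposition \ref{prop-gros-amort-2}.

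\textbf{Existence of $\m$.} Following \cite{art_mesure} (see also \cite{bony09}) I would first prove that $(u_h)_h$ is bounded in $L^2_\loc(\R^n)$: the part of $u_h$ microlocalized in the incoming region is $\bigo h 0(h^\infty)$ by the second point above, while the resolvent estimate of Theorem \ref{th-estim} and the classical dynamics of Section 2 control the rest. Since a bounded family in $L^2_\loc(\R^n)$ has semiclassical measures along subsequences (see \cite{burq97,evansz}), \eqref{lim-opquu} holds along some sequence $h_m\to 0$ for a non-negative Radon measure $\m$; convergence of the whole family will follow once the three properties in (ii) are shown to characterize $\m$.

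\textbf{The three properties.} Property a.\ is microlocal ellipticity away from $p\inv(\singl E)$: for $q$ supported where $p\neq E$ one writes $\Opw(q)=\Opw(\tilde q)(\hh-E)+\bigo h 0(h)$ with $\tilde q$ in a suitable symbol class, and concludes with $(\hh-E)u_h=f_h$, $\nr{f_h}_{L^{2,\d}(\R^n)}\to 0$ and the boundedness of $u_h$, exactly as in Proposition \ref{prop-burq-debut}(i). Property b.\ is proved as Proposition \ref{prop-mzero-incoming}: by the second point above, $\Opw(\o_-)u_h=\Opw(\o_-)(\hh-(E+i0))\inv\Op(\o)f_h+\bigo h 0(h^\infty)$ for $\o_-$ supported in a fixed incoming region $\zoneS_-(R,0,-\s)$, and the right-hand side is $\bigo h 0(h^\infty)$ by Theorem \ref{th-incoming}, so $\m=0$ on $\zoneS_-(R,0,-\s)$. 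Property c.\ comes from the commutator relation $\frac ih[\hh,\Opw(q)]=\Opw(\{p,q\})+\bigo h 0(h)$, which with $(\hh-E)u_h=f_h$ gives, for real $q\in C_0^\infty(\R^{2n})$,
\[
\innp{\Opw(\{p,q\})u_h}{u_h}-2\innp{V_2\Opw(q)u_h}{u_h}=-\frac 2h\Im\innp{\Opw(q)u_h}{f_h}+\littleo h 0(1).
\]
In the limit the left-hand side tends to $\int\!\big(\{p,q\}-2V_2q\big)\,d\m$, which is the opposite of the pairing of $\{p,\m\}+2V_2\m$ with $q$; hence \eqref{liouville} is equivalent to the key identity
\[
\lim_{h\to 0}\frac 2h\Im\innp{\Opw(q)u_h}{f_h}=\pi(2\pi)^{d-n}\int_{\negg}\abs{A(z)}^2\abs\x\inv\big|\hat S(\x)\big|^2\,q\,d\snegg.
\]
This is the technical heart of the statement and, I expect, the main obstacle. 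Using \eqref{source}, $f_h$ is a superposition over $z\in\G$ of semiclassical wave packets microlocalized at the points of $N_z\G$, while near $\G$ the solution $u_h$ is described by the outgoing parametrix (equivalently, by $\frac ih\int_0^\infty e^{\frac{it}h E}U_h(t)f_h\,dt$ after the continuation discussed above); a stationary phase computation in the time variable and along $\G$ produces the surface measure $\snegg$ on $\negg=N\G\cap p\inv(\singl E)$, the Jacobian $\abs\x\inv$ and the weight $(2\pi)^{d-n}\abs{A(z)}^2\big|\hat S(\x)\big|^2$, the off-diagonal source-source contributions being discarded using assumption \eqref{hyp-Phi}. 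Apart from the analytic continuation and the bookkeeping of the constant $e^{t\nr{V_2}_\infty}$, this is the computation already carried out in \cite{art_mesure} (see also \cite{bony09,castellapr02}).

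\textbf{Uniqueness and the explicit formula.} The right-hand side of \eqref{expr-mu} defines a non-negative Radon measure $\tilde\m$, finite on compact sets because the $t$-integral converges by Proposition \ref{prop-gros-amort-2}; differentiating the time integral shows $\tilde\m$ satisfies a.\ and the Liouville equation c., and property b.\ holds for it because a forward trajectory issued from the bounded set $\negg$ at energy $E$ never enters $\zoneS_-(R,0,-\s)$ once $R$ is large (Proposition \ref{prop-non-entr} and Section 2). If $\m$ is any measure satisfying a.--c., then $\n=\m-\tilde\m$ is a signed Radon measure supported in $p\inv(\singl E)=\O_b^-(\singl E)\sqcup\O_\infty^-(\singl E)$, vanishing in an incoming region and solving $\{p,\n\}+2V_2\n=0$; testing the associated propagation formula (as in Proposition \ref{prop-burq-debut}(ii) with $\b=0$) against compactly supported $q$ and letting time go to $-\infty$ forces $\n=0$ on $\O_\infty^-(\singl E)$ (the backward flow reaches the incoming region) and on $\O_b^-(\singl E)$ (the exponential weight tends to $0$ by Proposition \ref{prop-gros-amort}), hence $\n=0$ and $\m=\tilde\m$. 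This gives (ii) and (iii), and the uniqueness just obtained upgrades the subsequential convergence of the second paragraph to the full statement (i).
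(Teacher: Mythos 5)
Your proposal is structured as a uniqueness argument (extract subsequential semiclassical measures, prove each satisfies a.--c., then show a.--c. determine the measure), whereas the paper constructs $\m$ directly by truncating $u_h$ to finite propagation times, writing $u_h^T = \frac{i}{h}\int_0^\infty \h_T(t)e^{-\frac{it}{h}(\hh-E)}f_h\,dt$, decomposing $u_h^T$ as a sum of lagrangian distributions around each time the backward flow meets $N\G$, obtaining a measure $\m_T$ for each $T$, and finally showing $\m_T$ converges and that $u_h^T$ approximates $u_h$ in quadratic form as $T\to+\infty$ via the non-selfadjoint Egorov theorem (Theorem \ref{th-egorov}) and Proposition \ref{prop-super-egorov}. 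The two routes are genuinely different, but yours has a concrete gap: the opening claim that $(u_h)_h$ is bounded in $L^2_\loc(\R^n)$ is asserted without proof and is not a corollary of the tools you invoke. Theorem \ref{th-estim} gives $\nr{u_h}_{L^{2,-\d}(\R^n)} \leq C h^{-1}\nr{f_h}_{L^{2,\d}(\R^n)} = O(h^{-1/2})$, which blows up; the incoming-region estimate and classical dynamics do not by themselves repair this, and the paper noticeably avoids this preliminary step, instead proving convergence of the quadratic form $\innp{\Opw(q)u_h}{u_h}$ directly via the $u_h^T$ decomposition (the uniform local bound on $u_h$ is essentially a consequence of statement (i), not an input to it). Without this boundedness, you cannot extract a subsequential semiclassical measure, and your whole scheme has no starting point.

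The second, self-acknowledged gap is the identity
\[
\lim_{h\to 0}\frac 2h\Im\innp{\Opw(q)u_h}{f_h}=\pi(2\pi)^{d-n}\int_{\negg}\abs{A(z)}^2\abs\x\inv\big|\hat S(\x)\big|^2\,q\,d\snegg,
\]
which you defer to \cite{art_mesure}. Proving it requires precisely the lagrangian-distribution and stationary-phase machinery the paper redeploys, and the non-dissipative adaptations are not merely bookkeeping: one must reprove the Egorov expansion for $e^{\frac{it}{h}(\huh - ihW_2)^*}\Opw(a)e^{-\frac{it}{h}(\huh-ih\tilde W_2)}$ with $W_2$ of indefinite sign, control large times via Proposition \ref{prop-super-egorov} (which rests on Proposition \ref{prop-gros-amort-2}), and extend by holomorphy the formula $(\hh-z)^{-1}=\frac{i}{h}\int_0^\infty e^{\frac{it}{h}z}U_h(t)\,dt$ from $\Im z>h\nr{V_2}_\infty$ to $\C_{I,+}$. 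You do list these three issues at the start, so the diagnosis is right, but the proof you sketch does not actually execute the adaptation---it only asserts that it can be done. On the positive side, your uniqueness argument for (ii) and (iii) (propagating the difference of two solutions of a.--c. backward in time, using the incoming-region vanishing on $\O_\infty^-(\singl E)$ and the averaged damping of Proposition \ref{prop-gros-amort} on $\O_b^-(\singl E)$) matches the paper's strategy and is sound, modulo the slip that the relevant limit is $t\to+\infty$ of $\vf^{-t}$, not ``time to $-\infty$''.
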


Note that as in \cite{bony09} we can let $E$ depend on $h$ : $E_h = E_0 + h E_1 +o(h) \in \C_+$, where $E_0>0$ satisfies assumption \eqref{hyp-amfaible} and $\Im E_h \geq 0$. Then $V_2$ has to be replaced by $V_2 + \Im E_1$ in \eqref{liouville} and \eqref{expr-mu}.\\

We recall the sketch of the proof, discuss differences with the dissipative case and refer to \cite{bony09,art_mesure} for details.
We first remark that the limit \eqref{lim-opquu} is zero when $q \in C_0^\infty(\R^{2n})$ is supported outside $p\inv(\singl E)$.
Then the idea is to replace the resolvent which defines $u_h$ by the integral over finite times of the propagator. More precisely, for $T \geq 0$ and $h \in ]0,1]$ we set
\[
 u_h^T = \frac ih \int_0^\infty \h_T(t)  e^{-\frac {it}h (\hh-E)} f_h,
\]
where $\h_T (t) = \h(t-T)$ and $\h \in C^\infty(\R,[0,1])$ is equal to 1 in a neighborhood of $]-\infty,0]$ and equal to 0 on $[\t_0,+\infty[$ for some $\t_0 > 0$ small enough (see \cite{art_mesure}). Then we can study separately the contribution of different times. For small times we proceed exactly as in the dissipative case. For intermediate times, and then to prove that $u_h^T$ is in some sense a good approximation of $u_h$ for large $T$ and small $h$, we need a non-selfadjoint version of Egorov's Theorem. \\

According to \eqref{res-non-diss} and Hille-Yosida's Theorem (see for instance Theorem II.3.5 in \cite{engel2}) we know that $\hh$ generates a continuous semi-group, which we denote by $\uh(t)$, and
\[
\forall t \geq 0, \quad \nr{\uh(t)}_{\Lc(L^2(\R^n))} \leq e ^{t m_-}
\]
(we recall that $m_- = -\inf V_2$).
 Let $ W_2,\tilde W_2 \in \symbor(\R^n,\R)$, $W = W_2 + \tilde W_2$ and, for $t \in\R$:
\[
\udh(t) = e^{-\frac {it}h (\huh -ihW_2)} \quad \text{and} \quad  \tudh(t) = e^{-\frac {it}h (\huh -ih \tilde W_2)} .
\]
The Egorov's Theorem extends without modification to the non-dissipative case:

\begin{theorem}\label{th-egorov}
Let $a \in \symbor(\R^{2n})$. There exist a family of symbols $\a_j(t)$ for $j\in\N$ and $t \in \R$ such that:
\begin{enumerate}[(i)]
\item  For all $t \in \R$, $N\in\N$ and $\hou$ we have
\begin{equation*} 
\udh(t)^* \Opw(a) \tudh(t) = \sum_{j=0}^N h^j \Opw(\a_j(t))  + h^{N+1} R_N(t,h),
\end{equation*}
where $R_N(t,h)$ is bounded on $L^2(\R^n)$ uniformly in $\hou$ and $t \in [0,T]$ for any $T \geq 0$.
\item We have
\[
\a_0(t)  = (a \circ \vf^t) e^{- \int_0^t W \circ \vf^s \, ds}.
\]
\item
For $t\in \R$ and $j\in\N$ we have 
\[
\supp \a_j (t) \subset \vf^{-t} (\supp a).
\]
\end{enumerate}
\end{theorem}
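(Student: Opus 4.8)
The plan is to run the classical Egorov construction, the only genuinely new feature being that $\udh(t)$ and $\tudh(t)$ are no longer unitary. First I would set $A(t)=\udh(t)^*\Opw(a)\tudh(t)$. Since $\huh$ is self-adjoint and $W_2,\tilde W_2$ are real, $(\huh-ihW_2)^*=\huh+ihW_2$; differentiating $A(t)$, and using that $\udh(t)^*$ commutes with $\huh+ihW_2$ and $\tudh(t)$ with $\huh-ih\tilde W_2$, one obtains the Heisenberg-type equation
\[
\frac{d}{dt}A(t)=\frac ih\,[\huh,A(t)]-W_2\,A(t)-A(t)\,\tilde W_2 .
\]
The same computation shows that for any bounded operator $B$ the solution of $\dot X=\frac ih[\huh,X]-W_2X-X\tilde W_2$ with $X(0)=B$ is $X(t)=\udh(t)^*B\tudh(t)$. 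The only point where the dissipative argument must be adapted is that here $\udh(t)^*=e^{t(\frac ih\huh-W_2)}$ and $\tudh(t)=e^{-t(\frac ih\huh+\tilde W_2)}$ have generators with bounded real part, so that $\nr{\udh(t)^*}+\nr{\tudh(t)}\le C_T$ only for $\abs t\le T$; this boundedness on compact time intervals is all the argument uses in place of unitarity.

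Next I would quantize the right-hand side of the Heisenberg equation at the level of symbols. Because $\huh=\Opw(p)$ exactly and $p$ is quadratic in $\x$, the commutator $\frac ih[\huh,\Opw(b)]$ has a full asymptotic expansion $\Opw(\{p,b\})+\sum_{k\ge1}h^{2k}\Opw(c_{2k}(b))$, with $c_{2k}(b)$ bidifferential in $b$ and the bounded derivatives of $V_1$; combining this with the Moyal expansions of $W_2\#b$ and $b\#\tilde W_2$ gives, for every $M\in\N$,
\[
\frac ih[\huh,\Opw(b)]-W_2\Opw(b)-\Opw(b)\tilde W_2=\sum_{k=0}^{M}h^k\Opw(L_kb)+h^{M+1}\Opw(r_M(b,h)),
\]
where $L_0b=\{p,b\}-Wb$, each $L_k$ is a differential operator of order $k$ (coefficients built from derivatives of $p,W_2,\tilde W_2$, so in particular $\supp L_kb\subset\supp b$), and $r_M(\cdot,h)$ sends bounded subsets of $\symbor$ into bounded subsets of $\symbor$ uniformly in $\hou$. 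I then define $\a_j(t)$ by the transport hierarchy $\partial_t\a_0=L_0\a_0$, $\a_0(0)=a$, and $\partial_t\a_j=L_0\a_j+\sum_{k=1}^{j}L_k\a_{j-k}$, $\a_j(0)=0$ for $j\ge1$. Integrating the first equation along the characteristics of $\partial_t-\{p,\cdot\}$ gives precisely $\a_0(t)=(a\circ\vf^t)\exp(-\int_0^tW\circ\vf^s\,ds)$, which is (ii); the Duhamel formula for the same transport operator together with $\supp L_kb\subset\supp b$ yields $\supp\a_j(t)\subset\vf^{-t}(\supp a)$ by induction on $j$, which is (iii). One also checks, by Gr\"onwall applied to the variational equation (the Hessian of $p$ and all higher derivatives of $V_1$ being bounded), that $\vf^t$ and all its derivatives are bounded on bounded time intervals, so each $\a_j(t)$ stays in a bounded subset of $\symbor$ for $t\in[0,T]$.

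Finally, I would set $B_N(t)=\sum_{j=0}^{N}h^j\Opw(\a_j(t))$ and $R_N(t)=A(t)-B_N(t)$. Differentiating $B_N$ and using the transport equations together with the expansion above (with $M\ge N$) gives $\dot B_N(t)=\frac ih[\huh,B_N(t)]-W_2B_N(t)-B_N(t)\tilde W_2-h^{N+1}\Opw(\rho_N(t,h))$, where $\rho_N(\cdot,h)$ is bounded in $\symbor$ uniformly in $\hou$ and $t\in[0,T]$. Hence $\dot R_N(t)=\frac ih[\huh,R_N(t)]-W_2R_N(t)-R_N(t)\tilde W_2+h^{N+1}\Opw(\rho_N(t,h))$ with $R_N(0)=0$, so by the variation-of-constants formula (the propagator of the homogeneous equation from $s$ to $t$ being $B\mapsto\udh(t-s)^*B\tudh(t-s)$)
\[
R_N(t)=h^{N+1}\int_0^t\udh(t-s)^*\,\Opw(\rho_N(s,h))\,\tudh(t-s)\,ds .
\]
The uniform bound on $\udh,\tudh$ over $[0,T]$ together with the Calder\'on--Vaillancourt estimate for $\Opw(\rho_N)$ then gives $\nr{R_N(t)}=\bigo h0(h^{N+1})$ uniformly for $t\in[0,T]$, which is (i). I expect the main nuisance to be the bookkeeping of symbol seminorms along the flow and the uniform-in-$h$ control of the Moyal remainders $r_M$ and $\rho_N$ on compact time intervals; the passage from the unitary propagators of the self-adjoint case to the merely bounded ones here is harmless.
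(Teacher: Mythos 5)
Your proof is correct and follows precisely the route the paper has in mind: the paper states that the dissipative Egorov theorem from \cite{art_mesure} ``extends without modification to the non-dissipative case'' and provides no further argument, and your write-up supplies exactly that extension. The Heisenberg equation $\dot A=\tfrac ih[\huh,A]-W_2A-A\tilde W_2$ (obtained from the commutation of each semigroup with its own generator), the Moyal expansion exploiting that $p$ is quadratic in $\x$, the transport hierarchy giving $\a_0(t)=(a\circ\vf^t)e^{-\int_0^t W\circ\vf^s ds}$ and the support property, and the Duhamel bound on $R_N$ using that $\udh(t)^*,\tudh(t)$ are merely exponentially bounded on $[0,T]$ rather than contractive, are all exactly what is needed and all check out.
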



Let $w \in \R^{2n}$, $T \geq 0$ and $0 < t_{w,1} < \dots < t_{w,K^T_w} \leq T+ \t_0$ be the times between 0 and $T + \t_0$ for which $\vf^{-t_{w,k}}(w) \in \negg_0 = \negg \cap (\supp A \times \R^n)$. For $\t_w>0$ small enough we consider $\h_w \in C_0^\infty(]0,2\t_w[)$ equal to 1 in a neighborhood of $\t_w$. For $k \in \Ii 1 {K_w^T}$ we prove that in $L^2(\R^n)$
\[
 \frac ih \int_0^\infty \h_T(t) \h_w(t-t_{w,k} + \t_w) e^{-\frac {it}h (\hh-E)}  f_h = B_{w,k}^T(h) + \bigo h 0 \big(\sqrt h \big),
\]
where $B_{w,k}^T(h) $ is a lagrangian distribution of lagrangian submanifold 
\[ 
 \L_{w,k}^T= \{ \vf^t(z,\x), (z,\x) \in \negg, t \in ]t_{w,k}- \t_w,t_{w,k}+ \t_w[\}.
\]
This means that there exist $N \in \N$, $b_{w,k}^T \in C_0^\infty(\R^{n+N})$ and a non-degenerate phase function $\p \in \symbor(\R^{n+N},\R)$ (if $\nabla_\th \p(x,\th)=0$ for some $(x,\th) \in \R^{n+N}$, then the $N$ linear forms $d_{(x,\th)} \partial_{\th_i} \p : \R^{n+N} \to \R$, $1\leq i \leq N$, are linearly independant) such that
\[
 B_{w,k}^T(h) = \frac 1 {(2\pi h)^{\frac N 2} } \int_{\R^N} e^{\frac ih \p(x,\th)} b(x,\th)\, d\th ,
\]
and 
\[
 \singl{ (x,\nabla_x \p(x,\th)) \text{ for } (x,\th) \in \R^{n+N} \text{ such that } \nabla_\th \p (x,\th) = 0} \subset \L_{w,k}^T
\]
(this replaces what is said in \cite{art_mesure}).
This is proved by direct computations when $t_{w,k}$ is replaced by $\t_w$ (and $N=0$ in this case), and then we use the fact that for any $t > 0$ the propagator $\uh(t)$ can be seen as a Fourier Integral Operator and maps a lagrangian distribution of submanifold $\L$ to some lagrangian distribution of submanifold $\vf^t (\L)$. We know that for such a lagrangian distribution there exists a smooth and non-negative function $\nu_{w,k}^T$ on $\L_{w,k}^T$ such that
\[
 \forall q \in C_0^\infty(\R^{2n}), \quad \innp{\Opw(q)  B_{w,k}^T(h)}{ B_{w,k}^T(h)} \limt h 0 \int_ {\L_{w,k}^T} q(\tilde w) \nu(\tilde w) d\s_{\L_{w,k}^T}(\tilde w),
\]
where $\s_{\L_{w,k}^T}$ is the Lebesgue measure on ${\L_{w,k}^T}$. According to Egorov's Theorem, times far from 0 and $t_{w,k}^T$ ($k\in\Ii 1 {K_w^T}$) do not give any contribution around $w$ at the limit $h \to 0$, so we can prove that \eqref{lim-opquu} holds for some measure $\m_T$ if $u_h$ is replaced by $u_h^T$.\\

It remains to study the contribution of large times. In \cite[Prop. 2.3]{art_mesure} we used the fact that the damping factor $\exp\big(-\int_0^t V_2 \circ \vf^s \, ds\big)$ is a non-increasing function of $t$, which is no longer the case. We use Proposition \ref{prop-gros-amort-2} instead:

\begin{proposition}  \label{prop-super-egorov}
Let $J$ be a neighborhood of $E$ such that assumption \eqref{hyp-amfaible} holds for all $\l \in J$. Let $K_1$ and $K_2$ be compact subsets of $p\inv (J)$. Let $\e > 0$. Then there exists $T_0 \geq 0$ such that for $q_1,q_2 \in C_0^\infty(\R^{2n})$ respectively supported in $K_1$ and $K_2$ we have
\[
\forall T \geq T_0, \quad \limsup_{h \to 0} \nr{\Opw(q_1) U_h(T) \Opw(q_2)}_{\Lc(L^2(\R^n))} \leq \e \nr {q_1}_\infty \nr {q_2}_\infty.
\]
\end{proposition}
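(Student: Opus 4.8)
The plan is to square the operator and invoke Egorov's Theorem \ref{th-egorov}, which turns the conjugated propagator into a pseudodifferential operator whose principal symbol carries the damping factor $\exp(-2\int_0^t V_2\circ\vf^s\,ds)$ along the classical flow; the exponential smallness of this factor for large $t$ --- which in the dissipative setting came from the monotonicity of $t\mapsto\exp(-\int_0^t V_2\circ\vf^s\,ds)$ --- will here be supplied by Proposition \ref{prop-gros-amort-2}.

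Concretely, set $A = \Opw(q_1)\uh(T)\Opw(q_2)$ and estimate $\nr{A}^2_{\Lc(L^2(\R^n))} = \nr{A^*A}_{\Lc(L^2(\R^n))}$, where $A^*A = \Opw(\bar q_2)\,\uh(T)^*\,\Opw(\bar q_1)\Opw(q_1)\,\uh(T)\,\Opw(q_2)$. Since $\uh(T) = e^{-\frac{iT}{h}(\huh - ihV_2)}$ is exactly the operator $\udh(T)$ of Theorem \ref{th-egorov} with $W_2 = V_2$, I would apply that theorem with $\tilde W_2 = V_2$ as well (so that $W = 2V_2$), combined with the standard symbolic-calculus identities $\Opw(\bar q_1)\Opw(q_1) = \Opw(|q_1|^2) + \bigo h 0(h)$ and $\Opw(\bar q_2)\Opw(b)\Opw(q_2) = \Opw(|q_2|^2 b) + \bigo h 0(h)$. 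Using that, by the bound $\nr{\uh(t)}_{\Lc(L^2(\R^n))}\leq e^{tm_-}$ recalled above, $\uh(T)$ is bounded uniformly for $\hou$ once $T$ is fixed, this gives
\[
A^*A = \Opw(b_T) + \bigo h 0(h), \qquad b_T(w) = |q_2(w)|^2\,|q_1(\vf^T(w))|^2\,\exp\!\Big(-2\int_0^T V_2(\bar x(s,w))\,ds\Big),
\]
and hence, by the operator-norm estimate recalled in the proof of Proposition \ref{prop-masse-finie}, $\nr{A}^2_{\Lc(L^2(\R^n))} \leq C_n\,\nr{b_T}_{L^\infty(\R^{2n})} + \bigo h 0(\sqrt h)$ with $C_n$ depending only on the dimension.

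It remains to bound $\nr{b_T}_{L^\infty}$, and here the dynamics enter. The symbol $b_T$ is supported on $\singl{w\in K_2 \tqe \vf^T(w)\in K_1}$; fixing $\Rc>0$ with $K_1\subset B_x(\Rc)$, Proposition \ref{prop-gros-amort-2} applied with this $\Rc$, with $J$, and with $\tilde K = K_2$ provides $c_0,C>0$ (depending only on $n$, $J$, $\Rc$ and $K_2$) such that for all $w\in K_2$ and $t\geq 0$ one has $\int_0^t V_2(\bar x(s,w))\,ds\geq c_0 t - C$ or $\abs{\bar x(t,w)}\geq\Rc$. When $\vf^T(w)\in K_1\subset B_x(\Rc)$ the second alternative fails, so the first holds at $t = T$, whence $b_T(w)\leq \nr{q_1}_\infty^2\nr{q_2}_\infty^2\, e^{2C}e^{-2c_0 T}$. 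Therefore $\limsup_{h\to0}\nr{A}_{\Lc(L^2(\R^n))}\leq \sqrt{C_n}\,e^{C}e^{-c_0 T}\nr{q_1}_\infty\nr{q_2}_\infty$, and it suffices to choose $T_0$ so large that $\sqrt{C_n}\,e^{C}e^{-c_0 T_0}\leq\varepsilon$.

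The one delicate point is the order in which the two limits are taken: all the error terms coming from Egorov's theorem and the symbolic calculus carry constants that grow with $T$ --- notably through the non-contractivity factor $e^{Tm_-}$, since $\uh(T)$ is no longer unitary --- so $T_0$ must be fixed first, on the strength of the explicit decay $e^{-2c_0 T}$ of the principal symbol furnished by Proposition \ref{prop-gros-amort-2}, and only then may one pass to the limit $h\to 0$. This is the only genuine difference with the dissipative argument of \cite{art_mesure}.
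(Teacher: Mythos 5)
Your proof is correct and rests on the same two pillars as the paper's: Egorov's theorem (Theorem \ref{th-egorov}) to reduce the operator-norm estimate to a sup-norm estimate of a symbol carrying the damping factor, and Proposition \ref{prop-gros-amort-2} to show that this factor is small on the set where the symbol could be supported; the crucial observation about the order of limits ($T$ fixed first, then $h\to 0$) is also exactly the one the paper makes.

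The one genuine difference is the mechanism used to turn $\Opw(q_1)\uh(T)\Opw(q_2)$ into a pseudodifferential operator whose norm can be read off a symbol. You square the operator and estimate $\nr{A^*A}$, which leads you to apply Egorov with $W_2 = \tilde W_2 = V_2$ (so $W = 2V_2$), produces the doubled exponent $\exp(-2\int_0^T V_2\circ\vf^s\,ds)$, and requires taking a square root at the end. The paper instead left-multiplies by the \emph{unitary} group $\uuh(T)^* = e^{\frac{iT}h \huh}$ generated by the self-adjoint part: this changes nothing in the operator norm, and then Egorov with $W_2 = 0$, $\tilde W_2 = V_2$ (so $W = V_2$) gives directly $\uuh(T)^*\Opw(q_1)\uh(T)\Opw(q_2) = \Opw\bigl(q_2\,(q_1\circ\vf^T)\,e^{-\int_0^T V_2\circ\vf^s\,ds}\bigr)+\bigo h 0(h)$, with a single exponential and no square root. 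The paper's route is marginally shorter because it avoids the product formulas $\Opw(\bar q)\Opw(q)=\Opw(\abs q^2)+\bigo h 0(h)$ and sandwiching, but your squaring argument is a standard and equally valid alternative; both lead to the same choice of $T_0$ and the same conclusion.
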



\begin{proof}
Let $q_1,q_2 \in C_0^\infty(\R^{2n})$ be respectively supported in $K_1$ and $K_2$. Let $t \mapsto \uuh(t)$ denote the unitary group generated by the self-adjoint part $\huh$ of $\hh$. According to Egorov's Theorem we have for all $T\geq 0$:
\begin{eqnarray*}
\lefteqn{\nr{\Opw(q_1) U_h(T) \Opw(q_2)}_{\Lc(L^2(\R^n))} = \nr{U_1^h(T)^* \Opw(q_1) U_h(T) \Opw(q_2)}_{\Lc(L^2(\R^n))}}\\
&\hspace{2cm}& = \nr{\Opw\left(q_2 (q_1\circ \vf^T)e^{-\int_{0}^T V_2 \circ \vf^s\, ds} \right)}_{\Lc(L^2(\R^n))} + \bigo h 0 (h)\\
&& \leq C \sup_{w\in\R^{2n}} \abs{q_2(w) q_1 \big( \vf^T(w)\big) e^{-\int_{0}^T V_2 ( \bar x (s,w)) \, ds}} + \bigo h 0 \big( \sqrt h \big),
\end{eqnarray*}
where the size of the rest depends on $T$, $q_1$ and $q_2$. The constant $C$ only depends on the dimension $n$.
According to Proposition \ref{prop-gros-amort-2} there exists $T_0$ such that for $T \geq T_0$ and $w \in K_2$ we have
\[
C e^{-\int_0^T V_2(\bar x (s,w))\,ds} \leq  \e  \quad \text{or} \quad \vf^T (w) \notin K_1.
\]
Therefore we have for all $T \geq T_0$:
\[
\begin{aligned}
\nr{\Opw(q_1) U_h(T) \Opw(q_2)}_{\Lc(L^2(\R^n))} \leq \e \nr {q_1}_\infty \nr {q_2}_\infty + \bigo h 0 \big( \sqrt h \big).
\end{aligned}
\]
It only remains to take the limit $h\to 0$ for fixed $T$, $q_1$ and $q_2$ to conclude.
\end{proof}

For the rest of the proof we proceed as in the dissipative case. 
We only have to be careful for the proof of Lemma 5.4 in \cite{art_mesure} since the resolvent $(\hh-z)\inv$ cannot be written as the integral of the propagator over positive times for all $z \in\C_+$. However, for $h>0$ small enough and $z \in\C_+$ close to $E$, $(\hh-z)\inv f_h$ is well-defined and belongs to $H^2(\R^n)$. Therefore we can write
\begin{eqnarray*} 
\lefteqn{\Opw(q) (\hh-z)\inv f_h - \Opw(q) e^{-\frac {iT} h (\hh-z)} (\hh-z)\inv f_h}\\
&& = - \int_0^T \Opw(q) \frac d {dt} e^{-\frac {it} h (\hh-z)} (\hh-z)\inv f_h\,dt\\
&& = \frac ih \int_0^\infty \h_T(t) \Opw(q)  e^{-\frac {it} h (\hh-z)} f_h \, dt - \frac ih \int_0^\infty \h(t) \Opw(q)  e^{-\frac {iT} h (\hh-z)} e^{-\frac {it} h (\hh-z)} f_h \,dt.
\end{eqnarray*}
Note also that since $\uh(T)$ is no longer estimated uniformly by 1, some rests depends on $T$ in the proof of this lemma. This is not a real problem since we take the limit $h \to 0$ for fixed $T$. We finally obtain that for any compact subset $K$ of $p\inv(J)$ and $\e > 0$ there exists $T_0$ such that for $q \in C_0^\infty(\R^{2n})$ supported in $K$ we have
\[
 \forall T \geq T_0, \quad \limsup_{h \to 0} \abs{\innp{\Opw(q) u_h}{u_h} - \innp{\Opw(q) u_h^T}{u_h^T}} \leq \e \nr {q}_\infty .
\]
With this estimate we can check that for all $q \in C_0^\infty(\R^{2n})$ the function $T \mapsto \int q \, d\m_T$ has a limit when $T \to +\infty$, that this limit defines a non-negative measure $\m$ on $\R^{2n}$, and finally that \eqref{lim-opquu} holds for this measure.
All the properties of $\m$ stated in Theorem \ref{th-mesure} are proved as in the dissipative case (in particular we use Theorem \ref{th-incoming} to prove (ii) b.).

\appendix

\section{Construction of an escape function} \label{sec-escape}

In this appendix we prove Proposition \ref{prop-fonc-fuite}. A similar result (with an inequality) is proved in \cite{jecko04}. The purpose was to give a proof which could be extended for matrix-valued operators. The version we give here is more convenient in our context.\\

Let $J = \left] \frac E 2 , 2E \right[$, $\s \in \left] 0 , \frac 12\right[$ and $\Rc$ given by Proposition \ref{prop-non-entr}. We set
\[
\Zc_{J,\pm} = \Zc_\pm \left(\Rc, 0, \mp \s \right)\cap p\inv(J).
\]

\begin{proposition}
If $\Rc$ is chosen large enough, then for $\a,\b \in\N^n$ such that $\abs \a + \abs \b \geq 1$ there exists $c_{\a,\b}$ such that for $t \geq 0$ and $(x,\x) \in \Zc_{J,\pm}$ we have
\begin{equation} \label{estim-derflot1}
\abs{\partial_x^\a  \partial_\x^\b \vf^{\pm t}(x,\x)} \leq c_{\a,\b} \pppg t.
\end{equation}
\end{proposition}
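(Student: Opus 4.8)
The plan is a Grönwall argument for the variational (linearized) flow, the crucial input being the linear escape estimate of Proposition \ref{prop-non-entr}. Fix $w=(x,\x)\in\Zc_{J,\pm}$ and a multi-index $\g\in\N^{2n}$ with $k:=\abs\g\geq1$, and write $X_\g(t)=\partial_w^\g\bar x(\pm t,w)$, $\Xi_\g(t)=\partial_w^\g\bar\x(\pm t,w)$, so that $\partial_w^\g\vf^{\pm t}(w)=(X_\g(t),\Xi_\g(t))$; these are smooth in $(t,w)$ by smooth dependence of the flow on initial data, since $V_1\in C^\infty$. Differentiating \eqref{syst-ham} $k$ times in $w$ and applying the chain rule gives, for the $+$ sign,
\[
\dot X_\g=2\,\Xi_\g,\qquad \dot\Xi_\g=-\Hess V_1(\bar x(t,w))\,X_\g+R_\g(t),
\]
where $R_\g$ is a finite sum of terms $D^{l+1}V_1(\bar x(t,w))[X_{\g^{(1)}},\dots,X_{\g^{(l)}}]$ with $l\geq2$, $\abs{\g^{(i)}}\geq1$ and $\g^{(1)}+\dots+\g^{(l)}=\g$; thus $R_\g\equiv0$ when $k=1$, and when $k\geq2$ every factor $X_{\g^{(i)}}$ carries a multi-index of order strictly smaller than $k$. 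The initial data are $\nr{X_\g(0)}+\nr{\Xi_\g(0)}\leq C_n$ for $k=1$ and $X_\g(0)=\Xi_\g(0)=0$ for $k\geq2$; the $-$ sign only flips signs irrelevant to norm estimates. Eliminating $\Xi_\g$ yields $\ddot X_\g=-2\Hess V_1(\bar x(t,w))X_\g+2R_\g$, hence the Duhamel identity
\[
X_\g(t)=X_\g(0)+t\,\dot X_\g(0)-2\int_0^t(t-s)\,\Hess V_1(\bar x(s,w))\,X_\g(s)\,ds+2\int_0^t(t-s)\,R_\g(s)\,ds .
\]

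The input from Proposition \ref{prop-non-entr} is that on $\Zc_{J,\pm}$ the trajectory escapes linearly, $\abs{\bar x(\pm t,w)}\geq c_0(t+\abs x)$, and since $\abs x\geq\Rc$ this gives $\pppg{\bar x(\pm t,w)}\geq c_1(1+t)$ uniformly in $w\in\Zc_{J,\pm}$ and $t\geq0$. Combined with the long-range bounds $\abs{\partial^\a V_1(y)}\leq c_\a\pppg y^{-\rho-\abs\a}$ this gives, for every $j\geq2$ (with $D^2V_1=\Hess V_1$),
\[
\sup_{w\in\Zc_{J,\pm}}\ \nr{D^jV_1(\bar x(\pm t,w))}\ \leq\ c\,(1+t)^{-\rho-j},
\]
so that $M_H:=\sup_{w}\int_0^\infty(1+s)\,\nr{\Hess V_1(\bar x(\pm s,w))}\,ds<\infty$ as soon as $\rho>0$. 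Enlarging $\Rc$ only shrinks $\Zc_{J,\pm}$ and keeps Proposition \ref{prop-non-entr} valid, while it makes $M_H$ arbitrarily small; I fix $\Rc$ once and for all so large that $M_H<\frac12$. This is exactly the role of the hypothesis ``$\Rc$ chosen large enough''.

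Then I would run the induction on $k$. Assume $\nr{X_{\g'}(t)}\leq c_{\g'}(1+t)$ for all $\g'$ with $\abs{\g'}<k$ (vacuous if $k=1$). Then $\nr{R_\g(t)}\leq c(1+t)^{-\rho-1}$, so $M_R:=\sup_w\int_0^\infty\nr{R_\g(s)}\,ds<\infty$. Bounding $t-s\leq1+t$ in the Duhamel identity and setting $\Phi(t)=\sup_{s\leq t}\nr{X_\g(s)}/(1+s)$, one gets $\Phi(t)\leq C+2M_H\,\Phi(t)+2M_R$, whence $\Phi(t)\leq(C+2M_R)/(1-2M_H)$, i.e. $\nr{X_\g(t)}\leq c_\g(1+t)$. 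Feeding this back into $\dot\Xi_\g=-\Hess V_1(\bar x(t,w))X_\g+R_\g$ and integrating shows that $\nr{\Xi_\g(t)}$ stays bounded, hence a fortiori $\leq c_\g(1+t)$. Since $\partial_w^\g\vf^{\pm t}=(X_\g,\Xi_\g)$, this closes the induction and proves \eqref{estim-derflot1}, uniformly on $\Zc_{J,\pm}$.

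The one genuinely delicate point is the factor $t-s\leq1+t$ in the Duhamel formula, which is forced by the coupling $\dot X_\g=2\Xi_\g$: a naive Grönwall on the first-order system would only yield exponential growth, and one is obliged to control the weighted integral $\int(1+s)\nr{\Hess V_1(\bar x(\pm s,w))}\,ds$ rather than just $\int\nr{\Hess V_1(\bar x(\pm s,w))}\,ds$. It is precisely here that one uses the \emph{linear} escape rate of Proposition \ref{prop-non-entr} (that $\abs{\bar x(\pm t,w)}$ grows at least linearly in $t$), not merely $\abs{\bar x(\pm t,w)}\to\infty$, together with the freedom to choose $\Rc$ large enough to absorb the constant $M_H<\frac12$.
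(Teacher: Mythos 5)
Your proof is correct and follows essentially the same strategy as the paper: differentiate the flow, pass to a second-order Duhamel/Taylor identity so as to get the crucial weight $(t-s)$, use the linear escape rate of Proposition \ref{prop-non-entr} to make the weighted integral of $\Hess V_1$ along the trajectory small when $\Rc$ is large, and induct on the order of the derivative. The only difference is presentational --- you run a direct absorption argument on the quantity $\Phi(t)=\sup_{s\leq t}\nr{X_\g(s)}/(1+s)$, while the paper argues by contradiction via $t_m=\inf\{t\geq 1:\nr{A(t)}_{L^\infty}\geq mt\}$, and you separate the $\x$-component $\Xi_\g$ (obtaining its boundedness en route, which the paper records as a separate corollary) instead of working with the full $2n\times 2n$ Jacobian matrix.
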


We know (see for instance Lemma IV.9 in \cite{robert}) that the derivatives of the flow $\vf^t$ are uniformly bounded as long as $t$ stays in a bounded subset of $\R$, but may grow exponentially fast with time. The purpose of this proposition is to check that if we only look at the flow far from the origin (where it is ``almost free'') then we recover a growth of size $O(t)$ as in the free case $(x,\x) \mapsto (x + 2t\x,\x)$.

\begin{proof}
{\bf 1.}
We prove the proposition for $(x,\x) \in \Zc_{J,+}$, the case $(x,\x) \in \Zc_{J,-}$ being analogous.
Let
\[
A(t,x,\x) = \begin{pmatrix} J_x \bar x (t,x,\x) & J_\x \bar x (t,x,\x)  \\ J_x \bar \x (t,x,\x)  & J_\x \bar \x (t,x,\x)  \end{pmatrix}\in M_{2n}(\R),
\]
where for instance $J_\x \bar x$ denotes the partial jacobian matrix of $\bar x$ with respect to $\x$. Suppose that
\[
\limsup_{t \to +\infty} \frac {\nr{A(t)}_{L^\infty(\Zc_{J,+} , M_{2n}(\R) )}} t = +\infty.
\]
Differentiating the system \eqref{syst-ham} with respect to $x$ and then to $\x$, we see that
\[
\partial_t A(t,x,\x) = B(t,x,\x) \cdot A(t,x,\x)
\]
where
\[
 B(t,x,\x) = \begin {pmatrix} 0 & 2 \Idm_n \\ - \Hess V_1 (\bar x (t,x,\x)) & 0 \end{pmatrix} \in M_{2n}(\R),
\]
and hence:
\[
\partial_t^2 A(t,x,\x) = \partial _t B(t,x,\x)\cdot A(t,x,\x) + B(t,x,\x)^2 \cdot A(t,x,\x) =: C(t,x,\x)  \cdot A(t,x,\x)
\]
According to Proposition \ref{prop-non-entr}, there exists $c_0 > 0$ such that for $(x,\x) \in \zoneS_{J,+}$ and $t\geq 0$ we have $\abs{\bar x (t,x,\x)} \geq c_0 (\abs x + t)$, so
\[
\nr{C(t,x,\x)} \leq \nr{\partial_t B(t,x,\x)} +\nr{ B(t,x,\x)^2} \leq c (\abs x +  t)^{-2-\rho},
\]
where $c$ depends neither on $(x,\x) \in \zoneS_{J,+}$, nor on $t \geq 0$. For $m\in\N$ let
\[
t_m = \inf \singl{ t\geq 1 \tqe \nr{A(t,x,\x)}_{L^\infty(\Zc_{J,+},M_{2n}(\R))} \geq mt}.
\]
Since the derivatives of $\vf^t$ are in $L^\infty(\R^{2n})$ uniformly for $t$ in a compact subset of $\R$, we have $t_m \to +\infty$. According to Taylor's formula we have
\[
\begin{aligned}
\frac {\nr{A(t_m)}_{L^\infty(\Zc_{J,+},M_{2n}(\R))}} {t_m}
& \leq \frac {\nr{A(0)}_{L^\infty(\Zc_{J,+},M_{2n}(\R))}}{t_m} + \nr{\partial_t A(0)}_{L^\infty(\Zc_{J,+},M_{2n}(\R))}\\
& \quad  + \frac 1 {t_m} \int_0^{t_m} (t_m-s) c ( \Rc + s)^{-2-\rho} \nr{A(s)}_{L^\infty(\Zc_{J,+},M_{2n}(\R))} \,ds\\
& \leq c + {c } \int_0^{t_m}\frac{t_m-s}{t_m}  (\Rc + s)^{-2-\rho}\, s \, m  \,ds\\
& \leq c + c m \int _0^{t_m} (\Rc+s)^{-1-\rho} \, ds\\
& \leq c + c m \Rc^{-\rho},
\end{aligned}
\]
where $c$ depends neither on $m \in \N$ nor on the choice of $\Rc$. If $ \Rc$ was chosen so large that $c \Rc^{-\rho} \leq \frac 14$, then the right-hand side is less than $m/2$ for large $m$, which gives a contradiction. The case $\abs \a + \abs \b = 1$ is proved.\\

\noindent
{\bf 2.}
We now proceed by induction on $\abs \a + \abs \b$. Let $\a,\b \in\N^n$ such that $\abs \a + \abs \b \geq 2$ and assume that the result is proved for any derivative of order less than $\abs \a + \abs \b$. For $j \in \Ii 1 n$, the differential operator $\partial_x^\a \partial_\x^\b$ applied to \eqref{syst-ham} gives
\[
\begin{cases}
\partial_t \partial_x^\a \partial_\x^\b \bar {x_j}(t,x,\x)  = 2 \partial_x^\a \partial_\x^\b \bar {\x_j}(t,x,\x) \\
\partial_t \partial_x^\a \partial_\x^\b \bar {\x_j}(t,x,\x)  = - \sum_{l=1}^n (\partial_{x_j}\partial_{x_l} V_1) (\bar x (t,x,\x)) \, \partial_x^\a \partial_\x^\b \bar {x_l} (t,x,\x) + b_{\a,\b,j}(t,x,\x),
\end{cases}
\]
where $b_{\a,\b,j}$ is a sum of terms of the form
\[
-(\partial_{x_j} \partial^\nu V_1) (\bar x (t,x,\x)) \prod_{k=1}^{\abs \nu} \partial_x^{\a_k} \partial_\x^{\b_k} \bar {x_{j_k}}(t,x,\x)
\]
where $\abs \nu \geq 2$, $\sum_{k=1}^{\abs \nu} \a_k = \a$, $\sum_{k=1}^{\abs \nu} \b_k = \b$ and for $k \in \Ii 1 {\abs \nu}$: $j_k \in\Ii 1 n$ and $\abs {\a_k} + \abs {\b_k} \geq 1$. In particular for all $k$ we have $\abs {\a_k} + \abs {\b_k} < \abs \a + \abs \b$, so each term is estimated by
\[
\abs{(\partial_{x_j} \partial^\nu V_1) (\bar x (t,x,\x)) \prod_{k=1}^{\abs \nu} \partial_x^{\a_k} \partial_\x^{\b_k} \bar {x_{j_k}}(t,x,\x)} \leq c (\abs x + t)^{-1-\rho -\abs \nu} \pppg  t ^{\nu} \leq c(\abs x + t)^{-1-\rho}
\]
where $c$ depends neither on $t\geq 0$ nor on $(x,\x) \in\zoneS_{J,+}$, and hence
\[
\abs{b_{\a,\b,j}(t,x,\x)} \leq c (\abs x + t)^{-1-\rho}.
\]
We also have
\[
\abs{\partial_t b_{\a,\b,j}(t,x,\x)} \leq c (\abs x +t)^{-1-\rho}.
\]
If we set
\[
A_{\a,\b}(t,x,\x) = \begin{pmatrix} \partial_x^\a \partial_\x^\b \bar {x_1}(t,x,\x) \\ \vdots \\ \partial_x^\a \partial_\x^\b \bar {x_n}(t,x,\x) \\ \partial_x^\a \partial_\x^\b \bar {\x_1}(t,x,\x) \\ \vdots \\ \partial_x^\a \partial_\x^\b \bar {\x_n}(t,x,\x)\end{pmatrix} \in \R^{2n}
\quad \text{and} \quad
D_{\a,\b}(t,x,\x) = \begin{pmatrix} 0\\ \vdots \\ 0 \\ b_{\a,\b,1}(t,x,\x) \\ \vdots \\ b_{\a,\b,n}(t,x,\x)\end{pmatrix} \in \R^{2n} ,
\]
we have
\[
\partial_t A_{\a,\b}(t,x,\x) = B(t,x,\x)\cdot  A_{\a,\b}(t,x,\x) + D_{\a,\b}(t,x,\x)
\]
and
\[
\partial^2_t A_{\a,\b}(t,x,\x) = C(t,x,\x)\cdot A_{\a,\b}(t,x,\x) + B(t,x,\x) \cdot D_{\a,\b}(t,x,\x) + \partial_t D_{\a,\b}(t,x,\x).
\]
Then we can conclude as above. Note that the matrix $C$ is the same, and hence the choice of $\Rc$ does not depend on $(\a,\b)$.
\end{proof}

The estimate of size $O(t)$ for $\bar x$ is what was expected since this is indeed what we have in the free case, but we can improve the result for $\bar  \x$:

\begin{corollary}
For any $\a,\b \in\N^n$ there exists $c_{\a,\b} \geq 0$ such that for $t \geq 0$ and $(x,\x) \in \zoneS_{J,\pm}$ we have
\[
\abs{\partial_x^\a \partial_\x^\b \bar \x (\pm t,x,\x)} \leq c_{\a,\b}.
\]
\end{corollary}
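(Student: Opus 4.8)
The plan is to read the bound off the second equation of \eqref{syst-ham}, which carries an extra factor $\nabla V_1$ and hence better decay than what is available for $\bar x$. Integrating it gives, for $t\ge 0$ and $w=(x,\x)\in\zoneS_{J,\pm}$,
\[
 \bar\x(\pm t,w)=\x\mp\int_0^t(\nabla V_1)\big(\bar x(\pm s,w)\big)\,ds .
\]
For $\abs\a+\abs\b=0$ this is already enough: the momentum is bounded on $\zoneS_{J,\pm}$, since $\abs\x^2=p(w)-V_1(x)\le\sup J+\nr{V_1}_\infty$, and by Proposition \ref{prop-non-entr} together with the long-range bound $\abs{\nabla V_1(y)}\le c\pppg y^{-1-\rho}$ we get $\abs{(\nabla V_1)(\bar x(\pm s,w))}\le c(\Rc+s)^{-1-\rho}$, which is integrable on $[0,+\infty[$ uniformly in $w\in\zoneS_{J,\pm}$.

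For $\abs\a+\abs\b\ge 1$ I would apply $\partial_x^\a\partial_\x^\b$ to this identity. The contribution of the $\x$-term is constant, and by the Fa\`a di Bruno formula $\partial_x^\a\partial_\x^\b\big[(\nabla V_1)(\bar x(\pm s,w))\big]$ is a finite sum of terms of the form
\[
 (\partial^\mu\nabla V_1)\big(\bar x(\pm s,w)\big)\,\prod_{k=1}^{\abs\mu}\partial_x^{\a_k}\partial_\x^{\b_k}\bar x_{j_k}(\pm s,w),
\]
where $1\le\abs\mu\le\abs\a+\abs\b$, $\sum_k\a_k=\a$, $\sum_k\b_k=\b$, $j_k\in\Ii1n$, and $\abs{\a_k}+\abs{\b_k}\ge 1$ for every $k$. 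Each of the $\abs\mu$ flow-derivative factors is $O(\pppg s)$ by \eqref{estim-derflot1}, so the product is $O\big(\pppg s^{\abs\mu}\big)$; on the other hand $\abs\mu\ge 1$, so the long-range assumption yields $\abs{(\partial^\mu\nabla V_1)(y)}\le c\pppg y^{-1-\rho-\abs\mu}$, whence by Proposition \ref{prop-non-entr}, $\abs{(\partial^\mu\nabla V_1)(\bar x(\pm s,w))}\le c(\Rc+s)^{-1-\rho-\abs\mu}$. Since $\pppg s\lesssim\Rc+s$, the powers $(\Rc+s)^{\abs\mu}$ cancel and every term — hence $\abs{\partial_x^\a\partial_\x^\b\big[(\nabla V_1)(\bar x(\pm s,w))\big]}$ itself — is $O\big((\Rc+s)^{-1-\rho}\big)$, uniformly in $w\in\zoneS_{J,\pm}$. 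Integrating over $s\in[0,+\infty[$ gives a bound independent of $t$ and $w$, which is the claim.

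The only point needing care is the bookkeeping in the chain rule: one must observe that in each monomial the number of flow-derivative factors is precisely the differentiation order $\abs\mu$ of $\nabla V_1$, so that the unavoidable linear-in-time growth of each factor (already present for the free flow $(x,\x)\mapsto(x+2t\x,\x)$) is exactly compensated by the $\abs\mu$ extra powers of decay produced by differentiating the long-range potential. Everything else is routine; in particular no smallness of $\Rc$ beyond what is already imposed in Proposition \ref{prop-non-entr} and \eqref{estim-derflot1} is required, and, unlike for the preceding proposition, no induction on $\abs\a+\abs\b$ is needed here, since \eqref{estim-derflot1} already controls all the flow derivatives appearing in the sum.
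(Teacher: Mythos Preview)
Your proof is correct and follows essentially the same route as the paper's. Both arguments reduce to showing that $\partial_x^\a\partial_\x^\b\big[(\nabla V_1)(\bar x(\pm s,w))\big]$ is $O\big((\Rc+s)^{-1-\rho}\big)$ uniformly on $\zoneS_{J,\pm}$ and then integrate in $s$; the paper obtains this by differentiating the ODE for $\bar\x$ (reusing the chain-rule decomposition already set up in the induction of the preceding proposition) and then integrating the resulting bound on $\partial_t\partial_x^\a\partial_\x^\b\bar\x_j$, whereas you integrate the ODE first and differentiate the integral. Your observation that no induction on $\abs\a+\abs\b$ is needed here is correct and makes the argument slightly more self-contained, but the substance is identical.
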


\begin{proof}
Let $\a,\b \in\N^n$. We have proved that
\begin{align*}
\partial_t \partial_x^\a \partial_\x^\b \bar {\x_j}(\pm t,x,\x)
&  = \mp  \sum_{l=1}^n (\partial_{x_j}\partial_{x_l} V_1) (\bar x (\pm t,x,\x)) \, \partial_x^\a \partial_\x^\b \bar {x_l} (\pm t,x,\x) + \bigo t {+\infty} (t^{-1-\rho}),
\end{align*}
where the rest is uniform in $(x,\x) \in \zoneS_{J,\pm}$. With the estimates we now have on the derivatives of $\bar x$ this means that there exists $c_{\a,\b} \geq 0$ such that for all $(x,\x) \in\zoneS_{J,\pm}$ and $t\geq 0$ we have
\begin{align*}
\abs{\partial_t \partial_x^\a \partial_\x^\b \bar {\x_j}(\pm t,x,\x)}
\leq c_{\a,\b} \pppg t^{-1-\rho}.
\end{align*}
It only remains to integrate in time to conclude.
\end{proof}

\begin{corollary}
For any $\a,\b \in\N^n$ there exists $c_{\a,\b} \geq 0$ such that for $t \geq 0$ and $(x,\x) \in \zoneS_{J,\pm}$ we have
\[
\abs{\partial_x^\a \partial_\x^\b \frac {\bar x (\pm t,x,\x)\cdot \bar \x (\pm t,x,\x)} {\abs {\bar x (\pm t,x,\x)} \abs{\bar \x (\pm t,x,\x)}}} \leq c_{\a,\b}.
\]
\end{corollary}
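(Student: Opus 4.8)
The plan is to reduce the claim to the boundedness of the derivatives of two unit vectors and then estimate each of them separately, exactly in the spirit of the two preceding corollaries. First I would enlarge $\Rc$ — which only shrinks $\Zc_{J,\pm}$, so that all the estimates already proved stay valid on it — so that moreover $\Rc\geq 1$ and $\abs{V_1(y)}\leq E/4$ for $\abs y\geq c_0\Rc$, where $c_0$ is the constant of Proposition~\ref{prop-non-entr}. On $\Zc_{J,\pm}$ Proposition~\ref{prop-non-entr} then gives $\abs{\bar x(\pm t,x,\x)}\geq c_0(t+\abs x)\geq c_0\max(1,t)$ for every $t\geq 0$, so that $\pppg t\,\abs{\bar x(\pm t,x,\x)}\inv$ stays bounded uniformly in $t$ and $(x,\x)$ (since $\pppg t$ grows at most linearly in $t$); moreover $\abs{\bar x(\pm t,x,\x)}\geq c_0\Rc$, whence, from $\abs{\bar\x(\pm t,x,\x)}^2=p(x,\x)-V_1(\bar x(\pm t,x,\x))$ and $p(x,\x)\in J=\left]E/2,2E\right[$,
\[
 \frac E4\leq\abs{\bar\x(\pm t,x,\x)}^2\leq\frac{9E}4.
\]
In particular $\bar x$ and $\bar\x$ stay away from the origin along the trajectory, so $y\mapsto y/\abs y$ is smooth there and
\[
 \frac{\bar x(\pm t,x,\x)\cdot\bar\x(\pm t,x,\x)}{\abs{\bar x(\pm t,x,\x)}\,\abs{\bar\x(\pm t,x,\x)}}=\frac{\bar x(\pm t,x,\x)}{\abs{\bar x(\pm t,x,\x)}}\cdot\frac{\bar\x(\pm t,x,\x)}{\abs{\bar\x(\pm t,x,\x)}}.
\]

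The heart of the matter is to show that each of the two unit vectors on the right has all its $(x,\x)$-derivatives bounded uniformly in $t\geq 0$ and $(x,\x)\in\Zc_{J,\pm}$. For $\bar\x/\abs{\bar\x}$ this is immediate from the previous corollary: $\bar\x$ and all its derivatives are bounded there, and $y\mapsto\abs y\inv$ has bounded derivatives of every order on $\singl{\abs y^2\geq E/4}$ (its $\m$-th derivative being homogeneous of degree $-1-\abs\m$, hence $O(\abs y^{-1-\abs\m})$), so the claim follows from the Leibniz and chain rules. For $\bar x/\abs{\bar x}$ I would use that $\o:y\mapsto y/\abs y$ is homogeneous of degree $0$ on $\R^n\setminus\singl 0$, so $\partial^\m\o$ is homogeneous of degree $-\abs\m$ and $\abs{\partial^\m\o(y)}\leq C_\m\abs y^{-\abs\m}$. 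By the Fa\`a di Bruno formula, every derivative $\partial_x^\a\partial_\x^\b\bigl[\o(\bar x(\pm t,x,\x))\bigr]$ of order $\abs\a+\abs\b\geq 1$ is a finite sum of terms of the form
\[
 (\partial^\m\o)\bigl(\bar x(\pm t,x,\x)\bigr)\prod_{k=1}^{\abs\m}\partial_x^{\a_k}\partial_\x^{\b_k}\bar x(\pm t,x,\x),\qquad \abs{\a_k}+\abs{\b_k}\geq 1,\quad\sum_{k}(\a_k,\b_k)=(\a,\b),
\]
and by \eqref{estim-derflot1} together with the lower bound on $\abs{\bar x}$ each such term is bounded by $C_{\a,\b}\bigl(\pppg t\,\abs{\bar x(\pm t,x,\x)}\inv\bigr)^{\abs\m}$, which is bounded uniformly in $t$ and $(x,\x)$; the case $\abs\a+\abs\b=0$ is trivial since $\abs{\o(\bar x)}=1$. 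This proves the claim.

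Finally I would expand the product of the two unit vectors by the Leibniz rule and combine the two boundedness statements to conclude. I do not expect a genuine obstacle here: the only point that requires some care is the bookkeeping in the Fa\`a di Bruno expansion, namely the observation that each term contains exactly $\abs\m$ flow-derivative factors, so that the growth $\pppg t^{\abs\m}$ coming from \eqref{estim-derflot1} is precisely compensated by the decay $\abs{\bar x}^{-\abs\m}\lesssim\pppg t^{-\abs\m}$ — the very same cancellation already exploited in the proofs of the two preceding corollaries.
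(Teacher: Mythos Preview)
Your argument is correct. The paper states this corollary without proof, so there is no explicit argument to compare against; but your approach is precisely the one the paper has in mind, as witnessed by the proof of the very next corollary (on $\pppg{\bar x}^{-2\d}$), which uses the identical Fa\`a di Bruno bookkeeping --- each term carries a factor $\partial^\m$ of the outer function times exactly $\abs\m$ flow-derivative factors, and the decay of the former compensates the growth of the latter. Your splitting into the two unit vectors $\bar x/\abs{\bar x}$ and $\bar\x/\abs{\bar\x}$ is a clean way to organize this. Two minor remarks: you do not actually need to enlarge $\Rc$, since the proof of Proposition~\ref{prop-non-entr} already keeps $\abs{\bar x(\pm t)}$ above $\tilde\Rc$, where $\abs{V_1}<\nu<E/3$, so $\abs{\bar\x}^2\geq E/2-E/3>0$ automatically; and the bound on $\pppg t/\abs{\bar x}$ follows from $\abs{\bar x}\geq c_0(t+\Rc)$ without requiring $\Rc\geq 1$. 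But these are cosmetic; the substance is right.
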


\begin{corollary}
Let $\d > \frac 12$. Then for any $\a,\b \in\N^n$ there exists $c_{\a,\b} \geq 0$ such that for all $t \geq 0$ and $(x,\x) \in \zoneS_{J,\pm}$ we have
\[
\abs{\partial_x^\a \partial_\x^\b \pppg {\bar x (\pm t,x,\x)}^{-2\d}} \leq c_{\a,\b} (\abs x +  t)^{-2\d} .
\]
\end{corollary}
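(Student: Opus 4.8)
The plan is to regard $\pppg{\bar x(\pm t,x,\x)}^{-2\d}$ as a composition and differentiate by the chain rule. Write $\bar x_t=\bar x(\pm t,x,\x)$ and $g(y)=\pppg y^{-2\d}$, so that the task is to estimate the $(x,\x)$-derivatives of $g\circ\bar x_t$. Two inputs are needed, both essentially already at hand. First, $g$ is a symbol of order $-2\d$: for each $\g\in\N^n$ there is $c_\g$ with $\abs{\partial_y^\g g(y)}\leq c_\g\pppg y^{-2\d-\abs\g}$ (we only use this for $\abs y\geq 1$, which suffices since $\bar x_t$ stays away from the origin on $\zoneS_{J,\pm}$). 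Second, estimate \eqref{estim-derflot1} gives $\abs{\partial_x^\a\partial_\x^\b\bar x_t}\leq c_{\a,\b}\pppg t$ on $\zoneS_{J,\pm}$ when $\abs\a+\abs\b\geq 1$, and Proposition \ref{prop-non-entr} gives $\abs{\bar x_t}\geq c_0(\abs x+t)$ there. Recall also that $\abs x\geq\Rc$ on $\zoneS_{J,\pm}$ and that $\pppg y\sim 1+\abs y$.

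Then I would expand $\partial_x^\a\partial_\x^\b\bigl(g(\bar x_t)\bigr)$ by the Fa\`a di Bruno formula into a finite sum of terms
\[
(\partial_y^\g g)(\bar x_t)\prod_{j=1}^{\abs\g}\partial_x^{\a_j}\partial_\x^{\b_j}(\bar x_t)_{k_j},
\]
with $1\leq\abs\g\leq\abs\a+\abs\b$, $k_j\in\{1,\dots,n\}$, $\sum_j\a_j=\a$, $\sum_j\b_j=\b$ and $\abs{\a_j}+\abs{\b_j}\geq 1$ for each $j$; when $\abs\a+\abs\b=0$ only the single term $g(\bar x_t)$ remains. By the two inputs, each of the above terms is $\leq c\,\pppg{\bar x_t}^{-2\d-\abs\g}\,\pppg t^{\,\abs\g}$ on $\zoneS_{J,\pm}$.

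The point is now a comparison of scales on $\zoneS_{J,\pm}$: because $\abs x\geq\Rc$, one has $\pppg t\leq C(\abs x+t)$, and by Proposition \ref{prop-non-entr}, $\abs x+t\leq c_0\inv\abs{\bar x_t}\leq c_0\inv\pppg{\bar x_t}$. Hence $\pppg t^{\,\abs\g}\leq C\,\pppg{\bar x_t}^{\,\abs\g}$, so each of the terms above is $\leq c\,\pppg{\bar x_t}^{-2\d}$, and finally $\pppg{\bar x_t}^{-2\d}\leq\abs{\bar x_t}^{-2\d}\leq c\,(\abs x+t)^{-2\d}$ by Proposition \ref{prop-non-entr} once more. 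This same chain of inequalities also disposes of the remaining term $g(\bar x_t)=\pppg{\bar x_t}^{-2\d}$, and summing the finitely many Fa\`a di Bruno terms yields the corollary.

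I do not expect a genuine obstacle: granting \eqref{estim-derflot1} and Proposition \ref{prop-non-entr}, the statement is bookkeeping with the chain rule. The only thing to watch is to use the full decay $\pppg y^{-2\d-\abs\g}$ of the derivatives of $g$ rather than merely $\pppg y^{-2\d}$, since it is exactly this gain of $\abs\g$ orders that absorbs the $O(t^{\abs\g})$ growth produced by differentiating the flow $\abs\g$ times on $\zoneS_{J,\pm}$.
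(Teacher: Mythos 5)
Your proof is correct and takes essentially the same approach as the paper's: expand $\partial_x^\a\partial_\x^\b\,\pppg{\bar x(\pm t,\cdot)}^{-2\d}$ by Fa\`a di Bruno, use the full decay $\pppg y^{-2\d-\abs\g}$ of the derivatives of $\pppg\cdot^{-2\d}$ against the $O(\pppg t)$ growth from \eqref{estim-derflot1}, and convert $\pppg{\bar x(\pm t,\cdot)}^{-2\d}$ into $(\abs x+t)^{-2\d}$ via Proposition \ref{prop-non-entr}. The paper's version is merely terser, packaging the bounded symbol quotients into functions $c_K$, but the bookkeeping and the three key inputs are identical.
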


\begin{proof}
Let $\a,\b\in\N^n$. We remark that $\partial_x^\a \partial_\x^\b \pppg {\bar x (\pm t,x,\x)}^{-2\d}$ is a sum of terms of the form
\[
c_K(\bar x (\pm t,x,\x)) \pppg{\bar x(\pm t,x,\x)}^{-2\d-K} \prod _{k=1}^K \partial_x^{\a_k}\partial_\x^{\b_k} \bar x (\pm t,x,\x)
\]
where $K \in \Ii 1 {\abs \a + \abs \b}$, $c_K(x) = \pppg x^{2\d + K} \frac {d^K}{dx^K} \pppg x^{-2\d}$ is bounded, $\a = \sum_{k=1}^K \a_k$ and $\b = \sum_{k=1}^K \b_k$.
\end{proof}

Now we can prove Proposition \ref{prop-fonc-fuite} :

\begin{proof}
Let $\tilde \h \in C_0^\infty(\R)$ be supported in $J$ and equal to 1 in a neighborhood of $E$.
Let $\h_+, \h_- \in C^\infty(\R)$ such that $\supp \h_+ \subset ]-\s,+\infty[$, $\supp \h_- \subset ]-\infty,\s[$ and $\h_+ + \h_- = 1$ on $\R$.
Let $\h \in C_0^\infty(\R^n,[0,1])$ be equal to 1 on ${B_{ \Rc+1}}$. Let
\[
g_\pm : (x,\x) \mapsto \h_\pm\left(\frac { x \cdot \x}{\abs x \abs \x} \right)  (1- \h(x)) \tilde \h \big(p(x,\x)\big) \pppg x^{-2\d}
\]
and, for $w \in \R^{2n}$:
\[
f_\pm(w) = \pm \int_0^{+\infty}  g_\pm (\vf^{\mp t} (w)) \,  dt .
\]
Let $w=(x,\x) \in \R^{2n}$. There exists $T_w \geq 0$ such that $\vf^\pm (\Zc_{J,\pm}) \cap B_x(2\abs x) = \emptyset$ for all $t \geq T_w$ and hence
\[
\forall v \in B_x(2\abs x), \forall t \geq T_w, \quad g_\pm \big( \vf^{\mp t}(v)\big) = 0.
\]
According to the regularity theorems under the integral sign the functions $f_+$ and $f_-$ are smooth around $w$. And hence on $\R^{2n}$. Moreover their derivatives along the flow $\vf^t$ are given by
\[
\{ p , f_\pm \} = \pm \int_0^{+\infty} \{ p , g_\pm \circ \vf^{\mp t}\} \, dt = g_\pm.
\]
We now check that all the derivatives of $f_\pm$ are bounded. For $\a,\b \in \N^n$ there exists $c_{\a,\b} \geq 0$ such that for $(x,\x) \in \Zc_{J,\pm}$ and $t\geq 0$ we have
\[
\abs { \partial_x^\a \partial_\x^\b (g_\pm \circ \vf^{\pm t}) (x,\x)} \leq c_{\a,\b} (\abs x + t)^{-2\d}.
\]
Let $w \in \R^{2n}$ such that $\vf^{\mp t}(w) \in  \Zc_{J,\pm}$ for some $t \geq 0$ (otherwise the derivatives of $f_\pm$ vanishes at $w$). Let $t_0$ denotes the maximum of such times $t$. We have
\begin{align*}
\abs{\partial_x^\a \partial_\x^\b f_\pm (w)}
& \leq  \int_0^{t_0} \abs{\partial_x^\a \partial_\x^\b \big(g_\pm \circ \vf^{\mp t} \big) (w) }\, dt = \int_0^{t_0}\abs{ \partial_x^\a \partial_\x^\b \big(g_\pm \circ \vf^{\pm t} \big)\big( \vf^{\mp t_0} (w)\big) } \, dt\\
&\leq c_{\a,\b} \int_0^{+\infty} (\Rc + t) ^{-2\d}\, dt.
\end{align*}
This means that $f_+$ and $f_-$ belong to $\symbor(\R^{2n},\R)$. It only remains to set $f = f_+ + f_-$ to conclude.
\end{proof}

\bibliographystyle{alpha}
\bibliography{bibliotex}

\vspace{1cm}

\begin{center}

\begin{minipage}{0.8 \linewidth}

\noindent  {\sc Laboratoire de mathématiques Jean Leray}

\noindent  {\sc 2, rue de la Houssinière - BP 92208}

\noindent  {\sc F-44322 Nantes Cédex 3}

\noindent  {\sc France}

 \begin{verbatim}
julien.royer@univ-nantes.fr
\end{verbatim}
\end{minipage}
\end{center}

\end{document}